\newtheorem{prop}{Proposition}  % Proposition numérotée par chapitre
\newtheorem{Cor}{Corollary}
\newcommand{\Ben}{\begin{enumerate}}
\newcommand{\Een}{\end{enumerate}}
\newcommand{\Bit}{\begin{itemize}}
\newcommand{\Eit}{\end{itemize}}
\newcommand{\Beq}{\begin{equation}}
\newcommand{\Eeq}{\end{equation}}
\newcommand{\Ba}{\begin{align*}}
\newcommand{\Ea}{\end{align*}}
\newcommand{\Mb}{\mathbf}
\newtheorem{Th}{Theorem}
\newtheorem{Lem}{Lemma}
\newtheorem{Rq}{Remark}
\def \Frac {\displaystyle \frac }
\def \Int {\displaystyle \int }
\def \Sum {\displaystyle \sum }
\title{Diversification and Endogenous Financial Networks}
\author{Jean-Cyprien H\'eam \footnote{Autorit\'e de Contr\^ole Prudentiel et de R\'esolution (ACPR) and CREST. jean-cyprien.heam@acpr.banque-france.fr 
  } ~~~~ Erwan Koch\footnote{ISFA, CREST and ETH Zurich (Department of Mathematics, RiskLab). erwan.koch@math.ethz.ch 
   }}
\date{\today}
\begin{document}

\maketitle
\vspace{2cm}
\begin{abstract}
%%%%%%%%%%%%%%%%%%%%%%%%%%%%%

We test the hypothesis that interconnections across financial institutions can be explained by a diversification motive.
This idea stems from the empirical evidence of the existence of long-term exposures that cannot be explained by a liquidity motive (maturity or currency mismatch). We model endogenous interconnections of heterogenous financial institutions facing regulatory constraints using a maximization of their expected utility. Both theoretical and simulation-based results are compared to a stylized genuine financial network. The diversification motive appears to plausibly explain interconnections among key players. Using our model, the impact of regulation on interconnections between banks -currently discussed at the Basel Committee on Banking Supervision- is analyzed.

\medskip

\noindent \textbf{Key words:} Diversification; Financial networks; Regulation; Solvency; Systemic risk.

\medskip

\noindent \textbf{JEL Code}: G22, G28.
\end{abstract}

%\end{abstract}

\vspace{3cm}
\begin{center}
    \textit{The opinions expressed in the paper are only those of the authors and do not necessarily reflect those of the Autorit\'e de Contr\^ole Prudentiel et de R\'esolution (ACPR).}
\end{center}

%\newpage
%\tableofcontents
%% End preamble
%%%%%%%%%%%%%%%%%%%%%%%%%%%%%%%%%%%%%%%%%%%%%%%%%%%%%%%%%%%

%%%%%%%%%%%%%%%%%%%%%%%%%%%%%%%%%%%%%%%%%%%%%%%%%%%%%%%%%%%%%%%
%%%%%%%%%%%%%%%%%%%%%%%%%%%%%%%%%%%%%%%%%%%%%%%%%%%%%%%%%%%%%%%
%%%%%%%%%%%%%%%%%%%%%%%					%%%%%%%%%%%%%%%%%%%%%%%
%%%%%%%%%%%%%%%%%%%%%%%       CORPS	    %%%%%%%%%%%%%%%%%%%%%%%
%%%%%%%%%%%%%%%%%%%%%%%					%%%%%%%%%%%%%%%%%%%%%%%
%%%%%%%%%%%%%%%%%%%%%%%%%%%%%%%%%%%%%%%%%%%%%%%%%%%%%%%%%%%%%%%
%%%%%%%%%%%%%%%%%%%%%%%%%%%%%%%%%%%%%%%%%%%%%%%%%%%%%%%%%%%%%%%

%%%%%%%%%%%%%%%%%%%%%%%%%%%%%%%%%%%%%%%%%%%%%%%%%%%%%%%%%%%%%%%
%%%%%%%%%%%%%%%%%%%     I.Introduction	    %%%%%%%%%%%%%%%%%%%
%%%%%%%%%%%%%%%%%%%%%%%%%%%%%%%%%%%%%%%%%%%%%%%%%%%%%%%%%%%%%%%
\newpage
\section{Introduction}

The behavior of financial institutions, namely banks and insurance companies, constitutes a paradox. On the one hand, they oppose one another in a competition to collect deposits as one may expect for firms in a common sector. In this perspective, the distress of one institution seems good news for the others since there is room for increasing market shares. However, on the other hand, financial institutions need to cooperate. For instance, the withdrawal of a bank from the short term interbank market means that a source of liquidity vanishes, triggering setbacks for other banks. In this case, one financial institution's distress is definitely bad news for the other ones. Thus, even if they are in competition, banks cooperate, insurance companies cooperate and last but not least, banks cooperate with insurance companies. The last point has been ever more significant during the recent years. A support of this cooperation is the interconnections they develop between each other.

In a short-term view, interconnections mirror the resolution of the liquidity needs. As any other firms, banks and insurance companies face asynchronous in-flows and out-flows of cash. One solution is that every institution has its own cash buffer. Alternatively, institutions can create a liquidity pool by sharing their cash to mutualize the liquidity risk \citep{holmstrom1996private, tirole2010theory, rochet2004macroeconomic}. \cite{allen2000financial} explicitly link the interconnectedness of banks to liquidity shocks. Besides the asynchronism of in-flows and out-flows, the liquidity risk is particularly salient since banks are exposed to runs \citep{diamond1983bank} and operate large gross transactions in payment systems \citep{rochet1996controlling}. Indeed flows between institutions are not netted.

However, one may argue that this analysis is not specific to banks and insurance companies since every firm actually faces asynchronous in-flows and out-flows.
 % Interconnections of financial institutions may be explained by liquidity concerns, but other reasons may also apply.
 Liquidity concerns are not the only cause of interconnections between financial institutions.
 Moreover, there is evidence in the literature that banks are interconnected not only in the short term but also in the long run. For instance, according to \cite{upper2004estimating}, half of German interbank lending is composed of loans whose maturity is over 4 years (see Figure \ref{FigHauteMaturite}). According to Table 1 in
\cite{Alves2013}, interbank assets with residual maturity larger than one year account for about 50\% of total interbank assets at the European level.\footnote{The existence of long-term interconnections, through loans or shares, is also reported for other countries such as Canada (see Table 3 in \cite{gauthier2012macroprudential}) or France \citep{Fourel2013}.} These long term exposures cannot be explained by a liquidity motive since liquidity is a short term phenomenon. Other possible reasons are horizontal integration (share of a pool of customers via joint products), vertical integration (e.g. risk transfer between insurance and reinsurance companies), ego of top managers aiming at increasing their control of the market and last but not least diversification. Of course, in practice, the network formation stems from a combination of all these motives. However, for reasons explained further, diversification appears as a very important motive. Therefore, in this paper, we consider that these long-term exposures are accounted for by a diversification principle, in a sense that will be defined in the following. \\
\begin{figure}
    \centering
    \includegraphics[scale=0.4]{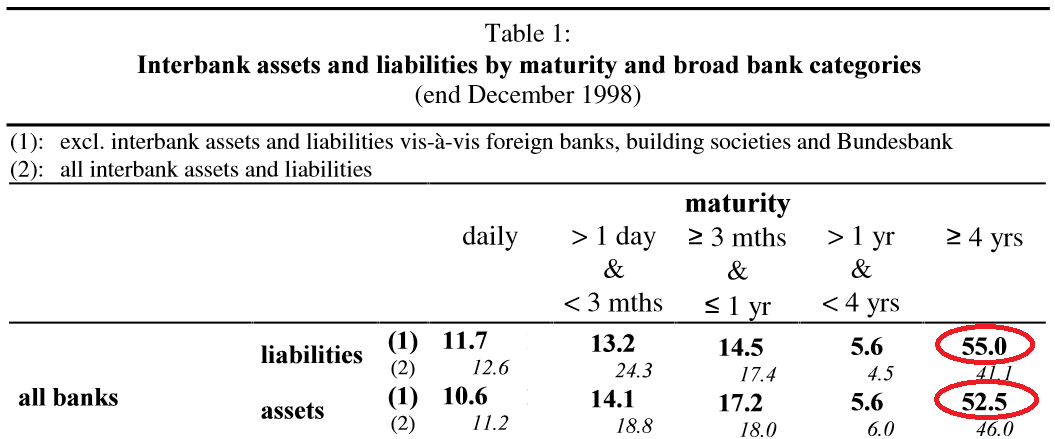}
    \caption{Extract of Table 1 in \cite{upper2004estimating}.}
    \label{FigHauteMaturite}
\end{figure}

The diversification principle is supported by the existence of various business models for banks and insurance companies. The diversity of institutions leads to a diversity of debts and shares available for the other institutions as assets. In the case of insurance companies, there is a clear-cut distinction between mutual funds and profit-oriented insurance companies. The banking sector regroups heterogenous institutions from mutual saving banks to commercial banks. Moreover, the bankassurance business model blurs the separation between banking and insurance activities. This variety can be explained by the different preferences of stakeholders or by historical patterns. Investors who have the same risk aversion gather and form an institution. This heterogeneity can also be linked to a specialization process. For instance, a mutual savings bank funded by farmers is very efficient in granting loans to farmers who in turn favor this bank since it provides the fairest interest rate. This auto-selection mechanism leads to a situation close to a local monopoly. We then understand that for a specific bank, getting interconnected to other institutions is a way to get access to their specific markets. Considering specific markets implicitly assumes that these are not perfectly correlated: for example retail differs from trading. Similarly, insurance companies also specialize in specific risk classes.\footnote{For instance, in the US health insurance sector, specialized institutions exist. The Federal Employees Health Benefits (FEHB) Program is dedicated to federal employees.} Thus being interconnected to other insurance companies allows diversifying one's risk portfolio. All this supports the fact that the diversification principle may explain long-term interconnections among banks and insurance companies. \\

In order to properly model banking and insurance activities, one has to keep in mind that
%in contrast to other sectors,
the banking and insurance sectors are characterized by a very specific production process as well as a heavy regulation. The core activity of a bank consists of the selection of profitable loans and in the management of the resulting maturity transformation. Banks screen potential entrepreneurs for reliable projects and fairly price the resulting interest rate charged. At the same time, they manage the maturity gap: loans to entrepreneurs are long-term assets whereas deposits and issued bonds constitute short-term debt on the liability side.  Information is also key to the core activity of insurance (e.g. damage insurance): the insurer has to efficiently assess the riskiness of the potential policyholder and to deduce the corresponding premium. Strictly speaking the insurance company does not provide maturity transformation. However, its production cycle is reversed: it first collects premia and cushions losses when claims occur. The regulation of the banking and insurance sectors is crucial to maintain people's confidence in the system. In order to avoid bank runs, it is necessary that depositors consider their deposit as safe. Likewise, if policy-holders are not confident in the capacity of the insurer to honor its commitments, they will make other insurance choices. A solvency ratio is imposed to banks and insurance companies: in the case of banks, the ratio compares the riskiness of granted loans with own funds, while for insurance companies, the ratio balances the riskiness of insured risks and the collected premia. \\
Our paper has two main objectives. The first objective is to test whether a diversification motive is a plausible cause for interconnectedness across financial institutions. To do so, we build a model where interconnections are endogenous choices of financial institutions resulting from the maximization of their expected utility. After deriving some theoretical and simulation-based features of the resulting network, we compare these features to those of a stylized financial network (benchmark) based on empirical evidence. The second objective is to fairly assess the impact of regulation on interconnections based on our model. \\
% when these interconnections are driven by diversification.\\

The cornerstone of this paper is the modeling of the endogenous balance sheet of a financial institution, especially interconnections. Endogenous networks have been intensively analyzed in sociology or socio-economics (for a survey, see \cite{goyal2012connections} or \cite{jackson2013economic}). However, finance yields a new field of application. Usually, interconnections among financial institutions are considered as given, especially in applied papers (see among others \cite{cifuentes2005liquidity}, \cite{arinaminpathy2012size}, or \cite{anand2013network}). Endogenous financial networks stem from the seminal paper by \cite{allen2000financial}. For instance, \cite{babus2013formation} models interconnections across banks as the result of an insurance motive: interconnections represent a means of protection against contagion. More recently, \cite{acemoglu2013systemic} focus on the short-term interbank market and model the network formation in the case of risk-neutral banks being able to renegotiate their claims in a case of distress. \cite{elliott2012financial} make a case of showing the incentives that may drive financial network formations. Important insights are brought by this strand of literature inspired by microeconomics and game theory analysis.\footnote{See among others \cite{cohen2011}, \cite{gofman2012}, \cite{Farboodi2014Intermediation} or \cite{Georg2014}.} Nevertheless, in this field, financial institutions only compute their interconnections: the remaining elements of their balance sheet are completely exogenous. This assumption seems suitable in a short-term perspective but not anymore when considering long-term interconnections. Therefore, by including more endogenous balance sheet items than the sole interconnections, we distance ourselves from this strand of research. To the best of our knowledge, the unique paper that considers a complete balance sheet optimization (apart from the debt) is \cite{bluhm2013endogenous}. They propose a dynamic network formation with risk-neutral banks. Using a specific "trial and error" process, the authors first compute the volume of interbank assets (that corresponds to the network's importance) and second its allocation (that corresponds to the network's shape). The allocation is carried out using a matching algorithm based on the strict indifference of banks. In contrast, our paper considers heterogeneously risk-averse banks which explicitly get interconnected to specific counterparts. Last but not least, almost all papers only consider lending (or debt securities) whereas, based on \cite{gourieroux2012bilateral}, our paper also considers shares. This feature cannot be neglected in a long run perspective since financial institutions can take cross-shareholdings. \\

The paper is organized as follows. Section 2 falls into two parts. First, the production process of banks and insurance companies and the regulatory constraints are described. Secondly, we introduce the financial network benchmark. Section 3 presents the theoretical results. After describing the optimization program of financial institutions, we show the existence of an equilibrium and discuss the conditions for its uniqueness. We show that interconnections are usually optimal for financial institutions. These theoretical properties allow to characterize the shape of the network stemming from a diversification motive. Therefore, we compare the shape of a genuine interbank network to a diversification-based one. In Section 4, we first present the computational methodology and the calibration choices. Then we show some simulation results which lead us to assess the proximity of the obtained network to the benchmark network both in terms of balance sheet volume and support of interconnections (debt securities or cross-shareholdings). Section 5 provides an analysis of financial interconnections with respect to financial regulation. Elaborating on  \cite{repullo2013procyclical}, we first show how to fairly analyze interconnectedness and then compare different regulatory frameworks. 
%Then, we show that interconnections seem to increase the welfare.
%shed light on the knife-edge effect of regulation. 
Section 6 concludes. All proofs are gathered in the Appendix.

%%%%%%%%%%%%%%%%%%%%%%%%%%%%%%%%%%%%%%%%%%%%%%%%%%%%%%%%%%%%%%%
%%%%%%%%%%%%%%%%%%%%%%%%%%%%%%%%%%%%%%%%%%%%%%%%%%%%%%%%%%%%%%%
%%%%%%%%%%%%%%%%%%%%%%%%%%%%%%%%%%%%%%%%%%%%%%%%%%%%%%%%%%%%%%%

\section{Balance sheet structure and network benchmark}

In this section, we first describe the economic setup which corresponds to the technology of financial institutions. We introduce the different elements of their balance sheet as well as the regulatory constraints. We then present the stylized network to be later used as a benchmark.

%\subsubsection{Notations}
%We consider a fixed perimeter of $n$ financial institutions on a consolidated basis. The asset side of the balance sheet of institution $i$ is composed of the following items:
%\begin{itemize}
%      \item[$\bullet$] $\pi_{i,j}$ the proportion of shares issued by institution $j$ held by institution $i$,
%      \item[$\bullet$] $\gamma_{i,j}$ the proportion of debt titles issued by institution $j$ held by institution $i$,
%      \item[$\bullet$] $A\ell_i$ the value of cash held by institution $i$,
%      \item[$\bullet$] $Ax_i$ the value of external illiquid assets held by institution $i$.
%\end{itemize}
%The liability side of institution $i$ is composed of:
%\begin{itemize}
%      \item[$\bullet$] $L_i$ the value of debt issued by institution $i$ with contractual value of $L_i^*$,
%      \item[$\bullet$] $K_i$ the value of the equity of institution $i$.
%\end{itemize}
\subsection{Bank and insurance business}

Each bank has access to a specific class of external illiquid assets and each insurance company specializes in one specific class of risk. These classes can be interpreted as main banking (respectively insurance) activities such as, for instance, trading, commercial loans, mortgage loans, sovereign loans (respectively e.g. property insurance, liability insurance, life insurance). \\
The tight relationship between a specific class of assets (respectively risks) and a specific institution has to be interpreted as a consequence of costly portfolio management by investors followed by a specialization process. By portfolio management, we mean the screening process. For banks, that means selecting promising entrepreneurs to finance and offering a fair interest rate. In the case of insurance companies, it means organizing the mutualization of risks, i.e. finding the adequate premium with respect to the policyholder's risk profile. The specialization process strengthens the efficiency of managing a specific portfolio. Due to auto-selection of customers, specialization triggers further specialization.

\subsubsection{Asset side}
Bank $i$'s specific asset book value is labeled $Ax_i$, for $i=1, \dots, n$ (we consider $n$ financial institutions). This asset is some illiquid loan and therefore cannot be exchanged on a market. Thus, no market value can be defined and only its book value is considered in the following.
We denote $R_i$ and $r_i$ the net return of $Ax_i$ and its realization, respectively. The distribution function of the returns $R_1, \dots, R_n$  is denoted $F_R$: $F_R(r_1, \dots, r_n)= \mathds{P} (R_1 \leq r_1, \dots, R_n \leq r_n)$. The corresponding density  is denoted by $f_R$. Banks have access to another external asset, denoted by $A\ell_i$. Its return, deterministic and assumed to be common to all institutions, is denoted $r_{rf}$. Here, $A\ell_i$ is a very liquid and low-risk asset (for instance AAA bonds or S\&P 500 shares), the management of which does not require high technical skills. In the following, $A\ell_i$ will be assimilated to cash, which does not require any screening. We assume that insurance companies' external assets are only composed of $A\ell_i$. Insurance companies are indeed assumed not to have the same capacity of selecting promising innovators as banks, and therefore do not own any specific asset. \\
Besides, Institution $i$ can buy shares or debt  securities issued by Institution $j$ in proportions $\pi_{i,j}$ and $\gamma_{i,j}$, respectively.

\subsubsection{Liability side}
The liability side is composed of equity (that is brought by investors) and nominal debt, whose book values are respectively denoted by $K_i$ and $L_i^{*} $ for Institution $i$. Since equity and debt securities will be traded on the the secondary market, it is necessary to introduce their market values, respectively denoted by $\mathcal{K}_i$ and $\mathcal{L}_i$.

In the case of banks, $L_i^{*} $ includes different types of debts (deposits and bonds of various maturities) considered as homogeneous in terms of seniority.\footnote{For various seniority levels, see \cite{gourieroux2013liquidation}.} Banks issue debt along a common yield curve. In other words, bank debt securities are considered risky (the interest rate curve is above the risk free yield curve) but have a common degree of risk (the same rating, say). Despite this common feature, Bank $i$ chooses its own degree of maturity transformation $\omega_i\in[0,1]$. Let us denote by $T_{Li}$ the average of maturities of all types of debts and by $T_{Ai}$ the maturity of the assets. Then, $\omega_i$ is defined as $\omega_i=1-\dfrac{T_{Li}}{T_{Ai}}$. For instance deposits can be seen as every day re-funded overnight loans by households to banks and therefore their maturity is equal to $0$, yielding $\omega_i=1$. On the opposite, a debt whose maturity equals the asset maturity corresponds to $\omega_i=0$. Banks usually assume that their short-term debt will be rolled over. However, it is not always the case, especially during crises. If a bank is only funded by deposits ($\omega_i=1$), it may happen that all depositors suddenly quit, causing a funding liquidity shock. The same can happen in the case of debt issued with bonds if investors decide not to roll over. In the extreme opposite case ($\omega_i=0$), there is no possible liquidity shock (but there is no maturity transformation).  Banking activity is precisely profitable due to maturity transformation since the interest rate corresponding to long term lending (asset side) is larger than the one corresponding to short term borrowing. In our model, the interest rate charged on the debt of Bank $i$ is deterministic, depends on $\omega_i$ and is denoted by $r_D(\omega_i)$.

In the case of insurance companies, the nominal debt $L_i^{*}$ mostly corresponds to technical provisions relative to the underwritten risks. Therefore, $\omega_i$ can no longer be interpreted as a degree of maturity transformation but as the mean severity of claims. Thus, we do not have necessarily $\omega_i \in [0,1]$ anymore. Contrary to banks, the liability side of an insurer is stochastic. For instance, in line with standard ruin models \citep[see e.g.][]{asmussen2010ruin}, $\omega_i$ could be the parameter of the Pareto distribution in a claims model. Of course, the collected premia directly reflect the risk profile of the insurance contracts.

The balance sheet of Bank $i$ is represented at the initial date and the end date in Tables \ref{TabBalanceSheet0} and \ref{TabBalanceSheet1}, respectively. The dates are represented by an upper-scripted index in parenthesis.
\begin{table}[H]
    \begin{center}
        \begin{tabular}{lcr|lcl}
                 & & Asset                   &  Liability & & \\ \cline{3-4}
                $\left.
                \begin{array}{c}
                    \mbox{interbank} \\
                    \mbox{cross-} \\
                    \mbox{shareholdings}
                \end{array}
                \right.
                $
                &
                $
                \leftrightarrow
                \left\{
                \begin{array}{c}
                    \\
                    \\
                \end{array}
                \right.
                $
                &
                $
                \left.
                \begin{array}{c}
                    \pi_{i,1} \mathcal{K}_1^{(0)}  \\
                    \vdots \\
                    \pi_{i,n} \mathcal{K}_n^{(0)}
                \end{array}
                \right.
                $
                &
                $L_i^*$ & $\leftrightarrow$ & $\mbox{debt}$ \\

                $\left.
                \begin{array}{c}
                    \mbox{interbank} \\
                    \mbox{lending} \\

                \end{array}
                \right.
                $
                &
                $
                \leftrightarrow
                \left\{
                \begin{array}{c}
                      \\
                     \\

                \end{array}
                \right.
                $
                &
                $
                \left.
                \begin{array}{c}
                    \gamma_{i,1} \mathcal{L}_1^{(0)}  \\
                    \vdots \\
                    \gamma_{i,n} \mathcal{L}_n^{(0)}
                \end{array}
                \right.
                $

                &
                $K_i^{(0)}$ & $\leftrightarrow$ & $\mbox{value of the firm}$ \\

                $\mbox{external assets}$ & $\leftrightarrow$ & $Ax_i^{(0)}$ \\

				$\mbox{cash}$ & $\leftrightarrow$ & $A\ell_i^{(0)}$ \\
        \end{tabular}
    \end{center}
	\caption{Balance sheet of Bank $i$ at the initial date $t=0$.}
	\label{TabBalanceSheet0}
\end{table}

\begin{table}[H]
    \begin{center}
        \begin{tabular}{lcr|lcl}
                 & & Asset                   &  Liability & & \\ \cline{3-4}
                $\left.
                \begin{array}{c}
                    \mbox{interbank} \\
                    \mbox{cross-} \\
                    \mbox{shareholdings}
                \end{array}
                \right.
                $
                &
                $
                \leftrightarrow
                \left\{
                \begin{array}{c}
                    \\
                    \\
                \end{array}
                \right.
                $
                &
                $
                \left.
                \begin{array}{c}
                    \pi_{i,1} K_1^{(1)}  \\
                    \vdots \\
                    \pi_{i,n} K_n^{(1)}
                \end{array}
                \right.
                $
                &
                $L_i^{(1)}$ & $\leftrightarrow$ & $\mbox{debt}$ \\

                $\left.
                \begin{array}{c}
                    \mbox{interbank} \\
                    \mbox{lending} \\

                \end{array}
                \right.
                $
                &
                $
                \leftrightarrow
                \left\{
                \begin{array}{c}
                      \\
                     \\

                \end{array}
                \right.
                $
                &
                $
                \left.
                \begin{array}{c}
                    \gamma_{i,1} L_1^{(1)}  \\
                    \vdots \\
                    \gamma_{i,n} L_n^{(1)}
                \end{array}
                \right.
                $

                &
                $K_i^{(1)}$ & $\leftrightarrow$ & $\mbox{value of the firm}$ \\

                $\mbox{external assets}$ & $\leftrightarrow$ & $Ax_i^{(1)}$ \\

				$\mbox{cash}$ & $\leftrightarrow$ & $A\ell_i^{(1)}$ \\
        \end{tabular}
    \end{center}
	\caption{Balance sheet of Bank $i$ at the end date $t=1$.}
	\label{TabBalanceSheet1}
\end{table}
It is important to note that the equity and the debt of the other institutions (on the asset side) must be priced at the market value at $t=0$. At time $t=1$, the book value can be considered.

In line with the Value-of-the-Firm model \citep{merton1974pricing}, the value of debt $L_i$ and equity $K_i$ at any date are linked through the following equilibrium equations
\begin{equation}
\label{K}
    K_i = \max \Big( \sum_{j=1}^n \pi_{i,j}K_j + \sum_{j=1}^n \gamma_{i,j}L_j  + A\ell_i + Ax_i - L^*_i,0   \Big), \mbox{ for } i=1,\dots,n, \mbox{ and }
\end{equation}
\begin{equation}
\label{Lx}
    L_i = \min \Big(  \sum_{j=1}^n \pi_{i,j}K_j + \sum_{j=1}^n \gamma_{i,j}L_j  + A\ell_i + Ax_i, L^*_i  \Big) \mbox{ for } i=1,\dots,n,
\end{equation}
These 2$n$ equations define a liquidation equilibrium. Equation (\ref{K}) corresponds to the simple accounting definition of equity as the net value of assets over debts. Equation (\ref{Lx}) is very similar to \eqref{K} and directly follows from Merton's model: the debt value is the minimum between the asset value and the nominal debt.

Proposition 2 in \cite{gourieroux2012bilateral} states that these equations define a suitable liquidation equilibrium (see Proposition \ref{Prop_Prop2_Gouretal} in Appendix \ref{Appendix_Prop2_Gouretal}).
The cornerstone of our approach will consist in optimizing the balance sheet items of the financial institutions (apart from the equity which is exogenous). Proposition \ref{Prop_Prop2_Gouretal} states that whatever the balance sheet composition of each institution (whatever the  values of $Ax_i$, $A\ell_i$, $\pi_{ij}$, $\gamma_{ij}$ and $L_i^*$ satisfying Assumptions $(A1')$, $(A2')$ and $(A3')$ in Proposition \ref{Prop_Prop2_Gouretal}), the network obtained can theoretically exist (under suitable unique values for $K_i$ and $L_i$, $i=1, \dots, n$). In particular, our optimized network exists and thus the approach we develop in this paper can be carried out.

Note that although \cite{gourieroux2012bilateral} do not consider any maturity, Proposition \ref{Prop_Prop2_Gouretal} still holds true in the presence of $\omega_i$. It is sufficient to replace $L_i^*$ by $L_i^* (1+r_D(\omega_i))$ in the proof.

\subsection{Regulatory constraints}
\label{Constraints}

In line with the usual Basel regulation \citep[see e.g.][Section I]{BCBS2011}\footnote{BCBS means Basel Committee on Banking Supervision.}, the solvency constraint for Institution $i$ is written
\begin{equation}
	K_i^{(0)} \geq k_i^A  Ax_i^{(0)} +  k^{\pi} \Sum_{j=1}^n \pi_{i,j} \mathcal{K}_j^{(0)} + k^{\gamma} \Sum_{j=1}^n \gamma_{i,j} \mathcal{L}_j^{(0)},
\label{Solva_Constraint}
\end{equation}
where $k_i^A$ and $k^{\pi}$ are regulatory parameters (risk weights) for external assets and inter-financial shareholdings and debtholdings, satisfying $0 <k_i^A, k^{\pi}, k^{\gamma} <1$. The parameter relative to the external assets is specific to each institution whereas those relative to interfinancial assets are common within a specific sector (banking or insurance business). This constraint means that the equity must be higher than the risk-weighted assets and aims at ensuring the existence of a sufficient capital buffer to avoid losses for creditors in most cases.

Note that in the case of insurance companies, \eqref{Solva_Constraint} corresponds to the Solvency I regulatory framework \citep[see][]{CEC1979}\footnote{CEC means the Council of the European Communities.}, apart from the term corresponding to interconnections.
Since Solvency II is not implemented so far, we choose not to consider it in our modeling. Moreover, let us emphasize that the weights of banks differ from those of insurance companies. In the case of an insurer, the constraints on $k^{\pi}$ and $k^{\gamma}$ can be relaxed to $0 \leq k_i^{\pi}, k_i^{\gamma} <1$.
\\

Even if we do not focus on liquidity shocks, we introduce a liquidity constraint:
\begin{equation}
	A\ell_i^{(0)} \geq k^L \  l(\omega_i, \    L_i^{*}),
\end{equation}
with $l$ being some increasing function with respect to both variables which will be characterized further and $k_L$ satisfying $0 < k^L < 1$. This constraint aims at ensuring a sufficient liquid assets buffer to face exposure to liquidity risk (maturity transformation in the case of banks and claims in the case of insurance companies) stylized by $\omega_i$ and $L_i$. Note that this constraint is similar to the Basel III Liquidity Coverage Ratio \citep[see][]{BCBS2013}.

\subsection{Summary of the optimization framework}
In short, both banks and insurance companies select their balance sheet items under restrictions (class of assets for banks and class of risks for insurance companies) and regulatory constraints. Their business model is reflected through a size variable and an intensity variable: the size is the total credit granted for a bank and the total of individual risks covered for an insurance company, while the intensity is the degree of maturity transformation for a bank and the claims' severity for an insurance company.

We emphasize that our modeling allows to take the specificities of banks and insurance companies into account in a unified way. The same parameters allow interpretation in terms of banks as well as of insurance companies. However, as we mentioned, the nature of the debt $L_i^*$ and that of the maturity $\omega_i$ are different when considering a bank or an insurance company. In the following, we will mainly focus on banks. 
%The idea of adaptation of our results to the case of insurance companies is provided in Appendix.

\subsection{Network Benchmark}
Our testing principle is to compare the network obtained through our modeling and a stylized network, so-called benchmark. In this part, we describe this stylized network along three dimensions. First, we provide the main aggregate items of a bank's balance sheet. Thus, we will be able to check if, apart from interbank assets, the obtained balance sheet composition is close to a real one. Second, we focus on the network shape. This level provides a qualitative assessment of interconnections. Last, the size of interconnections along instruments in a typical banking network is described. This last level provides a quantitative assessment of interconnections. We restrict the analysis to interbank networks in industrial countries, typically the United-States, Canada or Europe. We identify four stylized facts that characterize an interbank network.

\subsubsection{Main aggregate items of a bank's balance sheet}
We consider the Bank Holding Company Performance Report Peer Group Data, published by the Federal Financial Institutions Examination Council, that provides the structure of asset and liability sides for banks above \$10 billion (from 69 banks in 12/2008 to 90 in 12/2012). Figure \ref{FigBHCPRPGData} provides the composition of the asset side and the leverage for these banks. Corresponding informations are summarized in the following stylized fact:
\\

\textbf{Stylized fact 1:} For a typical bank, the external assets ($Ax_i$) represent about 95\% of its total assets while its equity ($K_i$) represents about 5\% of its total assets.\\

\begin{figure}[!h]
    \centering
    \includegraphics[scale=0.45]{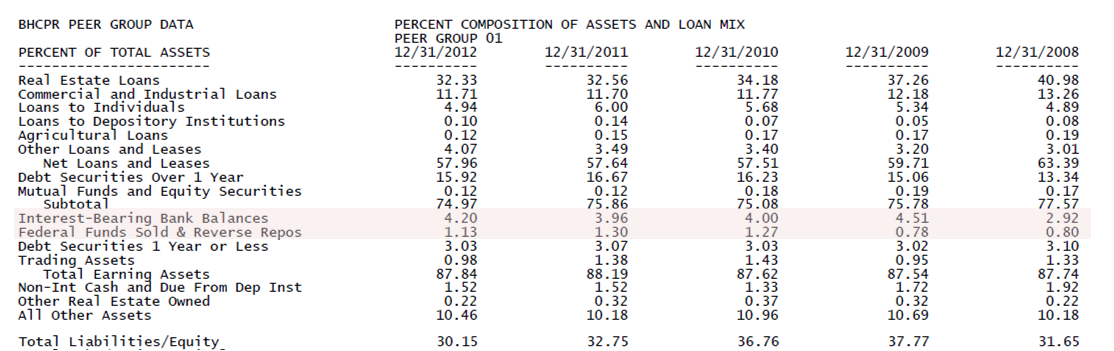} \\
    \small{Comment: interbank assets are mostly concentrated in highlighted lines.}
    \caption{Excerpt of the Bank Holding Company Performance Report Peer Group Data between 12/2008 and 12/2012. Source: www.ffiec.gov.}
    \label{FigBHCPRPGData}
\end{figure}

\subsubsection{Network shape}
National interbank networks\footnote{See \cite{furfine2003interbank} for USA, \cite{wells2002uk} for UK, \cite{upper2004estimating} for Germany, \cite{lubloy2005domino} for Hungary, \cite{van2006interbank} for the Netherlands, \cite{degryse2007interbank} for Belgium, \cite{toivanen2009financial} for Finland, \cite{gauthier2012macroprudential} for Canada, \cite{mistrulli2011assessing} for Italy and \cite{Fourel2013} for France.} are usually characterized by a core-periphery structure \citep{craig2010interbank}. The core is composed of large banks highly interconnected. The periphery is composed of smaller banks which are connected to core banks only. Figure \ref{FigCvPFig1} represents a typical national interbank network. Note that at the international level, the core-periphery structure is much less clear among major banks \citep{Alves2013}. A complete structure seems more representative of the reality. These observations are summarized in the following two stylized facts: \\

\textbf{Stylized fact 2:} For a network composed of banks heterogeneous in size, a core-periphery structure is ideally expected. In other words, matrices $\Mb{\Pi}=(\pi_{ij})_{i,j=1, \dots, n}$ and 
$\Mb{\Gamma}=(\gamma_{ij})_{i,j=1, \dots, n}$ present a block structure with a majority of zeros.\\

\textbf{Stylized fact 3:} For a network composed of large banks homogeneous in size, a complete structure is ideally expected. In other words, $\Mb{\Pi}$ and $\Mb{\Gamma}$ have few zero coefficients.\\

\begin{figure}[!h]
    \centering
    \includegraphics[scale=0.3]{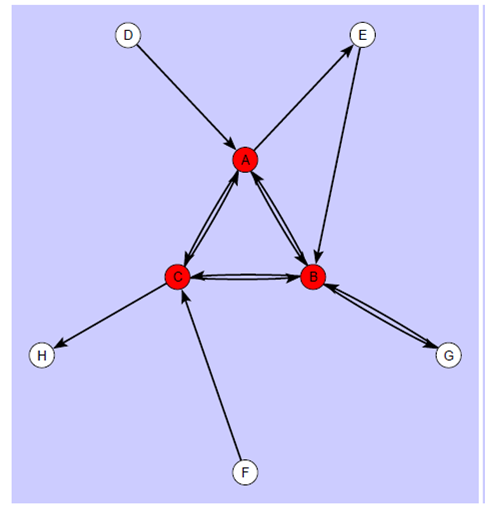} \\
    \caption{Core-Periphery structure. The core is composed of banks A to C while the periphery is composed of banks D to H. Source: Figure 1 in \cite{craig2010interbank}.}
    \label{FigCvPFig1}
\end{figure}

\subsubsection{Interconnections size and support}

As mentioned above, total interbank assets account for about 5\% of total assets. However, data concerning the relative importance of the different instruments are scarce. At the European level (at the end of 2011), according to Table 1 in \cite{Alves2013}, credit claims (direct credit from one bank to another) and debt securities represent 90\% of exposures. The remainder is composed of "other assets". For the 6 largest Canadian banks (as at May 2008), there is a factor 4 between exposure through traditional lending and exposure through cross-shareholdings, as reported in Table 3 in \cite{gauthier2012macroprudential}. \\

\textbf{Stylized fact 4:} In the case of large banks, lending exposures represent a major part of exposures (between 80\% and 90\%).  In other words, $\Mb{\Gamma L^*} \approx \alpha \times (\Mb{\Pi K} + \Mb{\Gamma L^*})$, where $\Mb{K}=(K_i)_{i=1, \dots, n}$,  $\Mb{L^*}=(L_i^*)_{i=1, \dots, n}$ and $\alpha \in [80\%, 90\%]$. However, cross-shareholdings can not be neglected.\footnote{It is paramount to take the relative weight of share securities into account since they are more risky than debt/lending: shareholders lose as soon as the financial institution has losses while a debt holder is only affected if the losses of the financial institution are above its equity. For contagion analysis, cross-shareholdings cannot be neglected.}

%%%%%%%%%%%%%%%%%%%%%%%%%%%%%%%%%%%%%%%%%%%%%%%%%%%%%%%%%%%%%%%
%%%%%%%%%%%%%%%%%%%%%%%%%%%%%%%%%%%%%%%%%%%%%%%%%%%%%%%%%%%%%%%
%%%%%%%%%%%%%%%%%%%%%%%%%%%%%%%%%%%%%%%%%%%%%%%%%%%%%%%%%%%%%%%
\section{Model, theoretical properties and network shape}
\label{TheoreticalProp}

We model the network formation in two steps. The first one -dealt with in this section- concerns the modeling of the behavior of one institution, the state of the others being given. The aim is to determine how a financial institution defines its balance sheet and especially the interconnections knowing the main balance sheet elements of the other ones. For instance, how does a new bank get interconnected to previously existing ones? Or how does a bank adapt its balance sheet to modifications of the structure of others? The second step concerns the whole network formation using the modeling of individual behaviors and will be considered in Section \ref{Networkform}.

Based on the framework introduced in the previous section, a one-period model is built. Banks are risk-averse agents optimizing their balance sheet structure for the shareholder's interest at the initial date $t=0$. The horizon is the final date $t=1$.\\

The assumption that interconnections represent a long-term choice is a cornerstone of our analysis. Interconnections are not motivated by any liquidity features: they correspond to optimal choices in the long-run. Including liquidity-motivated interconnections that stem from daily work of Asset Liability managers, as well as the interactions between short-term and long-term interconnections, constitutes an ongoing work of ours.

A very important concern is the problem of reflexivity: how to technically manage the fact that the choices of financial institutions are interdependent? The main issue is that a complete Nash equilibrium modeling of the whole balance sheet structure -interconnections, external assets and debt- is clearly wishful thinking. It triggers difficulties, especially with respect to privately available information and anticipation formation. Note that in models with Nash equilibrium such as in \cite{babus2013formation} or \cite{acemoglu2013systemic}, choices are only taken at the level of interconnections: all the other components of the balance sheet are exogenous. This scope is arguably adapted in a short-term framework but is clearly unsuitable from a long-term perspective. In order to circumvent a complete game theoretic model, we adopt some simplifying assumptions backed up by practical considerations.

\subsection{Modeling strategy}

We choose an efficient, albeit simple strategy: each financial institution is assuming that the asset side of the other financial institutions is only composed of their external assets. This implies that the institution optimizing its balance sheet is not taking into account the future reactions of the other financial institutions. In this perspective, the optimization program is not strategic. 
% the institution plays fairly. 
Apart from simplifying the resulting optimization program, this strategy corresponds to sound assumptions for each financial institution and this for several reasons.

Firstly, the information set used in the optimization program is very close to the genuinely available one. Actually, bilateral exposures are private information. Publicly available information for any major financial institution are the detailed income statement and the balance sheet. For instance, return-on-asset, return-on-equity, cash, total interbank assets, loans on the asset side, debt and equity on the liability side are easily extracted from the public financial communication of firms or published reports (see Appendix \ref{AppendixExcerptBHC} for an excerpt of the Consolidated Financial Statements for Bank Holding Companies (BHCs) of Bank of America published by the Federal Financial Institutions Examination Council\footnote{http://www.ffiec.gov/nicpubweb/content/help/HelpFinancialReport.htm}). Secondly, note that a large part of debt securities and shares are traded on the secondary market. Therefore, Bank $i$ cannot know exactly who its creditors and shareholders are: Bank $i$ knows its asset side but not the repartition of its liability side. The part of tradable shares is called the floating equity. By analogy, we call the floating debt the part of the debt traded on the secondary market.

Lastly, the absence of anticipation of reaction constitutes an approximation. As previously mentioned, there is no information on bilateral exposures. However, total interbank assets represent about 5\% or 10\% of total assets.\footnote{For instance, on June 30, 2013, the proportion of interbank assets in the total assets is 3.4\% for Bank of America, 13\% for JPM, 8.40\% for Citigroup 8.3\% for Wells Fargo, according to the Consolidated Financial Statements for BHCs.} Each bilateral exposure should be much smaller: 0.5\% of total assets seems a reasonable upper bound. Therefore, when a new bank gets interconnected, the new interconnections do not significantly modify its balance sheet. It may trigger a reaction from its own counterparts but the effects can be neglected by comparison to the risk borne in the external assets for instance. As we will see in the simulation results, the reaction of counterparts only has a light influence on each institution, leading to a rapid stabilization of the network. This provides an indication that this assumption of absence of anticipation can be accepted as a first step towards building more realistic models.

Then this assumption allows us  to derive in the next subsection some strong and tractable theoretical results.

\subsection{Optimization program}
Bank $i$ is managed for the benefits of its investors (i.e. shareholders) who are risk-averse and endowed with an initial capital $K_i^{(0)}$. The risk-aversion of the investors of Bank $i$ is represented by a utility function $u_i$. 
%(with the usual properties). 
We denote $1-c^{\pi}_j$ (respectively $1-c^{\gamma}_j$) the floating equity (respectively debt) of Bank $j$, for $j=1, \dots, n$. \\
In line with our modeling strategy, we scale the total assets of Bank $j$ by $\kappa_j = 
\dfrac{L_j^{(0)} + K_j^{(0)} }{ Ax_j^{(0)} + A\ell_j^{(0)}}$. 
%\frac{\mathcal{L}_j^{(0)} + \mathcal{K}_j^{(0)} }{ Ax_j^{(0)} + A\ell_j^{(0)}}$. 
These scaling factors compensate for the fact that we consider that the counterparts are not interconnected. Thus, we get the following approximation for the equity of Bank i at time $t=1$:
\begin{align}
K_i^{(1)}
& = \max \Big[ Ax_i^{(1)} + A\ell_i^{(1)}+ \Sum_{j=1}^n \pi_{i,j} \max \left( \kappa_j(Ax_j^{(1)} + A\ell_j^{(1)}) - L_j^{*(1)}, 0 \right)
\nonumber \\
& + \Sum_{j=1}^n \gamma_{i,j}\min \left( \kappa_j(Ax_j^{(1)} + A\ell_j^{(1)}) , L_j^{*(1)} \right) - [1+r_D(\omega_i)] L_i^{(0)},0 \Big].
\end{align}
If we denote by $\mathds{E}_0$ the expectation computed at time $t=0$, the optimization program $\mathcal{P}_i$ of Bank $i$ is
$$
	\mathcal{P}_i := \left\{
	\begin{array}{rlr}
       \max                          & \mathds{E}_0 \left[ u_i \left( K_i^{(1)} \right) \right] 		\\
       Ax_i^{(0)},A\ell_i^{(0)} 											\\
       L_i^{(0)},\omega_i 													\\
       \pi_{i,1},\dots,\pi_{i,n} 											\\
       \gamma_{i,1},\dots,\gamma_{i,n} 										\\
       \mbox{such that (s.t.)}							 &	Ax_i^{(0)} + A\ell_i^{(0)}
											+ \Sum_{j=1}^n \pi_{i,j} \mathcal{K}_j^{(0)} + \Sum_{j=1}^n \gamma_{i,j} \mathcal{L}_j^{(0)}  = K_i^{(0)} + L_i^{(0)} & (BC) \\
       								 &  K_i^{(0)} \geq k_i^A \ Ax_i^{(0)} + k^{\pi} \Sum_{j=1}^n \pi_{i,j} \mathcal{K}_j^{(0)} + k^{\gamma} \Sum_{j=1}^n \gamma_{i,j} \mathcal{L}_j^{(0)}	& (SC) \\
       								        								 	
       								 & A\ell_i^{(0)} \geq k^L \ l(\omega_i, \  L_i^{(0)}) & (LC) \\
       								 & Ax_i^{(0)} \geq 0, A\ell_i^{(0)} \geq 0, L_i^{(0)} \geq 0 \\
       								 & \omega_i \in [0,1] \\
       								 & \forall j\in \{1, \dots, n\}, 0 \leq \pi_{i,j} \leq 1 - c^{\pi}_j , 0 \leq\gamma_{i,j} \leq 1 - c^{\gamma}_j.
    \end{array}
    \right.
$$
The constraint $(BC)$ ensures the balance sheet equilibrium at the initial date. Note that this constraint allows the network resulting from our formation process (see Section \ref{Networkform}) to satisfy \eqref{K} for each institution. The inequalities $(SC)$ and $(LC)$ are respectively the regulatory solvency and liquidity constraints presented in Section \ref{Constraints}. $BC$, $SC$ and $LC$ stand for Balance sheet Constraint, Solvency Constraint and Liquidity Constraint, respectively.

\subsection{Solution analysis}

We define the position $P_i$ of Bank $i$ as the difference between its total assets (denoted by $A_i$) and its nominal debt. Therefore, at time $t=1$, $P_i^{(1)} = A_i^{(1)} - L_i^{*(1)}$. If this difference is positive, the position is simply the equity; if the difference is negative, the position is the loss for creditors (while the equity is equal to zero in this situation). $P$ can be interpreted as the profit-and-loss. 

The uniqueness of the solution usually requires the strict concavity of the objective function. The concavity of $u_i \circ K_i^{(1)}$ (where $\circ$ denotes the composition operator) is not a necessary condition since we could expect that the integration operation makes the expectation strictly concave even if $u_i \circ K_i^{(1)}$ is not strictly concave everywhere (see the Appendix for more details). Moreover, it would impose conditions on $F_R$. Thus, we look for conditions on $u_i \circ K_i^{(1)}$. Due to their limited liability, shareholders aim at maximizing the expected utility of the equity. The latter is defined as $K_i^{(1)}=\max(P_i^{(1)}, 0)$, making $u_i \circ K_i^{(1)}$ non-differentiable and introducing a level shape. An unfortunate consequence is that for standard utility functions $u_i$, $u_i \circ K_i^{(1)}$ is not strictly concave and not even concave. Then our strategy is to approximate the real equity by a function $v(P_i^{(1)})$ to obtain the concavity. From an economic perspective, it is satisfactory  to consider a transformation of the equity, as we will see in the following. Therefore, we decompose the analysis of $\mathcal{P}_i$ into two steps. Firstly, we show that under mild assumptions there exists a solution (Theorem \ref{PropExistenceSolOptiProSol}). Secondly, we transform the optimization program $\mathcal{P}_i$ into a close one ($\mathcal{P}_i'$) for which existence and uniqueness are ensured (Theorem \ref{PropExistenceUniciteSolOptiProSol}).

\subsubsection{Analysis of the exact optimization program}
Contrary to usual optimization programs where the total wealth is exogenous, increasing wealth by issuing debt is allowed in $\mathcal{P}_i$. Therefore, intuitively, the main difficulty in showing the existence of a solution is to show that Bank $i$ has no gain in issuing an infinite amount of debt. The argument is as follows. The equity is exogenously fixed. Therefore, $(SC)$ implies that the total value of risky assets is bounded. Thus, starting from a specific amount of debt, the funding obtained by issuing more debt is necessarily invested in the risk free liquid asset. But since the interest rate charged on the debt is higher than the risk free rate, it is not profitable to issue debt to invest in liquid assets. In other words, banks are expected to invest in risky assets: granting credit is the core activity of banks. \\
All this goes to state the following proposition:
\begin{Th}[Existence of a solution to $\mathcal{P}_i$]
%Considering class of shareholders $i$ endowed with a capital $K_i^{(0)}$, the optimal structure of the institution they found is given by solving $\mathcal{P}_i$. \\
If
\begin{itemize}
	\item[$\bullet$] $(A1)$ the investors neglect interconnections among their counterparts;
	\item[$\bullet$] $(A2)$ the utility function $u_i$ is continuous and strictly increasing;
	\item[$\bullet$] $(A3)$ the distribution function $F_R$ is continuous. Moreover, the density $f_R$ is strictly positive on $[a, + \infty)^n$, for some $a \in \mathds{R}$;
	\item[$\bullet$] $(A4)$ the yield curve, $r_D$, is continuous and strictly higher than the risk free rate;
\end{itemize}
then there exists a solution to $\mathcal{P}_i$.
%defined by \eqref{EqSolOptProg}.
\label{PropExistenceSolOptiProSol}
\end{Th}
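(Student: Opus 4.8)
The plan is to obtain existence from the Weierstrass theorem, by exhibiting a compact feasible set on which the objective $\mathds{E}_0[u_i(K_i^{(1)})]$ is continuous. First I would check which decision variables are automatically confined by the constraints. The solvency constraint $(SC)$ gives $k_i^A\, Ax_i^{(0)} \leq K_i^{(0)}$, hence $Ax_i^{(0)} \in [0, K_i^{(0)}/k_i^A]$ since the interbank terms are nonnegative and $k_i^A > 0$; the cross-holding weights satisfy $\pi_{i,j} \in [0, 1-c^{\pi}_j]$ and $\gamma_{i,j} \in [0, 1-c^{\gamma}_j]$, and $\omega_i \in [0,1]$. Thus the only genuine source of non-compactness is the nominal debt $L_i^{(0)}$, which through the balance-sheet equality $(BC)$ drags the cash holding $A\ell_i^{(0)}$ along with it. This is exactly the difficulty flagged before the statement: one must rule out an unbounded gain from issuing debt.

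The heart of the argument, and the step I expect to be the main obstacle, is to show that increasing $L_i^{(0)}$ can never help, which compactifies the problem. Using $(BC)$ to substitute $A\ell_i^{(0)} = K_i^{(0)} + L_i^{(0)} - Ax_i^{(0)} - \sum_j \pi_{i,j}\mathcal{K}_j^{(0)} - \sum_j \gamma_{i,j}\mathcal{L}_j^{(0)}$ into the expression for $K_i^{(1)}$, and recalling that cash accrues at the risk-free rate while debt is repaid at $r_D(\omega_i)$, the argument of the outer $\max$ defining $K_i^{(1)}$ contains the debt only through the term $(r_{rf} - r_D(\omega_i))\,L_i^{(0)}$; the interbank payoffs $\max(\kappa_j(\cdots)-L_j^{*(1)},0)$ and $\min(\kappa_j(\cdots),L_j^{*(1)})$ are fixed under $(A1)$ and do not involve $L_i^{(0)}$. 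By $(A4)$, $r_D$ is continuous and strictly above $r_{rf}$, so over the compact set $\omega_i \in [0,1]$ the gap is bounded below by some $\delta := \min_{\omega \in [0,1]}(r_D(\omega) - r_{rf}) > 0$. Consequently, holding the remaining variables fixed and moving along $(BC)$, the argument of the $\max$, and hence $K_i^{(1)}$ pathwise and then $u_i(K_i^{(1)})$ by $(A2)$, is non-increasing in $L_i^{(0)}$. This is the precise translation of the economic intuition that funding obtained beyond the risky-asset ceiling is parked in cash and destroys value. It lets me truncate to $L_i^{(0)} \leq \bar{L}$ for a finite $\bar{L}$: the smallest debt compatible with $A\ell_i^{(0)} \geq 0$ and with the liquidity constraint $(LC)$ stays bounded over the compact range of the other variables, provided $l(\omega_i, \cdot)$ does not grow so fast as to preclude a finite feasible debt, a mild condition consistent with the later specification of $l$. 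The resulting feasible set is closed (all constraints are equalities or inequalities of continuous maps) and bounded, hence compact.

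It then remains to verify that the objective is continuous on this compact set. For each fixed realization $(r_1, \dots, r_n)$, the map from the decision variables to $K_i^{(1)}$ is continuous, being built from affine expressions, the continuous function $r_D$ of $(A4)$, and the $\max$/$\min$ operators; composing with the continuous $u_i$ of $(A2)$ gives pathwise continuity of the integrand, while $(A3)$ guarantees that $\mathds{E}_0$ is an integral against the density $f_R$ so this pathwise continuity can be transferred to the objective. To pass continuity through $\mathds{E}_0$ I would invoke dominated convergence: on the compact set the coefficients multiplying the returns are bounded, so $0 \leq K_i^{(1)} \leq H(R)$ for an explicit $H$ that is affine in $Ax_i^{(1)} = (1+R_i)Ax_i^{(0)}$ and in the fixed interbank terms, and $u_i(K_i^{(1)})$ is then squeezed between $u_i(0)$ and $u_i(H(R))$. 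Assuming the program is well posed, i.e. the expected utility is finite at feasible points so that $\mathds{E}_0[u_i(H(R))] < \infty$, dominated convergence yields continuity of the objective, and Weierstrass then delivers a maximizer, proving existence of a solution to $\mathcal{P}_i$. The delicate point throughout is the debt-truncation of the second paragraph; the remaining ingredients are routine continuity and compactness checks.
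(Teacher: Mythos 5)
Your proposal follows the same route as the paper's proof: Weierstrass applied to a compact feasible set, where the crux is that $(A4)$ ($r_D$ strictly above $r_{rf}$) makes debt-financed cash holdings unprofitable, and this is what bounds the only a priori unbounded variables. The paper executes the identical idea in mirror image: it eliminates $L_i^{(0)}$ via $(BC)$ (so cash is the free variable), writes $P_i^{(1)} = A\ell_i^{(0)}\,[r_{rf}-r_D(\omega_i)] + d(\mathbf{X}_{-A\ell},\mathbf{r})$, and shows the \emph{expected} utility is strictly decreasing in $A\ell_i^{(0)}$ on $[K_i^{(0)},+\infty)$ --- invoking $(A3)$ to get strictness --- whence cash can be bounded and then $L_i^{(0)}$ is bounded back through $(BC)$; you instead keep $L_i^{(0)}$ free, eliminate cash, and use pathwise weak monotonicity, which indeed suffices for a truncation argument (the paper's strictness, and hence its use of $(A3)$ at this step, is not actually needed for existence). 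Your continuity step is more careful than the paper's, which simply asserts that continuity of $u_i$ and $F_R$ makes the expectation continuous: your dominated-convergence argument and the explicit well-posedness caveat $\mathds{E}_0[u_i(H(R))]<\infty$ address integrability issues the paper leaves implicit.

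Two points need repair. First, Weierstrass requires a \emph{non-empty} compact set, and you never verify feasibility; the paper does this explicitly by exhibiting $\mathbf{X}_0=\bigl(K_i^{(0)}-k^L l(0,0),\, k^L l(0,0),\, 0,\dots,0\bigr)'$ and checking $(BC)$, $(SC)$, $(LC)$. This is a one-line fix, but it is logically required. Second, your truncation step --- which you rightly identify as where the work hides --- is not quite closed. Sliding a feasible point down the line (reducing $L_i^{(0)}$ and $A\ell_i^{(0)}$ together) preserves $(BC)$, $(SC)$ and the box constraints, but you must also check it preserves $(LC)$, i.e. $A\ell_i^{(0)}-t \geq k^L\, l(\omega_i, L_i^{(0)}-t)$: cash falls one-for-one in $t$ while $k^L l(\omega_i,\cdot)$ may fall arbitrarily slowly, so the slide can get stuck before reaching $\bar L$, and the feasible set in $L_i^{(0)}$ (other variables fixed) need not be an interval --- your ``smallest compatible debt'' bound does not cover points stranded in a high disconnected component. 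Note also that your hedge about $l$ points at the wrong failure mode: fast growth of $l$ actually \emph{helps}, since it makes large debt infeasible outright; the danger is an $l$ that stays close to linear growth (e.g.\ oscillating around it), which creates stranded feasible points at arbitrarily large debt. To be fair, the paper's own truncation (``$\mathcal{P}_i$ is equivalent if we upper-bound the space of $A\ell_i$'') silently skips the same $(LC)$-preservation check, so your proof is no less rigorous than the paper's here; to make it airtight, either verify that the slide preserves $(LC)$ under additional structure on $l$, or argue separately (using the uniform gap $\delta>0$ you already introduced, plus domination) that stranded points with large debt have expected utility below that of a fixed feasible point such as $\mathbf{X}_0$, so they cannot be optimal and can be discarded before applying Weierstrass.
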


%\begin{proof}
%In Appendix \ref{AppendixProofExistenceSolve}.
%\end{proof}

Assumption $(A1)$ is both a technical assumption and a way to reflect the restricted information available for each agent. Assumptions $(A2)$, $(A3)$ and $(A4)$ are very common in the literature and not restrictive. 

\subsubsection{Analysis of the approximated optimization program}
As stressed before, it appears impossible to establish the uniqueness for $\mathcal{P}_i$ except in particular cases of simple models for $F_R$. We therefore consider an optimization problem $\mathcal{P}_i'$ where the sole difference with $\mathcal{P}_i$ is that the objective function is the expected utility of a strictly increasing transformation (denoted by $v$) of the position of Bank $i$, $P_i^{(1)}$. Considering the position directly makes things easier. However, it means not taking into account the limited liability which has some important implications. Indeed, it plays the role of a protection against extreme events for the managers: they are impacted by regular shocks but not by extreme ones. Some phenomena cannot be explained by macro-economic models ignoring limited liability.
The optimization program $\mathcal{P}_i'$ is
$$
	\mathcal{P}_i' := \left\{
	\begin{array}{rlr}
       \max                          & \mathds{E}_0\left \{ u_i\left[ v( P_i^{(1)}) \right] \right \} 		\\
       Ax_i^{(0)},A\ell_i^{(0)} 											\\
       L_i^{(0)},\omega_i 													\\
       \pi_{i,1},\dots,\pi_{i,n} 											\\
       \gamma_{i,1},\dots,\gamma_{i,n} 										\\
       \mbox{s.t.}							 &	Ax_i^{(0)} + A\ell_i^{(0)}
											+ \Sum_{j=1}^n \pi_{i,j} \mathcal{K}_j^{(0)} + \Sum_{j=1}^n \gamma_{i,j} \mathcal{L}_j^{(0)}  = K_i^{(0)} + L_i^{(0)} & (BC) \\
       								 &  K_i^{(0)} \geq k_i^{A} \ Ax_i^{(0)} + k^{\pi} \Sum_{j=1}^n \pi_{i,j} \mathcal{K}_j^{(0)} + k^{\gamma} \Sum_{j=1}^n \gamma_{i,j} \mathcal{L}_j^{(0)}	 & (SC) \\
       								        								 	
       								 & A\ell_i^{(0)} \geq k^L \ l(\omega_i, \  L_i^{(0)}) & (LC) \\
       								 & Ax_i^{(0)} \geq 0, A\ell_i^{(0)} \geq 0, L_i^{(0)} \geq 0 \\
       								 & \omega_i \in [0, 1] \\
       								 & \forall j\in \{1, \dots, n \}, 0 \leq \pi_{i,j} \leq 1 - c^{\pi}_j , 0 \leq\gamma_{i,j} \leq 1 - c^{\gamma}_j.
    \end{array}
    \right.
$$

With this specification, the level aspect of the limited liability is removed and the transformation $v$ ensures flexibility. For instance, with $v=Id$, one considers the usual maximization of the expected utility of profits. Alternatively, $v$ can be chosen to closely fit the design of the limited liability of shareholders while relaxing their complete indifference for loss magnitude. In the latter case, $\mathcal{P}_i'$ is very close to $\mathcal{P}_i$. 

In short, the argument for the existence of a solution of $\mathcal{P}_i'$ is similar to the argument for the existence of a solution of $\mathcal{P}_i$. The uniqueness mainly stems from the strict concavity of the objective function we obtain by adjusting $v$. However, the strict convexity of the constraints is necessary, imposing restrictions on the functional form of $(LC)$ (see the proof for details). The following theorem provides the result regarding uniqueness:

\begin{Th}[Existence and uniqueness of a solution to $\mathcal{P}_i'$]~\\
    %Considering class of shareholders $i$ endowed with a capital $K_i^{(0)}$, the optimal structure of the institution they found is unique and given by solving $\mathcal{P}_i'$. \\
    Under $(A1)$, $(A2)$, $(A3)$, $(A4)$ and the extra assumptions:
    \begin{itemize}
    	\item[$\bullet$] $(A5)$ the composition of the transformation function $v$ and the utility function $u_i$ is strictly concave: $\forall P \in \mathds{R}, (u_i \ o \ v)''(P)<0$; % %\in [-L_i\ \left(1+r_D(w_i) \right); + \infty]$
    	%i.e. \\ $u'' \ o \ v(P) \ (v'(P))^2 + u' \ o \ v(P)\  v''(P) < 0 $;
    	\item[$\bullet$] $(A6)$ the  interest rate on debt is strictly concave: $\forall \omega_i \in [0,1], r_D''(\omega_i) < 0$;
    	\item[$\bullet$] $(A7)$ the interest rate on debt satisfies $\forall \omega_i \in [0,1], r_D'(\omega_i) \ne 0$; %Since moreover we know that it is a decreasing function of $\omega$, it is equivalent to assume that the interest rate is strictly decreasing: $r_D'(\omega) < 0, \forall \omega \in [0,1]$;
    	\item[$\bullet$] $(A8)$ the function $l$ in $(LC)$ satisfies
    	$$ \frac{\partial^2 l}{\partial \omega_i^2} \geq 0 \ \mbox{ and }\  \frac{\partial^2 l}{\partial \omega_i^2} \frac{\partial^2 l}{\partial {L_i^{(0)}}^2} \geq \left( \frac{\partial^2 l}{\partial \omega_i \partial L_i^{(0)}} \right)^2;$$
    \end{itemize}
     there exists a unique solution to $\mathcal{P}_i'$
    in the following sense. If all control variables appearing on the asset side of Bank $i$ are fixed apart from one variable, denoted by $Ac_i^{(0)}$, then there is uniqueness of the triplet ($Ac_i^{(0)}$,~ $L_i^{(0)}$, $\omega_i$).
    \label{PropExistenceUniciteSolOptiProSol}
\end{Th}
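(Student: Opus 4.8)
The plan is to treat $\mathcal{P}_i'$ as a strictly concave maximization over a convex feasible set, which automatically yields at most one maximizer, and to obtain existence exactly as for $\mathcal{P}_i$. First I would establish existence by repeating the argument behind Theorem \ref{PropExistenceSolOptiProSol} essentially verbatim: assumption $(A4)$ together with $(SC)$ still rules out issuing unbounded debt to park it in the liquid asset, so the feasible set may be truncated to a compact one, and since $v$ is twice differentiable by $(A5)$ and $u_i$ is continuous by $(A2)$, the objective $\mathds{E}_0\{u_i[v(P_i^{(1)})]\}$ is continuous and attains its maximum. Only $(A5)$--$(A8)$ then have to be invoked for uniqueness.

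For uniqueness, fix every asset-side control except $Ac_i^{(0)}$. Since $\omega_i$ does not enter $(BC)$, the balance-sheet constraint is an affine relation between $Ac_i^{(0)}$ and $L_i^{(0)}$; I would use it to eliminate $Ac_i^{(0)}=\alpha+\beta L_i^{(0)}$, so that uniqueness of the triplet reduces to uniqueness of the pair $(L_i^{(0)},\omega_i)$. Substituting back, the position collapses to the clean form $P_i^{(1)}=C+L_i^{(0)}[\rho-r_D(\omega_i)]$, where $C$ gathers all the fixed (possibly random) terms and $\rho$ is the net return on the free asset (the genuine random variable $R_i$ when $Ac_i^{(0)}=Ax_i^{(0)}$, and the constant $r_{rf}$ otherwise). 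Writing $\psi=u_i\circ v$, which is strictly concave by $(A5)$, the $L_i^{(0)}$-curvature of $\Phi:=\mathds{E}_0[\psi(P_i^{(1)})]$ is $\mathds{E}_0[\psi''(P_i^{(1)})(\rho-r_D(\omega_i))^2]$, which is strictly negative because $\psi''<0$ and, by $(A3)$ (resp.\ $(A4)$), $\rho-r_D(\omega_i)$ is non-degenerate (resp.\ nonzero). On the constraint side, $(BC)$ and $(SC)$ are affine and, by $(A8)$, $l$ is convex, so $(LC)$ carves out a convex region; hence the feasible set in $(L_i^{(0)},\omega_i)$ is convex.

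The main obstacle is the joint curvature in $(L_i^{(0)},\omega_i)$: the debt-repayment term $r_D(\omega_i)L_i^{(0)}$ is bilinear, so $P_i^{(1)}$, and therefore $\Phi$, is \emph{not} concave in $\omega_i$, and the naive Hessian of $\Phi$ is indefinite. This is precisely why $(A6)$, $(A7)$ and the convexity of $l$ in $(A8)$ are imposed. To defeat it I would pass to the Lagrangian and verify the second-order sufficient condition on the feasible set: the binding liquidity constraint contributes a term $-\lambda_{LC}\,k^L\,l(\omega_i,L_i^{(0)})$ with $\lambda_{LC}\geq 0$, whose Hessian $-\lambda_{LC}k^L\nabla^2 l$ is negative semidefinite by $(A8)$ and supplies exactly the missing concavity in the $\omega_i$-direction, while $(A6)$ ($r_D''<0$) and $(A7)$ ($r_D'\neq 0$) fix the signs of the remaining second derivatives and guarantee that $\omega_i$ enters non-degenerately, so that the constraint gradients satisfy a qualification and the bordered Hessian is nonsingular. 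Showing that this restricted Hessian is negative definite, equivalently that $\Phi$ is strictly concave along every feasible segment, is the delicate computation; once it is in hand, two distinct optima would be strictly beaten by their midpoint, contradicting optimality, so $(L_i^{(0)},\omega_i)$, and hence the triplet $(Ac_i^{(0)},L_i^{(0)},\omega_i)$, is unique.
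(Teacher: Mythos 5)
Your proposal does not close the proof: the decisive step is exactly the one you defer. Everything hinges on showing that the restricted (bordered) Hessian is negative definite along feasible directions, and you acknowledge this is ``the delicate computation'' without carrying it out. Moreover, the repair mechanism you choose cannot be guaranteed to operate. The term $-\lambda_{LC}\,k^L\,\nabla^2 l(\omega_i,L_i^{(0)})$ contributes curvature only if $(LC)$ is active with a strictly positive multiplier, and nothing in $(A5)$--$(A8)$ ensures this: in the theorem's uniqueness statement the fixed asset-side variables can include $A\ell_i^{(0)}$, and for a fixed, generous stock of liquid assets $(LC)$ is slack, $\lambda_{LC}=0$, and your Lagrangian has exactly the indefinite Hessian you started with. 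Even when $(LC)$ binds, $(A8)$ only makes $\nabla^2 l$ positive semidefinite --- for the paper's own specification $l(\omega,L)=e^{\omega}e^{L}$ of Lemma \ref{CorL} it is singular, of rank one --- so it supplies concavity in a single direction, with an endogenous multiplier of uncontrolled size; whether this covers the directions in which the objective's Hessian is positive is precisely what is left unproved. Finally, negative definiteness of the Lagrangian Hessian on the tangent space at a KKT point is a \emph{local} second-order sufficient condition: it yields strict local maximality, not global uniqueness, whereas your concluding midpoint argument needs strict concavity of the objective along the entire segment joining two putative optima.

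For comparison, the paper's proof uses no Lagrangian at all: $(A8)$ serves only to make $(LC)$ a convex set (the epigraph of a convex function, via Sylvester's criterion), so that the whole admissible set is convex; the paper then computes the full $3\times 3$ Hessian of the integrand $g=(u_i\circ v)\bigl(t(Ac_i^{(0)})-[1+r_D(\omega_i)]L_i^{(0)}\bigr)$ in the three variables $(Ac_i^{(0)},\omega_i,L_i^{(0)})$ and claims it is negative definite under $(A5)$--$(A7)$, so that the expectation is strictly concave (using $(A3)$) and uniqueness follows from strict concavity on a closed convex set. You should know that your diagnosis of the obstacle is nevertheless sound: writing $m=u_i\circ v$ and $a=t'$, a direct computation gives $\det \mathbf{H_g}=-m''\,(m')^2\,a^2\,(r_D')^2>0$ whenever $m''<0$ and $r_D'\ne 0$, which is the wrong sign for a negative definite $3\times3$ matrix, and the diagonal entry $m''(r_D')^2(L_i^{(0)})^2-m'r_D''L_i^{(0)}$ can even be strictly positive under $(A6)$ (the concavity of $r_D$ makes $-[1+r_D(\omega_i)]L_i^{(0)}$ \emph{convex} in $\omega_i$); the integrand's Hessian is indefinite, as you say, and the paper's minor computations contain compensating sign errors. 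So the tension you identified with the ``naive Hessian'' approach is real; but since your own route leaves the crucial definiteness claim unproven and rests on a constraint-curvature mechanism that can simply be absent, it does not establish the theorem either.
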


Note that the result of Theorem \ref{PropExistenceUniciteSolOptiProSol} is equivalent to saying that the main balance sheet items are unique. Indeed, the value of total assets $A_i^{(0)}$, the degree of maturity transformation $\omega_i$ and the debt $L_i^{(0)}$ are unique.
Due to the high number of control variables on the asset side and the complexity of the problem, it seems impossible to prove the uniqueness of all control variables (see the Appendix for more details). %Indeed it is impossible to carry out any computation when considering the expectation operator (see Appendix). But it is precisely this expectation operator that can make the solution unique for all control variables. 
The uniqueness for all control variables will be verified on simulations.

\subsubsection{Approximation properties}
As mentioned before, the transformation function $v$ gives room for flexibility. Lemma \ref{CorV} provides two specifications satisfying $(A5)$, corresponding respectively to the position and a very good approximation of the equity.

\begin{Lem}[Some specifications of $v$ and $u_i$] ~~
    \begin{itemize}
        \item \textbf{i)} If $\forall P \in \mathds{R}, v(P)=P$, then $(A5)$ reduces to $u_i'' <0$.
        \item \textbf{ii)} If $\forall P \in \mathds{R}, v(P)=\log \left( \exp(P)+1 \right)$, then $(A5)$ is satisfied for the utility function $u_i=\log$.
    \end{itemize}
    \label{CorV}
\end{Lem}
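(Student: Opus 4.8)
The statement has two parts, both of which are verification-type claims.

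The lemma consists of two verification claims about the strict-concavity hypothesis $(A5)$, namely that $(u_i \circ v)''(P) < 0$ for all $P \in \mathds{R}$. Part i) is immediate: with $v = \mathrm{Id}$ one has $u_i \circ v = u_i$, so $(u_i \circ v)'' = u_i''$ and $(A5)$ reduces verbatim to $u_i'' < 0$. The substance of the lemma is part ii), which I would prove by direct differentiation.

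For part ii), set $g(P) = \log(e^P + 1)$ (the softplus function), so that $u_i \circ v = \log \circ \, g$ and the goal becomes to show $(\log g)'' < 0$ everywhere. The first step is to record the two derivatives of $g$: the first, $g'(P) = e^P/(e^P+1)$, is the logistic sigmoid, which I will abbreviate $s = s(P) \in (0,1)$; the second is $g''(P) = s(1-s)$. Writing $h = \log g$, the chain rule gives $h'' = (g'' g - (g')^2)/g^2$, so, since $g > 0$, the sign of $h''$ is that of $g'' g - (g')^2$. Thus the claim reduces to the pointwise inequality $g''(P)\, g(P) < (g'(P))^2$.

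The key simplification is to express everything through $s$. Since $1 - s = 1/(e^P+1)$, one has $g(P) = \log(e^P+1) = -\log(1-s)$. Substituting $g' = s$, $g'' = s(1-s)$ and $g = -\log(1-s)$, then dividing through by $s > 0$ and setting $t = 1 - s \in (0,1)$, the inequality $g'' g < (g')^2$ becomes
$$
-t \log t < 1 - t, \qquad \mbox{i.e.} \qquad \phi(t) := t - t\log t - 1 < 0 \quad \mbox{for all } t \in (0,1).
$$
This one-variable inequality is the only real content, and it is elementary: $\phi(1) = 0$ and $\phi'(t) = -\log t > 0$ on $(0,1)$, so $\phi$ is strictly increasing and therefore stays strictly below its endpoint value $\phi(1) = 0$ throughout $(0,1)$. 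Since $t \in (0,1)$ for every finite $P$, this establishes the strict inequality for all $P \in \mathds{R}$, hence $(A5)$.

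I do not anticipate any genuine obstacle, as the whole lemma is a computation. The only mild care needed is the reduction to the single-variable function $\phi$ and the observation that the strict inequality is preserved throughout, degenerating to equality only in the excluded limit $P \to -\infty$ (i.e. $t \to 1$).
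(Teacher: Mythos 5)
Your proof is correct and takes essentially the same route as the paper: part i) is the same trivial observation, and for part ii) both arguments compute $v'$ and $v''$, reduce the sign of $(\log \circ v)''$ to that of $v''\,v - (v')^2$, and conclude from an elementary inequality. Indeed, your inequality $-t\log t < 1-t$ with $t = 1/(e^P+1)$ is exactly the paper's $\log(1+e^P) < e^P$ in disguise; the paper simply cites $\log(1+x) < x$ for $x>0$, whereas you re-derive it via the monotonicity of $\phi(t) = t - t\log t - 1$.
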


The approximation corresponding to $v(P)=\log \left( \exp(P)+1 \right)$ is shown in Figure \ref{Approximations_Equity ii)}. As we can see, the approximation error is very low. In the perspective of maximizing the utility, this function is probably even more satisfactory than the real equity. Indeed the utility of the equity is equal to zero whatever the position if the position is negative. In reality, one may think that the bank's managers prefer a light insolvency situation to a large one, for example for the sake of reputation. It is be difficult to find funding to build a new project after letting an institution in a state of large insolvency. Our approximation function is strictly increasing and therefore takes this aspect into account. This is especially true for position values not too far away from the insolvency point.
\begin{figure}[H]
    \center
    \includegraphics[scale=0.65]{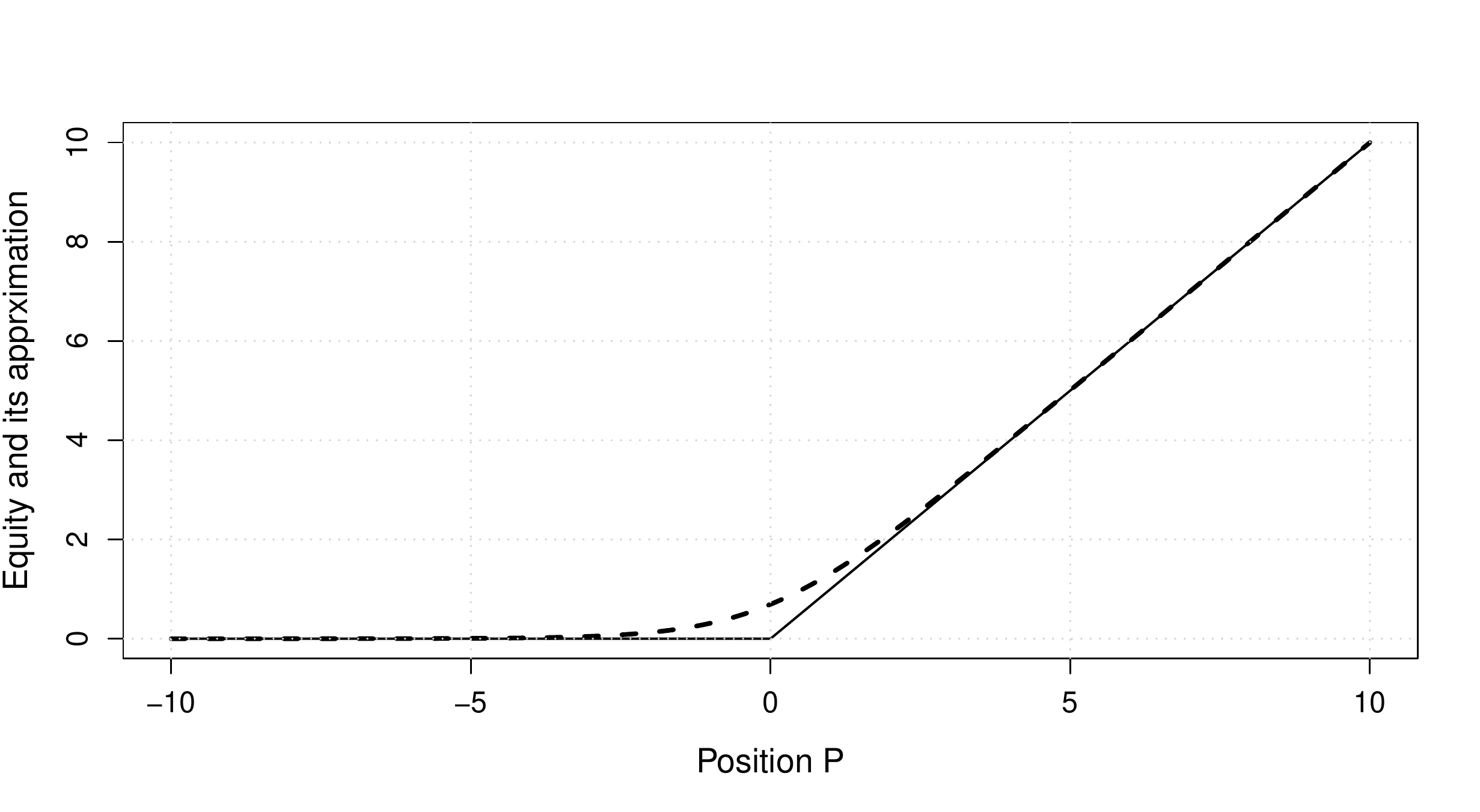}
    \caption{The solid line represents the equity and the dashed line displays the approximated equity using funtion $v(P)=\log (\exp(P)+1)$.}
    \label{Approximations_Equity ii)}
\end{figure}
Lemmas \ref{CorrD} and \ref{CorL} provide a specification for the interest rate curve $r_D$ and the function $l$ appearing in $(RC)$, respectively satisfying $(A6)$ and $(A8)$.

\begin{Lem}[Specification of function $r_D$]~\\
An interest rate curve of the form
\begin{equation}
    r_D(\omega) = \alpha  - \beta \exp (\omega), ~~~\mbox{ for } \omega \in [0,1],
\end{equation}
satisfies $(A6)$.
\label{CorrD}
\end{Lem}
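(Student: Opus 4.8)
The plan is entirely elementary: the claim concerns only the second derivative of a smooth one-variable function, so I would verify $(A6)$ by a direct computation rather than by any structural argument.

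First I would differentiate $r_D$ twice on $[0,1]$. Writing $r_D(\omega) = \alpha - \beta \exp(\omega)$, one differentiation gives $r_D'(\omega) = -\beta \exp(\omega)$, and differentiating once more gives $r_D''(\omega) = -\beta \exp(\omega)$. Since $\exp(\omega) > 0$ for every $\omega \in [0,1]$, the sign of $r_D''(\omega)$ is the opposite of the sign of $\beta$, and the required inequality $r_D''(\omega) < 0$ holds precisely when $\beta > 0$. The conclusion of the lemma is then immediate.

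The only point requiring care, and the sole candidate for an ``obstacle,'' is the implicit positivity of $\beta$, which I would state explicitly. It is exactly the economically relevant case: with $\beta > 0$ one simultaneously has $r_D'(\omega) = -\beta \exp(\omega) < 0$, so $r_D$ is strictly decreasing in $\omega$, matching the fact that a higher degree of maturity transformation (shorter debt) carries a lower borrowing rate; moreover $r_D'(\omega) \neq 0$ on $[0,1]$, so the same specification also satisfies $(A7)$. Thus no genuine analytic difficulty arises: once the constraint $\beta > 0$ is recorded, strict concavity follows from a single line of differentiation.
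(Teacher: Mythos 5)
Your computation is correct, and it is precisely the argument the paper leaves implicit: the appendix supplies proofs for Lemmas \ref{CorV} and \ref{CorL} but gives none for Lemma \ref{CorrD}, evidently regarding $r_D'(\omega)=-\beta\exp(\omega)$, hence $r_D''(\omega)=-\beta\exp(\omega)<0$, as immediate. Your one substantive addition --- that the conclusion requires $\beta>0$, a hypothesis the lemma never states --- is a fair catch: the sign condition is only implicit in the economic specification (and, as you note, it simultaneously delivers $(A7)$ and a downward-sloping $r_D$ consistent with cheaper short-term funding), whereas for $\beta\le 0$ the claimed strict concavity fails, so recording $\beta>0$ explicitly is a genuine improvement on the statement.
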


\begin{Lem}[Specification of function $l$]~\\
    The function defined by 
    $$l(\omega,L)=\exp(\omega) \exp(L), \ \ \ \mbox{ for } \omega \in [0,1] \mbox{ and } L \in \mathds{R}_+,$$
     satisfies $(A8)$.
    \label{CorL}
\end{Lem}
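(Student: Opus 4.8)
The plan is a direct computation. Observe first that the proposed function factorizes as $l(\omega,L)=\exp(\omega)\exp(L)=\exp(\omega+L)$, so it is simply the exponential of the sum of its arguments. Consequently every partial derivative of $l$, of whatever order and in whatever combination of the two variables, equals $\exp(\omega+L)$ again. This single observation collapses all the quantities appearing in $(A8)$ to one common expression, which is exactly what makes the verification immediate.

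Concretely, I would record the three needed second derivatives, $\partial^2 l/\partial\omega^2=\exp(\omega+L)$, $\partial^2 l/\partial L^2=\exp(\omega+L)$, and the cross term $\partial^2 l/\partial\omega\,\partial L=\exp(\omega+L)$. The first requirement of $(A8)$, namely $\partial^2 l/\partial\omega^2\ge 0$, is then clear since $\exp(\omega+L)>0$ for every $(\omega,L)$. For the second requirement I would substitute to obtain $\frac{\partial^2 l}{\partial\omega^2}\frac{\partial^2 l}{\partial L^2}=\exp\!\big(2(\omega+L)\big)$ and $\left(\frac{\partial^2 l}{\partial\omega\,\partial L}\right)^2=\exp\!\big(2(\omega+L)\big)$, so the inequality holds with equality, and in particular holds. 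Since the domain restriction $\omega\in[0,1]$, $L\in\mathds{R}_+$ never enters the sign computations, the conclusion is valid on the whole stated domain.

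There is no genuine obstacle here; the only point worth flagging is the interpretation, which explains why the second inequality is saturated rather than strict. Condition $(A8)$ is exactly the statement that the Hessian of $l$ is positive semidefinite, i.e.\ that $l$ is jointly convex in $(\omega,L)$, which is the convexity of the constraint $(LC)$ required in the proof of Theorem \ref{PropExistenceUniciteSolOptiProSol}. For $l=\exp(\omega+L)$ the Hessian is the rank-one matrix $\exp(\omega+L)\left(\begin{smallmatrix}1&1\\1&1\end{smallmatrix}\right)$, whose eigenvalues are $0$ and $2\exp(\omega+L)$; it is positive semidefinite but singular, which is precisely the reason the determinant inequality in $(A8)$ holds as an equality.
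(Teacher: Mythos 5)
Your proposal is correct and follows essentially the same route as the paper: a direct computation of the three second-order partial derivatives, noting that $\partial^2 l/\partial\omega^2 = \exp(\omega)\exp(L) > 0$ and that the determinant condition in $(A8)$ holds with equality since all second derivatives coincide. Your added remark that the Hessian is the rank-one positive semidefinite matrix $\exp(\omega+L)\left(\begin{smallmatrix}1&1\\1&1\end{smallmatrix}\right)$ is a nice clarification of why the inequality is saturated, but it does not change the substance of the argument.
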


\subsubsection{Choice}

Previous theoretical results provide different suitable specifications (especially of the function $v$) leading to a unique solution of the optimization program. In order to clarify the presentation, let us make a clear recommendation of choice. The following result is directly derived from Theorem \ref{PropExistenceUniciteSolOptiProSol} and Lemmas \ref{CorV}, \ref{CorL} and \ref{CorrD}.

\begin{Cor}[Existence and uniqueness to a solution of a specific optimization program] ~\\
    Additionally to $(A1)-(A4)$, let us consider:
    \begin{itemize}
        \item[$\bullet$] a logarithmic utility function
                                $$ u_i(x) = \log(x), \ \ \ \mbox{ for } x \in \mathds{R};$$
        \item[$\bullet$] the following approximation of the limited liability of shareholders:
                                $$ v(P) = \log \left( \exp(P) + 1 \right), \ \ \ \mbox{ for } P \in \mathds{R};$$
        \item[$\bullet$] the following liquidity constraint:
                               $$ l(\omega,L)=\exp(\omega) \exp(L), \ \ \ \mbox{ for } \omega \in [0,1] \mbox{ and } L \in \mathds{R}_+;$$
        \item[$\bullet$] the following interest rate curve:   $$ r_D(\omega) = \alpha  - \beta \exp (\omega), ~~~\mbox{ for } \omega \in [0,1]. $$
    \end{itemize}
    Then, the associated optimization program $\mathcal{P}_i'$ has a unique solution.
    \label{PropChoix}
\end{Cor}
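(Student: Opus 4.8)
The plan is to recognize this corollary as a pure verification exercise: I would check that the four explicit choices of $u_i$, $v$, $l$ and $r_D$ satisfy every hypothesis of Theorem \ref{PropExistenceUniciteSolOptiProSol}, and then simply invoke that theorem. Since assumptions $(A1)$--$(A4)$ are taken as given in the statement, the only remaining work is to establish $(A5)$--$(A8)$ for the prescribed functions.

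First I would dispatch $(A5)$, $(A6)$ and $(A8)$ by appealing directly to the three preparatory lemmas, with no additional computation. Lemma \ref{CorV}~ii) states precisely that the pair $u_i = \log$ and $v(P) = \log(\exp(P)+1)$ makes $u_i \circ v$ strictly concave, which is exactly $(A5)$. Lemma \ref{CorrD} gives $(A6)$ for the curve $r_D(\omega) = \alpha - \beta \exp(\omega)$, since $r_D''(\omega) = -\beta \exp(\omega) < 0$. Lemma \ref{CorL} gives $(A8)$ for $l(\omega,L) = \exp(\omega)\exp(L)$. Each of these three invocations is immediate.

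The one hypothesis not covered by a lemma is $(A7)$, which demands $r_D'(\omega) \ne 0$ on $[0,1]$, so this I would verify by hand. For $r_D(\omega) = \alpha - \beta \exp(\omega)$ one has $r_D'(\omega) = -\beta \exp(\omega)$, and since $\exp(\omega) > 0$ everywhere, this is nonzero as soon as $\beta \ne 0$, which is already guaranteed because $(A6)$ forces $\beta > 0$. I would also note in passing that the chosen $r_D$ is continuous and that $\alpha, \beta$ are implicitly constrained so that $r_D$ remains strictly above $r_{rf}$ on $[0,1]$, which is precisely the content of the assumed $(A4)$.

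With $(A1)$--$(A8)$ all in place, Theorem \ref{PropExistenceUniciteSolOptiProSol} applies verbatim and yields existence and uniqueness of the solution to $\mathcal{P}_i'$ in the stated sense. I do not expect any genuine obstacle: the entire difficulty has already been absorbed into Theorem \ref{PropExistenceUniciteSolOptiProSol} and the supporting lemmas, so the only mildly delicate point is that $(A7)$ must be checked separately rather than read straight off a lemma, while the rest is bookkeeping that matches each choice to its corresponding assumption.
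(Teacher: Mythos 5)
Your proposal is correct and follows essentially the same route as the paper, which states the corollary as ``directly derived from Theorem \ref{PropExistenceUniciteSolOptiProSol} and Lemmas \ref{CorV}, \ref{CorL} and \ref{CorrD}'' without further argument. Your explicit check of $(A7)$ (noting $r_D'(\omega) = -\beta\exp(\omega) \neq 0$ once $(A6)$ forces $\beta > 0$) is a small but welcome addition, since that hypothesis is not covered by any of the three lemmas the paper cites.
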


To conclude this section, let us emphasize that all parameters and variables required to perform the optimization can be obtained via publicly available data.

\subsection{Optimal interconnections}
\label{Chapnetworks_Subsec_Optimal}

Previous theoretical results ensure that the bank's maximization program has a (unique) solution. However, we did not characterize this solution, in particular the interconnections. In this part, we show that under some conditions, it is optimal for a bank to get interconnected. In this section, in order to simplify the presentation and to explain the main features, we do not take into account the control variables $A\ell_i$ and $\omega_i$, as well as the liquidity constraint $(LC)$.

In order to start, let us consider a simplified case of a portfolio composed of a quantity $Ax$ and a quantity $\pi$ of assets having respectively random variables $R_g$ and $R_g^{\pi}$ as gross returns, under a solvency constraint.\footnote{For the sake of simplicity, the product $\pi \mathcal{K}$ of the complete program has been simplified into $\pi$.} The penalization weights are respectively $k^A$ and $k^{\pi}$. The corresponding optimization program is
$$
	\mathcal{P_{RA}} := \left\{
	\begin{array}{rlr}
       \max                          & \mathds{E} \left[ u( Ax R_g + \pi R_g^{\pi} ) \right]		\\
       Ax,\pi										\\
       \mbox{s.t.}							 &	k^A Ax + k^{\pi} \pi \leq 1 \\
       								 &  Ax \geq 0 \\
       								 & 0 \leq \pi \leq 1 \\
    \end{array}
    .
    \right.
$$
The Karush, Kuhn and Tucker (KKT) Theorem \citep{karush1939minima, kuhn1951proceedings} allows to derive the following proposition.
\begin{prop}
\label{Prop_PositivenessPi}
For the sake of simplicity, we denote $f=\mathds{E}  \left[ u( Ax R_g + \pi R_g^{\pi} ) \right]$.
Under the condition
$$ \forall Ax \in \mathds{R}_+ \mbox{ and } \forall \pi \in [0,1], \dfrac{\dfrac{\partial f}{\partial Ax}(Ax, \pi)}{k^{A}} < \dfrac{\dfrac{\partial f}{\partial \pi}(Ax,\pi)}{k^{\pi}},$$
the optimal $\pi^*$ is different from 0.
\end{prop}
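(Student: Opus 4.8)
The plan is to follow the route indicated in the statement and apply the KKT first-order conditions, arguing by contradiction. First I would record that a maximizer exists: the feasible set $\{(Ax,\pi): k^A Ax + k^\pi \pi \le 1,\ Ax\ge 0,\ 0\le\pi\le 1\}$ is closed and bounded, hence compact, and $f$ is continuous (by $(A2)$ and differentiation/continuity under the expectation), so $\mathcal{P_{RA}}$ attains its maximum at some $(Ax^*,\pi^*)$. Since all four constraints are affine, a constraint qualification holds automatically, so the KKT conditions are necessary at $(Ax^*,\pi^*)$. Writing $\lambda_1\ge 0$ for the solvency constraint, $\lambda_2,\lambda_3\ge 0$ for $Ax\ge 0$ and $\pi\ge 0$, and $\lambda_4\ge 0$ for $\pi\le 1$, stationarity of the Lagrangian reads
\[
\frac{\partial f}{\partial Ax} = \lambda_1 k^A - \lambda_2, \qquad \frac{\partial f}{\partial \pi} = \lambda_1 k^\pi - \lambda_3 + \lambda_4,
\]
together with complementary slackness.

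Next I would assume, for contradiction, that $\pi^*=0$. Then the upper bound $\pi\le 1$ is inactive, so $\lambda_4=0$, and I would split on the value of $Ax^*$. In the main case $Ax^*>0$, the constraint $Ax\ge 0$ is inactive, giving $\lambda_2=0$; the first stationarity equation then yields $\lambda_1 = (\partial f/\partial Ax)/k^A$, and substituting into the second together with $\lambda_3\ge 0$ gives $(\partial f/\partial\pi)/k^\pi = \lambda_1 - \lambda_3/k^\pi \le \lambda_1 = (\partial f/\partial Ax)/k^A$. This contradicts the standing hypothesis, which asserts the reverse strict inequality at every feasible point, in particular at $(Ax^*,0)$. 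Hence $\pi^*=0$ together with $Ax^*>0$ is impossible.

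The remaining corner $Ax^*=\pi^*=0$ is the point I expect to require the most care, since there the solvency constraint is slack, forcing $\lambda_1=0$, and stationarity only gives $\partial f/\partial Ax = -\lambda_2 \le 0$, which the marginal-utility hypothesis alone does not contradict. To close this case I would appeal to the monotonicity built into the model: because $u$ is strictly increasing (assumption $(A2)$) and the gross return $R_g$ is (almost surely) positive, the map $Ax\mapsto Ax\,R_g+\pi\,R_g^{\pi}$ is pathwise increasing, so $f$ is strictly increasing in $Ax$ and $\partial f/\partial Ax(0,0)>0$, contradicting $\partial f/\partial Ax\le 0$. Economically this corner is anyway irrelevant, a bank holding no assets whatsoever, so one may alternatively just restrict attention to $Ax^*>0$. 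Combining the two cases rules out $\pi^*=0$ and establishes $\pi^*\ne 0$.

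Finally I would flag the two technical supports of the argument: the differentiability of $f$ in $(Ax,\pi)$, which the statement already presupposes by writing the partial derivatives and which I would justify by differentiating under the expectation, and the verification that the affine constraints supply a valid constraint qualification so KKT applies. The genuinely substantive step is the elimination of $\lambda_1$ when $Ax^*>0$: the ratio $(\partial f/\partial Ax)/k^A$ is exactly the shadow price of the solvency constraint, and the hypothesis says the marginal utility per unit of capital charge is strictly larger for $\pi$ than for $Ax$, which is precisely what forbids loading everything onto $Ax$. The only real obstacle is the degenerate corner, dispatched by strict monotonicity of $u$.
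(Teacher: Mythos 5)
Your proof is correct and follows essentially the same route as the paper's: apply the KKT conditions to $\mathcal{P_{RA}}$, assume $\pi^*=0$ for contradiction, use complementary slackness to eliminate $\mu_4$ and $\mu_2$, solve for the solvency multiplier $\mu_1 = \frac{\partial f}{\partial Ax}\frac{1}{k^A}$ when $Ax^*>0$, and contradict the sign condition on the remaining multiplier via the standing hypothesis. The only difference is one of explicitness: the paper dispatches the corner $Ax^*=\pi^*=0$ in a single sentence (``since $f$ is strictly increasing, $Ax^*$ is necessarily strictly positive''), which is precisely the monotonicity fact you spell out via $\frac{\partial f}{\partial Ax}(0,0)>0$, and your remarks on existence of a maximizer and on constraint qualification are likewise left implicit in the paper.
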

This shows that under the condition that the derivative of the expected utility with respect to $\pi$ (relative to its corresponding weight) is higher than the one with respect to $Ax$, the optimal $\pi^*$ is strictly positive. Proposition \ref{Prop_PositivenessPi} does not provide the solution but gives an indication that interconnections can be strictly positive under some conditions. This result can be generalized to a higher number of assets. Note that this illustrative program does not contain any equality constraint. However, such a constraint can be trimmed by replacing one control variable in function of the others. That reduces the problem's dimension. This point will be further detailed in the following.

Due to the high complexity of our optimization problem (high dimension and high number of constraints), the KKT conditions are very numerous and therefore it seems impossible to derive the solution in a closed form. We decompose the analysis 
%and the interpretations 
in different steps. We first consider a risk-neutral agent maximizing the value of its portfolio without limited liability. Secondly, we consider the case of a risk-averse agent and finally the limited liability is taken into account.\\

\noindent \textbf{Risk-neutral agent without limited liability:}\\
In the risk-neutral case, the utility function is the identity function. 
%it is sufficient to assume that $r$ is deterministic. We then obtain
Therefore, we can consider the following optimization program:
$$
	\mathcal{P_{RN}} := \left\{
	\begin{array}{rlr}
       \max                          & \left( Ax \mathds{E}(R_g) + \pi \mathds{E}(K^{(1)}) \right)		\\
       Ax,\pi										\\
       \mbox{s.t.}							 &	k^A Ax + k^{\pi} \pi \ \mathcal{K}^{(0)} \leq 1 \\
       								 &  Ax \geq 0 \\
       								 & \pi \geq 0 \\
    \end{array}
    \right.
    ,
$$
where $K^{(1)}$ is the equity value (book value) of another institution at time $t=1$ and $\mathcal{K}^{(0)}$ is the equity value (market value) of this institution at time $t=0$. 
%Similarly to $r$, it is sufficient to consider that both $K$ and $\mathcal{K}$ are deterministic.

By using the same type of argument as in Proposition \ref{Prop_PositivenessPi}, it is easy to show that if $\mathds{E}(R_g)>0$ or $\dfrac{\mathds{E}(K^{(1)})}{\mathcal{K}^{(0)}}>0$, then
\Bit
\item if $\dfrac{\mathds{E}(R_g)}{k^A} > \dfrac{\mathds{E}(K^{(1)})}{k^{\pi} \mathcal{K}^{(0)}}$, the unique solution is $\left( Ax^*=\dfrac{1}{k^A}, \pi^*=0 \right)$;
\item if $\dfrac{\mathds{E}(R_g)}{k^A} < \dfrac{ \mathds{E}(K^{(1)}) }{k^{\pi} \mathcal{K}^{(0)}}$, the unique solution is $\left( Ax^*=0, \pi^*=\dfrac{1}{k^{\pi} \mathcal{K}^{(0)}} \right)$;
\item if $\dfrac{\mathds{E}(R_g)}{k^A} = \dfrac{ \mathds{E}(K^{(1)}) }{k^{\pi} \mathcal{K}^{(0)}}$, the solution is not unique.
\Eit
Therefore, due to the solvency constraint, a risk-neutral agent only invests in the asset having the highest return with respect to its specific regulatory weight in the solvency constraint.

Let us now consider the case where a limit to the availability is introduced: the constraint $\pi \geq 0$ is replaced by $0 \leq \pi \leq c$.
In this case, if $\dfrac{\mathds{E}(R_g)}{k^A} < \dfrac{\mathds{E}(K^{(1)})}{k^{\pi} \mathcal{K}^{(0)}}$, $\pi^*=\min \left( c , \dfrac{1}{k^{\pi} \mathcal{K}^{(0)}} \right)$. Therefore, if $c < \dfrac{1}{k^{\pi} \mathcal{K}^{(0)}}$, investing all in $\mathcal{K}^{(0)}$ does not bind the solvency constraint. In this case (and if $\mathds{E}(R_g)>0$), an investment in $Ax$ completes the portfolio.
This result can be easily generalized to the case of $n$ institutions and where it is possible to invest in the debt $L_j, j=1, \dots, n,$ of the other institutions. This is done in the following theorem.
\begin{Th}
\label{Prop_ProgramRNG}
Let us consider the following optimization program:
$$
	\mathcal{P_{RNG}} = \left\{
	\begin{array}{rlr}
       \max                          &  \left( Ax_i \mathds{E}(R_{g,i}) +  \sum_{j=1}^n \pi_{ij} \mathds{E}(K_j^{(1)}) + \sum_{j=1}^n \gamma_{ij} \mathds{E}(L_j^{(1)}) \right)	\\
       Ax_i,\pi_{ij}, \gamma_{ij}										\\
       \mbox{s.t.}							 &	k^A Ax_i + k^{\pi} \sum_{j=1}^n \pi_{ij} \mathcal{K}_j^{(0)} + k^{\gamma} \sum_{j=1}^n \gamma_{ij} \mathcal{L}_j^{(0)} \leq 1 \\
       								 &  Ax \geq 0 \\
       								 & 0 \leq \pi_{ij} \leq c^{\pi} \\
       								 & 0 \leq \gamma_{ij} \leq c^{\gamma} \\
    \end{array}
    .
    \right.
$$
To find this problem's solution, let us sort in decreasing order the following returns (relative to their penalty weight):
$\dfrac{ \mathds{E}(R_{g,i}) }{k^A}$, $\dfrac{ \mathds{E}(K_j^{(1)}) }{k^{\pi} \mathcal{K}_j^{(0)}}$ ($j=1, \dots, n$), $\dfrac{ \mathds{E}(L_j^{(1)})}{k^{\gamma} \mathcal{L}_j^{(0)}}$ ($j = 1, \dots, n$).
The optimal solution consists in investing as much as possible in the asset having the highest return with respect to its regulatory weight. When this asset is not available anymore, it is better to invest as much as possible in the second one, and so on. This is repeated until the solvency constrained is binding.
\end{Th}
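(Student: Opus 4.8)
The plan is to recognize $\mathcal{P_{RNG}}$ as a bounded (fractional) knapsack problem and then to establish optimality of the greedy rule by an interchange argument. First I would perform a change of variables that makes the knapsack structure transparent. Setting
$$ s_0 = k^A Ax_i, \quad s_j^{\pi} = k^{\pi} \mathcal{K}_j^{(0)} \pi_{ij}, \quad s_j^{\gamma} = k^{\gamma} \mathcal{L}_j^{(0)} \gamma_{ij}, $$
each new variable measures the fraction of the solvency budget consumed by the corresponding position. In these variables the solvency constraint becomes the single budget constraint $s_0 + \sum_j s_j^{\pi} + \sum_j s_j^{\gamma} \le 1$; the sign and availability constraints become box constraints $0 \le s_0$, $0 \le s_j^{\pi} \le k^{\pi} c^{\pi} \mathcal{K}_j^{(0)}$ and $0 \le s_j^{\gamma} \le k^{\gamma} c^{\gamma} \mathcal{L}_j^{(0)}$; and the objective becomes a linear form whose coefficient on each variable is exactly the associated sorted ratio, e.g. the coefficient on $s_0$ is $\mathds{E}(R_{g,i})/k^A$ and that on $s_j^{\pi}$ is $\mathds{E}(K_j^{(1)})/(k^{\pi}\mathcal{K}_j^{(0)})$. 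Thus $\mathcal{P_{RNG}}$ is equivalent to maximizing $\sum_m c_m s_m$ subject to $\sum_m s_m \le 1$ and $0 \le s_m \le u_m$, where the $c_m$ are precisely the quantities the theorem sorts in decreasing order.

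Existence is then immediate: the feasible set is a compact convex polytope and the objective is continuous, so a maximizer exists by the Weierstrass theorem. For the optimality of the greedy allocation I would relabel the variables so that $c_1 \ge c_2 \ge \cdots$; the greedy solution fills $s_1$ to its cap $u_1$, then $s_2$, and so on, until the budget is exhausted (this is well defined as long as the running coefficient stays positive, which is the content of the clause "until the solvency constraint is binding"; any variable with $c_m \le 0$ is left at $0$ since raising it cannot increase the objective). The key lemma is an exchange step: if a feasible point $s$ admits a pair of indices $(p,q)$ with $c_p > c_q$, $s_p < u_p$ and $s_q > 0$, then $s$ is not optimal, because transferring a small amount $\varepsilon>0$ of budget from $q$ to $p$ (replacing $s_p$ by $s_p+\varepsilon$ and $s_q$ by $s_q-\varepsilon$) leaves $\sum_m s_m$ unchanged, preserves the box constraints for $\varepsilon$ small enough, and changes the objective by $(c_p-c_q)\varepsilon>0$. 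Consequently every optimal solution is free of such pairs, which forces the threshold structure: there is a cutoff ratio above which every position is at its upper bound and below which every position is zero, with at most one marginal position absorbing the slack. Since an optimal solution exists and must have this structure, and all configurations with this structure share the greedy objective value, the greedy allocation attains the optimum.

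The one genuinely delicate point, and the part I would write most carefully, is the bookkeeping around the upper bounds and the ties. One must verify that the exchange step never violates a box constraint (this fixes how $\varepsilon$ is chosen) and must argue that ties among the ratios $c_m$ produce a continuum of equally optimal solutions without affecting either the optimal value or the allocation rule described; in particular, the optimum need not be unique, but its value and the "fill in decreasing order of ratio" prescription are unambiguous. Equivalently, this step can be recast through the KKT conditions already invoked for Proposition \ref{Prop_PositivenessPi}: the multiplier $\lambda$ on the budget constraint acts as the cutoff, every position whose ratio exceeds $\lambda$ is driven to its upper bound, every position whose ratio is below $\lambda$ is set to zero, and the single marginal position carries the slack — which is again precisely the greedy prescription, so either route yields the stated result.
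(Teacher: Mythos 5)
Your proof is correct. It is worth noting that the paper, despite announcing that all proofs are gathered in the Appendix, never writes out a proof of this theorem at all: the text justifies it as being ``easily generalized'' from the two-asset risk-neutral case, which is itself handled by ``the same type of argument as in Proposition~\ref{Prop_PositivenessPi}'', i.e., by the KKT conditions. Your main line is therefore a genuinely different and more self-contained route: the substitution $s_0=k^A Ax_i$, $s_j^{\pi}=k^{\pi}\mathcal{K}_j^{(0)}\pi_{ij}$, $s_j^{\gamma}=k^{\gamma}\mathcal{L}_j^{(0)}\gamma_{ij}$ exhibits the program as a continuous (fractional) knapsack problem, and the exchange lemma proves greedy optimality without ever invoking the KKT theorem or its constraint qualifications. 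This buys rigor exactly where the paper is vague: the bookkeeping of the availability caps $c^{\pi}, c^{\gamma}$, the treatment of ties (non-uniqueness of the maximizer but uniqueness of the optimal value and of the threshold structure), and the caveat -- implicit in the paper's positivity assumptions in the two-asset discussion -- that the greedy filling must stop at assets with non-positive ratio rather than mechanically continuing until the solvency constraint binds. The exchange argument also characterizes \emph{all} optimal solutions (threshold structure with at most one marginal position), which the paper's sequential KKT sketch does not. Two small points you should make explicit if this were written up: the change of variables requires $\mathcal{K}_j^{(0)}>0$ and $\mathcal{L}_j^{(0)}>0$ (harmless, as these are prices), and the threshold structure needs, in addition to the exchange step, the direct observation that an optimum cannot leave budget slack while some position with strictly positive coefficient is below its cap -- a one-line monotonicity argument that your parenthetical remark about non-positive coefficients already gestures at.
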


\noindent \textbf{Risk-averse agent without limited liability} \\
A risk-averse agent aims at decreasing the variance of its portfolio. To this purpose, it is necessary to diversify. Therefore, in this case, we can expect an investment in many assets, contrary to the "binary" investment described previously. This is confirmed by numerical experiments.
\\

\noindent \textbf{Agent with limited liability} \\
In the previous considerations, we did not take into account the limited liability as well as the fact that equity and debt have very different features. Therefore, we could not see the implications of the fact that the $\pi_{ij}$ and the $\gamma_{ij}$ are related to very different instruments. To pinpoint these implications, let us consider a stylized set-up with two banks. One can identify four situations in which Bank 1 (or 2) is either solvent or in default. Table \ref{TableInstitutionsState} reports these 4 states. Let us focus on the impact of limited liability for Bank 1. We assume that Bank 1 builds interconnections with Bank 2 anyway (for example in order to reduce its variance) and we discuss the distribution among shares and debt securities.
\begin{table}[H]
    \centering
    \begin{tabular}{|c|c|c|}
    \hline
           & Bank 2 in default & Bank 2 solvent \\
           \hline
       Bank 1 in default & $e_{11}$ & $e_{12}$ \\
       \hline
       Bank 1 solvent & $e_{21}$ & $e_{22}$ \\
        \hline
    \end{tabular}
    \caption{Banks' states.}
    \label{TableInstitutionsState}
\end{table}
The expected utility of Bank 1 is written as follows
$$ \mathds{E}(U_1) = \mathds{P}(e_{11}) \ PO(e_{11}) + \mathds{P}(e_{12}) \ PO(e_{12}) + \mathds{P}(e_{21}) \ PO(e_{21}) + \mathds{P}(e_{22}) \ PO(e_{22}), $$
where $\mathds{P}(e)$ is the probability of being in state $e$ and $PO(e)$ the associated payoff for Bank 1. Due to limited liability, $PO(e_{11})=PO(e_{12})=0$.
Thus
$$ \mathds{E}(U_1) = \mathds{P}(e_{21}) \ PO(e_{21}) + \mathds{P}(e_{22}) \ PO(e_{22}). $$
In the state $e_{21}$, Bank 2 defaults, meaning that its equity is equal to zero. It is therefore more interesting to invest in its debt. In the state $e_{22}$, Bank 2 is solvent. Thus, if the equity of Bank 2 has a higher return than its debt with respect to their regulatory weights, Bank 1 prefers investing in the share securities of Bank 2, thus increasing the $\pi_{12}$.
If the correlation $\rho$ between the external assets of both banks is highly positive, both banks are likely to be solvent and to default simultaneously. That means that $ P(e_{21})$ is very low, giving:   $\mathds{E}(U_1) \approx \mathds{P}(e_{22}) \ PO(e_{22})$. In this situation, Bank 1 prefers investing in share securities.
On the contrary, if the correlation $\rho$ between the external assets of both Banks is highly negative, Bank 2 is likely to default when Bank 1 is solvent. In this case $\mathds{E}(U_1) \approx \mathds{P}(e_{21}) \ PO(e_{21})$ and Bank 1 prefers investing in debt securities.

It is important to understand that the asymmetry between the cases $\rho>0$ and $\rho<0$ is due to the limited liability feature. Indeed, let us assume that Bank 1 has no limited liability and thus is not indifferent to losses. If $\rho$ is highly positive, $\mathds{E}(U_1) \approx \mathds{P}(e_{11}) \ PO(e_{11}) + \mathds{P}(e_{22}) \ PO(e_{22})$. In state $e_{11}$, Bank 2 defaults and it is better to invest in its debt whereas in state $e_{12}$, it is better to invest in its shares. Therefore, it can be appropriate to invest in both instruments and thus the asymmetry disappears. The same happens for a highly negative $\rho$.

However, keep in mind that this set-up is too minimal to show all the implications of the limited liability.

\subsection{Cost of funding}
In the considerations of Section \ref{Chapnetworks_Subsec_Optimal}, we assumed that the agent owns a sufficient amount of wealth to invest until the solvency constraint is binding. However, the capital $K_i^{(0)}$ is  very low compared to the total assets to invest (due to the regulatory weight values).
Thus, once the total capital has been used, the institution must raise debt in order to continue to invest. Returns of shares and debt securities must be netted by the cost of funding. To make the investment attractive (in terms of net returns), the cost of raising debt should be lower than the returns of shares and debt securities. 

Let us now state some results about the returns of investments in shares and debt securities issued by other institutions, compared to their funding cost.
For the sake of simplicity of the interpretation, before stating the result for general functions $u_i$ and $v$, we propose a result in the case where $u_i$ and $v$ are the identity functions. It corresponds to the case of a risk-neutral institution maximizing its position $P_i^{(1)}$.

\begin{prop}[Returns against opportunity cost, in the case of a risk-neutral institution maximizing the expectation of its position] ~\\
    \label{PropPositiveness_1}

    \begin{itemize}
        \item[$\bullet$] The expected return of a share issued by Bank $j$ is larger than the cost of funding of Bank $i$ if and only if
                            \begin{equation}
                            \label{SRN}
                                 \int_{\frac{-b_j}{a_j}}^{+ \infty} (a_j + b_j r_j) f_{R,j}(r_j) \  dr_j > [1+r_D(\omega_i)]\  \mathcal{K}_j^{(0)},
                            \end{equation}
        where $a_j=\kappa_j Ax_j^{(0)}$, $b_j=\kappa_j \left( Ax_j^{(0)}+ A\ell_j^{(0)} (1+r_f) \right) - L_j^{*} [1+r_D(\omega_j)]  $ and $f_{R,j}$ is the marginal density of the net return of the external asset of Bank $j$.
        \item[$\bullet$] The expected return of the debt issued by Bank $j$ is higher than the cost of funding of Bank $i$ if and only if
$$                            \label{DRN}
                                \int_{- \infty}^{\frac{L_j^*[1+r_D(\omega_j)]-b_j}{a_j}} (a_j r_j + b_j) f_{R,j}(r_j) \  dr_j  + L_j^{*}  c_j [1+r_D(\omega_j)] >  [1+r_D(\omega_i)] \mathcal{L}_j^{(0)},
$$
                            where $c_j=\mathds{P} \left( r_j > \frac{L_j^{*(1)}-b_j}{a_j} \right).$
    \end{itemize}
\end{prop}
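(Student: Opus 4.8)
The plan is to exploit that, with $u_i=v=\mathrm{Id}$, the objective of $\mathcal{P}_i'$ is simply $\mathds{E}_0[P_i^{(1)}]$, which is \emph{affine} in the control variables. Consequently the profitability of a marginal investment is governed by the sign of a single coefficient, and each of the two claimed equivalences reduces to an explicit evaluation of $\mathds{E}_0[K_j^{(1)}]$, respectively $\mathds{E}_0[L_j^{(1)}]$, followed by a comparison with the funding cost.

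First I would isolate the contribution of a share issued by Bank $j$. Raising $\pi_{i,j}$ by one unit requires $\mathcal{K}_j^{(0)}$ extra units of funding at $t=0$; financing this through the budget constraint $(BC)$ by debt increases $L_i^{(0)}$ by $\mathcal{K}_j^{(0)}$, hence the amount to be repaid at $t=1$ by $[1+r_D(\omega_i)]\mathcal{K}_j^{(0)}$, while adding $K_j^{(1)}$ to the asset side. The induced change in $\mathds{E}_0[P_i^{(1)}]$ is therefore $\mathds{E}_0[K_j^{(1)}]-[1+r_D(\omega_i)]\mathcal{K}_j^{(0)}$, so that ``the expected return of the share exceeds its cost of funding'' means exactly that this quantity is positive. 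It then remains to compute $\mathds{E}_0[K_j^{(1)}]$: plugging the standalone balance sheet of Bank $j$ (as seen under $(A1)$ through the scaling $\kappa_j$) into $K_j^{(1)}=\max(\kappa_j(Ax_j^{(1)}+A\ell_j^{(1)})-L_j^{*(1)},0)$, and substituting $Ax_j^{(1)}=Ax_j^{(0)}(1+R_j)$, $A\ell_j^{(1)}=A\ell_j^{(0)}(1+r_f)$ and $L_j^{*(1)}=L_j^{*}[1+r_D(\omega_j)]$, the argument of the $\max$ collapses to the affine function $a_jR_j+b_j$ with the stated $a_j,b_j$. Its positivity event is $\{R_j>-b_j/a_j\}$ (using $a_j>0$), and integrating $a_jr_j+b_j$ against $f_{R,j}$ over this half-line, with $(A3)$ guaranteeing the integral is well defined, yields the left-hand side of \eqref{SRN}. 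I would also flag that the integrand printed as $a_j+b_jr_j$ should read $a_jr_j+b_j$, consistently with the threshold $-b_j/a_j$.

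The debt claim is treated identically, with $\max$ replaced by $\min$. From $L_j^{(1)}=\min(\kappa_j(Ax_j^{(1)}+A\ell_j^{(1)}),L_j^{*(1)})$ one splits $\mathds{R}$ at the solvency threshold of Bank $j$. On the default side the creditors receive the entire (affine) asset value, contributing the integral of $a_jr_j+b_j$ up to that threshold; on the solvent side they receive the full nominal claim $L_j^{*}[1+r_D(\omega_j)]$, occurring with probability $c_j$, and thus contributing the boundary term $L_j^{*}c_j[1+r_D(\omega_j)]$. Comparing the resulting $\mathds{E}_0[L_j^{(1)}]$ with the funding cost $[1+r_D(\omega_i)]\mathcal{L}_j^{(0)}$, exactly as in the share case, delivers the second inequality.

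I expect the only genuine subtlety to be bookkeeping rather than analysis. One must keep the two affine parametrizations straight: the intercept driving the $\max$ for $K_j^{(1)}$ is net of the nominal debt and produces the threshold $-b_j/a_j$, whereas the intercept governing the \emph{gross} asset value $\kappa_j(Ax_j^{(1)}+A\ell_j^{(1)})$ that drives the solvent/default split for $L_j^{(1)}$ differs from it by $L_j^{*(1)}$ and produces the threshold $(L_j^{*(1)}-b_j)/a_j$; the printed statement is somewhat elliptic about this shift. Matching the limit of integration and the surviving boundary term $L_j^{*}c_j[1+r_D(\omega_j)]$ precisely is where care is needed, while the remainder is a direct substitution justified by $(A1)$--$(A4)$ and the linearity of $\mathds{E}_0$.
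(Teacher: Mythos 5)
Your proof is correct and follows essentially the same route as the paper's: with $u_i\circ v=\mathrm{Id}$, the marginal contribution of $\pi_{i,j}$ (resp.\ $\gamma_{i,j}$) to $\mathds{E}_0[P_i^{(1)}]$ is $\mathds{E}_0[K_j^{(1)}]-[1+r_D(\omega_i)]\,\mathcal{K}_j^{(0)}$ (resp.\ $\mathds{E}_0[L_j^{(1)}]-[1+r_D(\omega_i)]\,\mathcal{L}_j^{(0)}$) via the budget constraint, and each expectation is then evaluated by integrating the affine payoff $a_jr_j+b_j$ over the relevant solvency region, exactly as in the paper's proof. Your two editorial flags are also both on point: the paper's own proof indeed derives the integrand $a_jr_j+b_j$ (not $a_j+b_jr_j$), and for the debt it writes $L_j^{(1)}=\min(a_jr_j+b_j,\,L_j^{*(1)})$, committing the same elliptic identification of the gross-asset intercept that you caution about (the general version, Proposition \ref{PropPositiveness_2}, uses the corrected intercept $d_j=b_j+L_j^{*}[1+r_D(\omega_j)]$).
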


\begin{prop}[Returns against opportunity cost, in the general case of an institution maximizing the expectation of the utility of its equity]~\\

\Bit        
\item The expected return of a share issued by Bank $j$ is larger than the cost of funding of Bank $i$ if and only if 
$$
 \int_{-\frac{b_j}{a_j}}^{+ \infty} (a_j r_j + b_j) \ w(r_j) \ dr_j > [1+r_D(\omega_i)]\  \mathcal{K}_j^{(0)} \int_{- \infty}^{+ \infty} w(r_j) \ dr_j,
$$
where
\begin{align*}
& w(r_j)
\\& = \int_{r_1=-\infty}^{+ \infty}  \dots \int_{r_{j-1}} \int_{r_{j+1}} \dots \int_{r_n}  h_{i1}(r_1, \dots, r_j, \dots, r_n) \ f_R(r_1, \dots, r_n) \ dr_n \dots dr_{j+1} \ dr_{j-1} \dots dr_1,
\end{align*}
where
$$ h_{i1}(r_1, \dots, r_j, \dots, r_n) = \frac{\partial (u_i \circ  v)}{\partial P_i^{(1)}}.$$
\item The expected return of the debt issued by Bank $j$ is higher than the cost of funding of Bank $i$ if and only if
\begin{align*}
& \int_{-\infty}^{\frac{L_j^*[1+r_D(\omega_j)]-d_j}{a_j}} (a_j r_j + d_j) w(r_j) \  dr_j+ L_j^{*}[1+r_D(\omega_j)]) \int_{\frac{L_j^*[1+r_D(\omega_j)]-d_j}{a_j}}^{+ \infty} w(r_j) \  dr_j 
\\& > [1+r_D(\omega_i)] \mathcal{L}_j^{(0)} \int_{- \infty}^{+ \infty}  w(r_j) \  dr_j,
\end{align*}
where $d_j=\kappa_j \left( Ax_j^{(0)}+ A\ell_j^{(0)} (1+r_f) \right)$ and $w(r_j)$ has been defined above.
\Eit
    \label{PropPositiveness_2}
\end{prop}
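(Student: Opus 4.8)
The plan is to read the two ``return versus cost of funding'' statements as first-order (marginal) conditions: the phrase \emph{the expected return exceeds the cost of funding} I interpret as the objective of $\mathcal{P}_i'$ increasing when Bank $i$ buys one extra unit of Bank $j$'s share (resp. debt) and finances that purchase entirely by issuing new debt. Under this reading I must compute $\partial_{\pi_{i,j}}\mathds{E}_0[(u_i\circ v)(P_i^{(1)})]$ and $\partial_{\gamma_{i,j}}\mathds{E}_0[(u_i\circ v)(P_i^{(1)})]$ and determine their signs. Specializing $u_i=v=\mathrm{Id}$ collapses the whole argument to the risk-neutral Proposition~\ref{PropPositiveness_1}, which I would use as a running consistency check.

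First I would make the funding margin explicit. Using the modeling strategy, write
\[ P_i^{(1)} = Ax_i^{(1)} + A\ell_i^{(1)} + \sum_k \pi_{i,k}K_k^{(1)} + \sum_k \gamma_{i,k}L_k^{(1)} - [1+r_D(\omega_i)]L_i^{(0)}, \]
with $K_k^{(1)} = \max(\kappa_k(Ax_k^{(1)}+A\ell_k^{(1)}) - L_k^{*(1)},0)$ and $L_k^{(1)} = \min(\kappa_k(Ax_k^{(1)}+A\ell_k^{(1)}), L_k^{*(1)})$. Keeping $(BC)$ satisfied while raising $\pi_{i,j}$ forces $\partial L_i^{(0)}/\partial\pi_{i,j} = \mathcal{K}_j^{(0)}$, so that
\[ \frac{\partial P_i^{(1)}}{\partial\pi_{i,j}} = K_j^{(1)} - [1+r_D(\omega_i)]\mathcal{K}_j^{(0)}, \qquad \frac{\partial P_i^{(1)}}{\partial\gamma_{i,j}} = L_j^{(1)} - [1+r_D(\omega_i)]\mathcal{L}_j^{(0)}. \]
The crucial point is that each funding cost is \emph{deterministic}. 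Differentiating under the integral sign --- legitimate because $(A5)$ makes $u_i\circ v$ differentiable and $(A3)$ provides the regularity of $f_R$ --- yields
\[ \frac{\partial}{\partial\pi_{i,j}}\mathds{E}_0[(u_i\circ v)(P_i^{(1)})] = \mathds{E}_0\Big[h_{i1}\big(K_j^{(1)} - [1+r_D(\omega_i)]\mathcal{K}_j^{(0)}\big)\Big], \]
with $h_{i1} = (u_i\circ v)'(P_i^{(1)})$, and the analogous expression for $\gamma_{i,j}$. The marginal investment is profitable precisely when this quantity is positive, $h_{i1}$ supplying the (marginal-utility) weighting that makes the comparison risk-adjusted.

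The second step reduces the $n$-dimensional expectation to a one-dimensional integral in $r_j$. Since $K_j^{(1)} = \max(a_j r_j + b_j, 0)$ and $L_j^{(1)} = \min(a_j r_j + d_j, L_j^{*(1)})$ depend on the counterparts only through $r_j$ (with $a_j$, $b_j$, $d_j$ as in the statement and $d_j = b_j + L_j^{*(1)}$), I would integrate out $r_1,\dots,r_{j-1},r_{j+1},\dots,r_n$ against $h_{i1}f_R$, which by definition produces the weight $w(r_j)$; note $\int_{\mathds{R}}w(r_j)\,dr_j = \mathds{E}_0[h_{i1}]$. Substituting $\max(a_j r_j + b_j,0)$, nonzero only for $r_j>-b_j/a_j$, gives the share condition, and splitting $\min(a_j r_j + d_j, L_j^{*(1)})$ at the default threshold $r_j = (L_j^{*(1)}-d_j)/a_j$ gives the debt condition, both exactly as stated. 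Because the deterministic funding cost factors out as $[1+r_D(\omega_i)]\mathcal{K}_j^{(0)}\int w$ (resp. with $\mathcal{L}_j^{(0)}$), the strict inequality in each display is equivalent to positivity of the corresponding marginal utility.

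I expect the main obstacle to be the rigorous justification of differentiation under the integral sign: I must dominate the difference quotients of $(u_i\circ v)(P_i^{(1)})$ by an integrable function, exploiting that $K_j^{(1)},L_j^{(1)}$ are Lipschitz in the returns (and $L_j^{(1)}$ bounded) together with finite first moments of the payoffs under $f_R$; the kinks of the $\max/\min$ sit on $r_j$-measure-zero sets and are harmless, since they enter only as coefficients (the payoffs themselves), not through their derivatives. A second, modeling-level point to state carefully is the funding convention --- that the marginal unit is financed by debt rather than by reallocating $Ax_i$ or $A\ell_i$ --- and the implicit feasibility of this move for $(SC)$, in line with the surrounding discussion that capital is exhausted and further investment is debt-financed.
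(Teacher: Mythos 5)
Your proposal is correct and follows essentially the same route as the paper's proof: interpret the statement as positivity of $\partial_{\pi_{i,j}}\mathds{E}_0[(u_i\circ v)(P_i^{(1)})]$ (resp. $\partial_{\gamma_{i,j}}$) with the funding cost entering through $(BC)$, apply the chain rule so the marginal utility $h_{i1}$ appears, integrate out the returns other than $r_j$ to obtain $w(r_j)$, and split the $\max$/$\min$ payoffs at the thresholds $-b_j/a_j$ and $(L_j^{*(1)}-d_j)/a_j$. Your added care about differentiating under the integral sign and the explicit debt-financing convention are refinements the paper passes over silently, not a different argument.
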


Equation \eqref{SRN} corresponds to the fact that $\mathds{E} \left[ K_j^{(1)} \right]- [1+r_D(\omega_i)]\  \mathcal{K}_j^{(0)} >0 $. Note that in this formula, the return of only Bank $j$ matters. It can be beneficial for Bank $i$ to increase its participation in Bank $j$ if the return on equity of Bank $j$ is higher than the interest rate that Bank $i$ must pay for its debt. 
%Indeed, for fixed $K_i^{(0)}$ as well as fixed $Ax_i^{(0)}$ and $A\ell_i^{(0)}$, in order to increase its participation $\pi_{i,j}$, institution $i$ must raise debt. \\

In the general case, the same type of inequality as \eqref{SRN} is obtained. However, it takes the marginal utility (up to function $v$) into account via $w(r_j)$. For interpretation purpose, let us assume that $v=Id$. For a given value of $r_j$, the algebraic gain of increasing the participation $\pi_{ij}$ must be weighted by the marginal utility, which depends on the returns of all institutions. Integrating this marginal utility with respect to all returns $r_1, \dots, r_n$ apart from $r_j$ yields the term $w(r_j)$. The risk aversion of Bank $i$ is embedded in the term $w(r_j)$.

The same type of argument applies in the case of the debt.

%%%%%%%%%%%%%%%%%%%%%%%%%%%%%%%%%%%%%%%%%%%%%%%%%%%%%%
%\textcolor{red}{Maybe look at this part again.}
%%%%%%%%%%%%%%%%%%%%%%%%%%%%%%%%%%%%%%%%%%%%%%%%%%%%%%

\subsection{Testing of the diversification motive: the network shape}

Let us now compare the consequences of Theorem \ref{Prop_ProgramRNG} and Stylized Facts 2 and 3 on the network shape, and discuss the impact of risk-aversion and limited liability.

A risk-neutral bank with unlimited liability gets interconnected to others by strict mechanical behaviors: it seeks sequentially for the highest returns until binding the solvency constraint. Consequently, the network shape is very structured and directive since everyone gets interconnected in the same direction. Thus, in such a case, there is no general shape.\footnote{Nevertheless, with a particular set of returns, a star network can occur.} In other words, with risk-neutral banks and unlimited liability, the diversification motive cannot provide interesting results.

In the case of risk-averse banks, the interconnections tend to shape a complete network. Institutions carry out a diversification to decrease the variance, in addition to their aim of obtaining higher returns. Note that a diversified portfolio has a lower variance than a concentrated one.\footnote{If $X$ and $Y$ are two random variables with mean $\mu$, variance $\sigma^2$ and correlation $\rho<1$, then \\ $\mathds{E}(X+Y) = 2 \mu = \mathds{E}(2X)$ whereas $\mbox{Var}(X+Y) = 2(1+\rho)\sigma^2 < 4\sigma^2 = \mbox{Var}(2X)$.} Therefore, even if all institutions have similar returns, it can be beneficial to get interconnected. To significantly benefit from the diversification, the variance reduction must be high enough: situations where the specific assets are not almost non-risky and/or where the correlation is negative are prone to yield a complete network structure. These findings will be confirmed numerically in the next section. The limited liability feature can modify the balance between shares and debt securities. 

When considering risk-averse banks, the diversification motive generates complete financial networks, such as those usually observed among major institutions. Therefore, we cannot rule out diversification as explaining interconnections between key financial players.\footnote{Note that our approach has no clue on the relevance of the other motives mentioned in Introduction. We simply show that diversification provides consistent results with empirical observations.}

%%%%%%%%%%%%%%%%%%%%%%%%%%%%%%%%%%%%%%%%%%%%%%%%%%%%%%%%%%%%%%%%%%%%%%%%%%%%%%%%%%%%%%%%%%%%%%%%%%%%%%%%%%%%%%%%%%%%%%%%%%%%%%%%%%%%%%%%%%%%%%%%%%%%%%%%%%%%%%%%%%%%%%%%%%%%%%%%%%
%%%%%%%%%%%%%%%%%%%%%%%%%%%%%%%%%%%%%%%%%%%%%%%%%%%%%%%%%%%%%%%%%%%%%%%%%%%%%%%%%%%%%%%%%%%%%%%%%%%%%%%%%%%%%%%%%%%%%%%%%%%%%%%%%%%%%%%%%%%%%%%%%%%%%%%%%%%%%%%%%%%%%%%%%%%%%%%%%%
%%%%%%%%%%%%%%%%%%%%%%%%%%%%%%%%%%%%%%%%%%%%%%%%%%%%%%%%%%%%%%%%%%%%%%%%%%%%%%%%%%%%%%%%%%%%%%%%%%%%%%%%%%%%%%%%%%%%%%%%%%%%%%%%%%%%%%%%%%%%%%%%%%%%%%%%%%%%%%%%%%%%%%%%%%%%%%%%%%
%%%%%%%%%%%%%%%%%%%%%%%%%%%%%%%%%%%%%%%%%%%%%%%%%%%%%%%%%%%%%%%%%%%%%%%%%%%%%%%%%%%%%%%%%%%%%%%%%%%%%%%%%%%%%%%%%%%%%%%%%%%%%%%%%%%%%%%%%%%%%%%%%%%%%%%%%%%%%%%%%%%%%%%%%%%%%%%%%%

\section{Network formation and simulation results}
\label{Networkform}
 
In this section, we derive simulation results in order to assess the relevance of the diversification motive for the financial network formation. First, we present the specification that we use and our calibration strategy. Second, we develop a network formation process taking advantage of the strong and tractable theoretical results obtained in the previous section. Then, optimal choices for one financial institution and regarding the whole network are analyzed.

\subsection{Specifications}
For the sake of simplicity, two banks are considered, i.e. $n=2$. Each institution is endowed with a capital amount of $1$, i.e. $K_i^{(0)}=1, i=1,2$. Both institutions have $x \mapsto \log(x)$ as utility function. An initial capital of $1$ implies that the equity value $K_i^{(1)},i=1,2,$ at the optimization horizon is about $1$. Therefore, the objective function is close to be linear over the most likely area, meaning that the banks are only slightly risk-averse.

In order to properly understand the main features of our model, we exclude $A\ell$ and $\omega$ from the control variables. The interest rates paid by the two financial institutions, denoted by $r_{D,1}$ and $r_{D,2}$ are therefore fixed. Moreover, the risk-free interest rate is set to zero: $r_{rf}=0$.

%The dynamic of external asset gross returns is assumed log-normal.

Finally, note that the expectations are computed using Monte-Carlo techniques; $100 \ 000$ simulations ensure a good precision.

\subsection{Calibration strategy}
The gross returns on external assets follow a bivariate log-normal distribution:
\begin{equation}
	\left(
		\begin{array}{c}
			\log\left( \Frac{Ax_1^{(1)} }{ Ax_1^{(0)} } \right) \\
			\log\left( \Frac{Ax_2^{(1)} }{ Ax_2^{(0)} } \right)		
		\end{array}
	\right)
	\sim
	\mathcal{N}
	\left[
		\left(
		\begin{array}{c}
			\mu_1 \\
			\mu_2		
		\end{array}
		\right)
		,
		\left(
		\begin{array}{cc}
			\sigma_1^2 					& \rho_{1,2}\sigma_1\sigma_2  \\
			\rho_{1,2}\sigma_1\sigma_2 	& \sigma_2^2   	
		\end{array}
		\right)
	\right].
\end{equation}
In order to calibrate the mean parameter, we consider the income statement in the Consolidated Financial Statements for BHCs (reporting form FR Y-9C) for banks over \$10 billion. Between 12/31/2010 and 12/31/2012, the (annual) net income varies from 0.51\% to 0.71\% of the total assets. We round this value, considering that on average the net income of our banks is equal to 1\%. Over the same period, the interest expenses represent between 0.74\% and 1.07\% of the total assets.\footnote{In this paper, we consider that the total assets are equal to the earning assets and to the average assets.} We basically consider that the cost of debt ($r_{D,1}$ and $r_{D,2}$) varies between 0\% and 1\%.
%, considering that the risk-free rate is set to zero. 
Finally, the expected return of the external assets for Bank $i$ is equal to $1\% + r_{D,i} \ell_i$, where $\ell_i$ is the Bank $i$'s ratio of debt over total assets. For the variance parameter, a probability of default of 0.1\% is in line with the current rating of major banks. We combine the informations relative to the net income and the probability of default to compute the parameters $\mu_i$ and $\sigma_i, i=1,2$ (see Appendix \ref{AppendixDynamicCalibration} for details).
The parameter $\rho$ lies between -0.9 to 0.9. A negative $\rho$ can be interpreted as a sign of competition between the two banks or as the fact that banks operate in different markets (or geographical areas). Meanwhile, a positive $\rho$ could be interpreted as an underlying common factor affecting both banks.

We consider the Basel 2 regulation. This regulation does not provide a unique set of values for the risk weights $k_i^A$, $k^{\pi}$ and $k^{\gamma}$. If the external assets correspond to a retail activity (i.e. loans to households), loans to unrated firms (i.e. small firms) or quoted shares, the required capital is equal to 6\%, 8\% or 23.2\% of the total exposure, respectively. For debt securities issued by banks, the required capital is equal to 1.6\% (when AAA or AA rated) or 4\% (when A rated).  Lastly, as discussed in \cite{repullo2013procyclical}, there is a factor between the regulatory capital and the (accounting) equity, that varies from 1 to 2. For the sake of simplicity, we consider that the regulatory capital is either equal to the equity or to a half of the equity. Bottom line, we have 8 possible sets of risk weights.

\subsection{Discussion about the pricing of shares and debt securities}
\label{Chapnetworls_Subsec_Discussion_Pricing}

%Before showing the results, let us give some necessary specifications and comments about the model.
Recall that the position of Bank 1 at time $t=1$ is as follows if Bank 2 is solvent:
\begin{align}
    \label{Eq_Position_Equilibrium}
    P_1^{(1)} &= Ax_1^{(0)}(1+r_1) + \pi_{12} [ \kappa_2 Ax_2^{(0)}(1+r_2)-L_2^*(1+r_{D,2}) ]+ \gamma_{12} L_2^*(1+r_{D,2}) \nonumber
    \\& - \left(Ax_1^{(0)} + \pi_{12} \mathcal{K}_2^{(0)} +  \gamma_{12} \mathcal{L}_2^{(0)} - K_2^{(0)} \right)(1+r_{D,1}) \nonumber
    \\& = Ax_1^{(0)}(r_1-r_{D,1})+ \pi_{12} \left[ \kappa_2 Ax_2^{(0)}(1+r_2)-L_2^*(1+r_{D,2})-\mathcal{K}_2^{(0)} (1+r_{D,1}) \right] \nonumber
    \\& + \gamma_{12} \left[ L_2^*(1+r_{D,2} ) -\mathcal{L}_2^{(0)}(1+r_{D,1}) \right]+K_2^{(0)} (1+r_{D,1}).
\end{align}
The terms $\mathcal{K}_2^{(0)}$ and $\mathcal{L}_2^{(0)}$ are respectively the market values of the share securities and debt securities issued by Bank 2 at time $t=0$.
In a complete market and with the usual assumptions, the price of an asset would be the discounted expected payoff under the risk-neutral probability:
    $$ \mathcal{K}_2^{(0)}= \frac{\mathds{E}_{RN} [ K_2^{(1)} | \mathcal{F}_0 ]}{1+r_{rf}},$$
where $\mathcal{F}_0$ denotes the available information at time $t=0$.
Since 
$ \forall t \in [0,1], K_2^{(t)} = \max  \left[ \kappa_2 Ax_2^{(t)} - L_2^*(1+r_{D,2}), 0 \right]$, $K_2$ appears as a call option whose underlying is $Ax_2$ and whose strike is $L_2^*(1+r_{D,2})$. However, since $Ax$ is the price of an illiquid asset, it is difficult to argue that there exists a unique probability (the risk-neutral probability) that makes $Ax$ a martingale. Therefore, we choose to consider that the price is the discounted expected payoff under the physical probability. The corresponding prices $\mathcal{K}_2^{(0)}$ and $\mathcal{L}_2^{(0)}$ are given in the following proposition.
\begin{prop}
\label{Prop_PricingMarketValue}
If we assume that $\log\left(Ax_i^{(1)} / Ax_i^{(0)} \right) \sim \mathcal{N}(\mu_i,\sigma_i^2)$, then the expected equity and debt values of Bank $i$ are
    \begin{align*}
        \mathds{E}_0\left(K_i^{(1)} \right)   & = \kappa_i Ax_i^{(0)} e^{\mu_i  + \frac{1}{2}\sigma_i^2 } \left[ 1 - \Phi(\tilde{u}-\sigma_i) \right]-   L_i^{*(1)} \left[ 1 - \Phi(\tilde{u}) \right], \\
        \mathds{E}_0\left(L_i^{(1)}\right)    & = \kappa_i Ax_i^{(0)} e^{\mu_i  + \frac{1}{2}\sigma_i^2 } \Phi\left( \tilde{u}-\sigma_i  \right) +    \left[ 1 - \Phi(\tilde{u}) \right], \\
        \end{align*}
        where $\tilde{u} = \Frac{1}{\sigma_i}\left( \log\left( \Frac{L_i^{*(1)}}{\kappa_i Ax_i^{(0)}} \right) - \mu_i \right)$, $L_i^{*(1)}=L_i^{*(0)} (1+r_{D,i})$ and $\Phi$ is the distribution function of the standard Gaussian variable.
\end{prop}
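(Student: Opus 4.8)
The plan is to recognize that, once the balance sheet of the counterpart is scaled by $\kappa_i$ and the cash position $A\ell_i$ is ignored, equations \eqref{K} and \eqref{Lx} reduce to
$$K_i^{(1)} = \max\bigl[\kappa_i Ax_i^{(1)} - L_i^{*(1)}, 0\bigr], \qquad L_i^{(1)} = \min\bigl[\kappa_i Ax_i^{(1)}, L_i^{*(1)}\bigr],$$
so that $K_i^{(1)}$ is exactly the payoff of a call option written on the log-normal underlying $\kappa_i Ax_i^{(1)}$ with strike $L_i^{*(1)}$. Both expectations are therefore standard Black--Scholes-type integrals against the Gaussian density, and the only real work is a single completing-the-square computation.

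First I would set $X = \log(Ax_i^{(1)}/Ax_i^{(0)})$ and $Z = (X - \mu_i)/\sigma_i \sim \mathcal{N}(0,1)$, so that $\kappa_i Ax_i^{(1)} = \kappa_i Ax_i^{(0)} e^{\mu_i + \sigma_i Z}$. The event $\{K_i^{(1)} > 0\}$, i.e. $\{\kappa_i Ax_i^{(1)} > L_i^{*(1)}\}$, is precisely $\{Z > \tilde u\}$ with $\tilde u$ as defined in the statement; this is how the threshold $\tilde u$ enters. Splitting
$$\mathds{E}_0\bigl[K_i^{(1)}\bigr] = \kappa_i Ax_i^{(0)}\, \mathds{E}\bigl[e^{\mu_i + \sigma_i Z} \mathds{1}_{\{Z > \tilde u\}}\bigr] - L_i^{*(1)}\, \mathds{P}(Z > \tilde u),$$
the second term is immediately $L_i^{*(1)}[1 - \Phi(\tilde u)]$.

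The key step is the first term, and it is where I expect the only real effort. I would complete the square in the exponent, $\sigma_i z - \tfrac12 z^2 = \tfrac12\sigma_i^2 - \tfrac12(z - \sigma_i)^2$, which factors out $e^{\mu_i + \frac12\sigma_i^2}$ and, after the change of variable $w = z - \sigma_i$, turns the truncated integral $\int_{\tilde u}^{\infty}$ into $\int_{\tilde u - \sigma_i}^{\infty}$ of the standard normal density. This yields $\kappa_i Ax_i^{(0)} e^{\mu_i + \frac12\sigma_i^2}[1 - \Phi(\tilde u - \sigma_i)]$ and hence the stated formula for $\mathds{E}_0[K_i^{(1)}]$. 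Everything else is routine bookkeeping.

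For the debt I would avoid a second integral by exploiting the pathwise identity $\max(a-b,0) + \min(a,b) = a$, which gives $K_i^{(1)} + L_i^{(1)} = \kappa_i Ax_i^{(1)}$. Taking expectations and using the log-normal mean $\mathds{E}_0[Ax_i^{(1)}] = Ax_i^{(0)} e^{\mu_i + \frac12\sigma_i^2}$, I would subtract the expression just obtained for $\mathds{E}_0[K_i^{(1)}]$; since $[1 - \Phi(\tilde u - \sigma_i)]$ and $\Phi(\tilde u - \sigma_i)$ are complementary, this produces $\mathds{E}_0[L_i^{(1)}] = \kappa_i Ax_i^{(0)} e^{\mu_i + \frac12\sigma_i^2}\Phi(\tilde u - \sigma_i) + L_i^{*(1)}[1 - \Phi(\tilde u)]$. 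This route makes the factor $L_i^{*(1)}$ in the second term explicit, which suggests that its omission in the displayed debt formula is a typographical slip; the same value is recovered, as a cross-check, by repeating the truncated-Gaussian computation directly on $\min[\kappa_i Ax_i^{(1)}, L_i^{*(1)}]$.
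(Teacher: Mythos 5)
Your proposal is correct. For the equity value your argument coincides with the paper's proof: both write $\kappa_i Ax_i^{(1)} = \kappa_i Ax_i^{(0)} e^{\mu_i+\sigma_i Z}$ with $Z$ standard Gaussian, identify the solvency event with $\{Z>\tilde{u}\}$, split the expectation into the exponential part and the strike part, and evaluate the truncated exponential integral by completing the square and shifting $z \mapsto z-\sigma_i$, exactly as in the paper. Where you genuinely diverge is the debt: the paper computes $\mathds{E}_0\left(L_i^{(1)}\right)$ by a second direct integration, splitting $\min\left[\kappa_i Ax_i^{(1)}, L_i^{*(1)}\right]$ as $\int_{-\infty}^{\tilde{u}}\kappa_i Ax_i^{(0)}e^{\mu_i+\sigma_i u}\phi(u)\,du + L_i^{*(1)}\int_{\tilde{u}}^{+\infty}\phi(u)\,du$ and reusing ``the same trick,'' whereas you invoke the pathwise identity $\max(a-b,0)+\min(a,b)=a$, i.e. $K_i^{(1)}+L_i^{(1)}=\kappa_i Ax_i^{(1)}$, take expectations using the log-normal mean $\kappa_i Ax_i^{(0)}e^{\mu_i+\frac{1}{2}\sigma_i^2}$, and subtract the equity formula. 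Your route buys economy (one truncated-Gaussian computation serves both formulas) and makes the balance-sheet parity explicit; the paper's route keeps each quantity self-contained, and running both, as you suggest, provides an independent cross-check. Finally, you are right that the displayed debt formula in the statement omits the factor $L_i^{*(1)}$ multiplying $\left[1-\Phi(\tilde{u})\right]$: the paper's own proof derives $L_i^{*(1)}\left[1-\Phi(\tilde{u})\right]$, confirming that this is a typographical slip in the proposition rather than a substantive discrepancy.
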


In order to understand some implications of our pricing choice, consider a situation where all returns are deterministic and $r_2>r_{D,2}$. In such a framework, we have
$$\mathcal{K}_2^{(0)}=\frac{\kappa_2 Ax_2(1+r_2)-L_2^*(1+r_{D,2})}{1+r_{rf}} \mbox{~~and~~} \mathcal{L}_2^{(0)}= \frac{L_2^*(1+r_{D,2})}{1+r_{rf}}.$$
Therefore, injecting these prices in \eqref{Eq_Position_Equilibrium}, we obtain
\begin{align}
    \label{Eq_Position_Physical_Probability}
    P_1^{(1)} &=Ax_1(r_1-r_{D,1})+ \pi_{12} \left[ (\kappa_2 Ax_2(1+r_2)-L_2^*(1+r_{D,2})) \left( 1-\frac{1+r_{D,1}}{1+r_{rf}} \right) \right] \nonumber
    \\& + \gamma_{12} \left[ L_2^*(1+r_{D,2} )  \left( 1-\frac{1+r_{D,1}}{1+r_{rf}} \right) \right]+K_2^{(0)} (1+r_{D,1}).
\end{align}
Generally, we have $r_{D,1}>r_{rf}$, meaning that the factors of $\pi_{12}$ and $\gamma_{12}$ are negative and thus that the net yields on shares and debt securities are negative. Therefore, for a risk-neutral agent (i.e. not interested in variance reduction), it would not be optimal to invest in shares and debt securities. That stems partly from the fact that we have priced these instruments using the physical probability. Under the latter probability, the shares and debt securities yield in average the risk-free rate. This feature could of course be challenged. Note that we should pay attention to the interpretations based on \eqref{Eq_Position_Physical_Probability} since \eqref{Eq_Position_Physical_Probability} only gives the expression of the position in a very simplified case. Equation \eqref{Eq_Position_Physical_Probability} must only be considered as an indication.

Contrary to the share and debt security prices, the initial value of $Ax_1$ does not take the future returns into account. As we already mentioned, $Ax_1$ is an illiquid asset that cannot be exchanged on the market. Therefore, the assumption of absence of arbitrage is not necessarily satisfied and we price $Ax_1$ using its book value. Since generally $r_1 > r_{D,1}$, the specific asset $Ax_1$ provides a positive return. This is logical since getting positive returns via maturity transformation constitutes the core business of banks. However, in the pricing of $\mathcal{K}_2^{(0)}$, we consider the future returns of $Ax_2$. This asymmetry can be discussed but it is difficult to find an ideal solution given the close link between a market asset ($\mathcal{K}_2^{(0)}$) and an illiquid asset ($Ax_2$) in our model.

\subsection{Methodology for the network formation}

The optimization programs $\mathcal{P}_i$ and $\mathcal{P}_i^{'}$ presented in Section \ref{TheoreticalProp} allow computing the balance sheet of an institution, knowing the state of the others. Here the aim is to build a complete network using this individual optimization program. To this purpose, we operate in a sequential way until an equilibrium in the network is reached.

We propose to use an iterative game. At each step, one institution optimizes its balance sheet taking into account the state of the network obtained at the previous step.
Thanks to Corollary \ref{PropChoix}, there exists only one network at each step. The procedure is as follows\footnote{Note that this formation process can be applied in the general framework of Section \ref{TheoreticalProp} but is here presented using the previously mentioned specification.}:
\begin{enumerate}
    \item Bank $1$ optimizes its balance sheet on $Ax_1$ and $L_1^*$. Quantities $\pi_{1,2}$ and $\gamma_{1,2}$ are forced to be equal to zero since at the initialization step, Bank $2$'s balance sheet is totally unknown;
    \item Bank $2$ optimizes its balance sheet on $Ax_2$, $L_2^*$, $\pi_{2,1}$ and $\gamma_{2,1}$ given Bank $1$'s balance sheet from step 1;
    \item Bank $1$ optimizes its balance sheet on $Ax_1$ ,$L_1^*$, $\pi_{1,2}$ and $\gamma_{1,2}$ given Bank $2$'s balance sheet from step 2. $\pi_{1,2}$ and $\gamma_{1,2}$ are optimized for the first time;
    \item Bank $2$ optimizes its balance sheet on $Ax_2$, $L_2^*$, $\pi_{2,1}$ and $\gamma_{2,1}$ given Bank $1$'s balance sheet from step 3;
    \item Bank $1$ optimizes its balance sheet on $Ax_1$ ,$L_1^*$, $\pi_{1,2}$ and $\gamma_{1,2}$ given Bank $2$'s balance sheet from the previous step;
    \item and so on.
\end{enumerate}
For further details, see Appendix \ref{AppendixAlgoNetworkFormation}.

Theoretically, this procedure may be endless. However, in less than 10 steps, the variations of the control variables from one step to the next are lower than 1\% and we consider that the final situation constitutes an equilibrium. Moreover, if we accept the numerical argument for the existence of the limit-network, we can affirm its uniqueness. Indeed, if at each step the network is unique, then its final state is necessarily unique. It is interesting to note that this method is inspired by the classical methodology used to determine a Nash equilibrium (in the sense that no institution has any interest in deviating from its current state). However, further investigations would be required to know if the network obtained by our method effectively corresponds to a Nash equilibrium.

Last but not least, it is important to check that the obtained network is consistent in the sense that it satisfies \eqref{K} and \eqref{Lx}. Firstly, at time $t=0$, all banks considered in the network are solvent; otherwise they would disappear from the network. That means that the initial debt equals the contractual one: $L_i^{(0)}=L_i^*, i=1, \dots, n$. Therefore, \eqref{Lx} is automatically satisfied for each institution. Moreover, at each step, being a constraint of the optimization program, \eqref{K} is satisfied for the bank optimizing its balance sheet. If preliminary, this step has impacts on the other banks' balance sheets and \eqref{K} is not exactly satisfied anymore for them. Nevertheless, after some iterations, the network does not evolve from one step to the next (due to the convergence), implying that \eqref{K} is satisfied for all institutions. These two points show that the obtained network is actually consistent.

\medskip
\noindent This sequential algorithm could appear a little artificial but it is actually close to what happens in reality. An example of a real formation process of a network is as follows:
\Ben
    \item Consider an initial situation where there is no bank;
    \item A first bank, denoted by $B_1$, is created during year $t=0$. Since there are no other banks, there are no possible interconnections. Thus, $B_1$ optimizes $Ax_1$ and $L_1$. On January 1st of year $t=1$, $B_1$ publishes its balance sheet;
    \item Imagine that on January 3rd, a second bank $B_2$ is created. $B_2$ knows $Ax_1$ and $L_1$ and then can solve the optimization program to determine $Ax_2$,  $L_2$, $\pi_{2,1}$ and $\gamma_{21}$. Once proportions $\pi_{2,1}$ and $\gamma_{2,1}$ have been determined, $B_2$ can buy on the secondary market shares and bonds issued by $B_1$ in these proportions;
    \item On June 1st, $B_1$ and $B_2$ publish their balance sheets (apart from interconnections). Since the balance sheet of $B_1$ did not evolve since January 1st, $B_2$ has no new optimization to carry out. On the other hand, $B_1$ discovers for the first time informations relative to $B_2$: $Ax_2$ and $L_2$. Then $B_1$ optimizes its balance sheet and thus obtains $Ax_1$, $L_1$, $\pi_{12}$ and $\gamma_{12}$. $B_1$ can buy on the secondary market shares and bonds issued by $B_2$;
    \item On January 1st of year $t=2$, balance sheets of $B_1$ and $B_2$ are published. The balance sheet of $B_2$ did not change and thus $B_1$ has no optimization to do. On the other hand, $B_2$ must adapt to the new balance sheet of $B_1$;
    \item and so on.
\Een
After such iterations, one may think that there is convergence to an equilibrium in the network. Balance sheets of $B_1$ and $B_2$ do not evolve a lot from one step to the next.

\subsection{Simulation results about the optimal choice for one institution}
Let us here focus on the second step of the iterative game where Bank 2 optimizes its whole balance sheet (knowing the choice of Bank 1 at step 1). We assume that Bank $1$'s external assets are equal to 10. We present the sensitivity of the optimal choices of external assets $Ax_2$, nominal debt $L_2^*$ and interconnections $\pi_{2,1}$ and $\gamma_{2,1}$, with respect to the regulatory parameters and correlation $\rho$. Our computations were carried out under various debt-issuing conditions (not costly with $r_{D,1}=r_{D,2}=r_{rf}=0$, both costly with $r_{D,1}=r_{D,2}=1\% > r_{rf}=0$ and only one costly with $r_{D,1}=1\% > r_{D,2}=r_{rf}=0$) and we observe that the results are independent of these conditions. In each set-up, we consider the 8 sets of risk-weights and we let the correlation parameter vary between $-0.9$ and $+0.9$.

The corresponding results are summarized in Table \ref{TabStylizedRestultsIndividual}. First, we observe that interconnections based on debt securities are never used. A direct consequence is that the risk weight on debt, $k_{\gamma}$, has no impact on the balance sheet and thus does not appear in Table \ref{TabStylizedRestultsIndividual}. Second, interconnections based on share securities are used only when the correlation is lower than -0.3 (independently of the interest rates) and when the associated risk weight is equal to 23.2\%. They linearly decrease from about 45\% to 0\% between $\rho=-0.9$ and $\rho=-0.3$. Third, the solvency constraint is binding. The optimal external assets represent about $1/k^A$. The last row-block displays the ratio of interbank assets over the total assets: when interconnections are present, their proportion in the total assets is in line with the stylized facts.

These results could be interpreted as follows. First, the bank plays its core business: it invests as much as it can in its external assets. Then, if the regulation is not too strict and if the competitor's results are sufficiently anti-correlated, the bank opts for diversification: it slightly lowers its external assets to buy share securities issued by the competitor. Debt securities are not used since their net returns are negative (as a consequence of the pricing specification described in Section \ref{Chapnetworls_Subsec_Discussion_Pricing}) and "nearly" deterministic (due to the low probability of default).
% in the case of an adverse shock for the considered bank.

\begin{table}[!h]
    \centering
    \begin{tabular}{cccccc}
             & $k^{\pi}$    & $k^A$     & $\rho=-0.9$ & $\rho=-0.6$ & $\rho=-0.3$   \\ \hline
        $Ax$ & 23.2\%       & 6\%       & 14          &   15        &   16          \\
             & 23.2\%       & 8\%       & 11          &   12        &   11          \\
             & 46.4\%       & 12\%      & 8           &   8         &    8          \\
             & 46.4\%       & 16\%      & 6           &   6         &    6          \\ \hline
       $\pi$ & 23.2\%       & 6\%       & 45          &   25        &   0          \\
       (\%)  & 23.2\%       & 8\%       & 45          &   25        &   0          \\
             & 46.4\%       & 12\%      & 0           &   0         &    0          \\
             & 46.4\%       & 16\%      & 0           &   0         &    0          \\ \hline
    $\gamma$ & 23.2\%       & 6\%       & 0           &   0         &   0          \\
       (\%)  & 23.2\%       & 8\%       & 0           &   0         &   0          \\
             & 46.4\%       & 12\%      & 0           &   0         &    0          \\
             & 46.4\%       & 16\%      & 0           &   0         &    0          \\ \hline
    $IBA/TA$ & 23.2\%       & 6\%      & 3.1	      & 1.6         &	0       \\
	   (\%)  & 23.2\%       & 8\%      & 3.9	      & 2.0         &	0          \\
	         & 46.4\%       & 12\%      & 0           &	0           & 	0       \\
	         & 46.4\%       & 16\%      & 0	          & 0           & 	0 
    \end{tabular}
    \caption{Stylized results for the optimal choice of one institution, when $r_{D,1}=r_{D,2}=0.$}
    \label{TabStylizedRestultsIndividual}
\end{table}

\subsection{Iterative game results}
The iterative game reaches an equilibrium in less than 5 steps. The features pictured in the analysis of the behavior of one institution are still present. Especially, results are robust to the debt-issuing conditions.

Both institutions have the same balance sheet, whose composition is given in Table \ref{TabStylizedRestultsIterativeRiskFreeRate}. Results are very similar to those for one institution only (Table \ref{TabStylizedRestultsIndividual}). In particular, the proportion of interbank assets in the total assets is in agreement with the sylized facts. 
Note that for $\rho=-0.9$ and $\rho=-0.6$, the values of $\gamma_{12}$ and $\gamma_{21}$ are close to $10^{-4}$. However, we have reported $0$ since such low values do not have any economic meaning.

%We obtain relative high values for the cross

Let us state that these results have been obtained using $\kappa_i=1, i=1,2$, in order to avoid numerical instability. Indeed, if the values of $\kappa_i$ become too large, it makes no sense anymore to assume that the asset side of the other banks is only composed of their external assets.

\begin{table}[!h]
    \centering
    \begin{tabular}{cccccc}
             & $k^{\pi}$    & $k^A$     & $\rho=-0.9$ & $\rho=-0.6$ & $\rho=-0.3$   \\ \hline
        $Ax$ & 23.2\%       & 6\%       & 15          &   15       &   16          \\
             & 23.2\%       & 8\%       & 11          &   12      &   12          \\
             & 46.4\%       & 12\%      & 8           &   8         &    8          \\
             & 46.4\%       & 16\%      & 6           &   6         &    6          \\ \hline
       $\pi$ & 23.2\%       & 6\%       & 70        &     45     &   16          \\
       (\%)  & 23.2\%       & 8\%       & 60        &     35     &   6         \\
             & 46.4\%       & 12\%      & 0           &   0         &    0          \\
             & 46.4\%       & 16\%      & 0           &   0         &    0          \\ \hline
    $\gamma$ & 23.2\%       & 6\%       & 0           &   0        &   0          \\
       (\%)  & 23.2\%       & 8\%       & 0           &   0        &   0          \\
             & 46.4\%       & 12\%      & 0           &   0         &    0          \\
             & 46.4\%       & 16\%      & 0           &   0         &    0          \\ \hline
    $IBA/TA$ & 23.2\%       & 6\%       & 3.2         &	2.3         & 1 \\
        (\%) & 23.2\%       & 8\%       & 3.6         &	2.4        & 0.5 \\ 
             & 46.4\%       & 12\%      & 0           &   0         &    0          \\ 
             & 46.4\%       & 16\%      & 0           &   0         &    0          \\ 
    \end{tabular}
    \caption{Stylized results for the iterative game, when $r_{D,1}=r_{D,2}=0$.}
    \label{TabStylizedRestultsIterativeRiskFreeRate}
\end{table}

\subsection{Testing the diversification motive}

Regarding the capacity of the diversification motive to account for interconnections, the previous results provide a quantitative assessment completing the qualitative arguments developed in Section 3. The key result is that when returns on specific assets are anti-correlated, diversification leads to interconnections with reasonable size in terms of proportion of the total assets. However, debt securities are never used, meaning that interconnections are only supported by share securities. This portfolio composition contrasts with empirical findings. 

Nevertheless, it is important to emphasize that in our simulation study, the choice of pricing shares and debt securities under the physical probability has large impacts. As explained in Section \ref{Chapnetworls_Subsec_Discussion_Pricing}, it implies that the net yields of shares and bonds are negative. Therefore, in this framework, interconnections only allow for variance reduction but not for gain opportunity. We can expect this feature to be modified if the pricing is done under the risk-neutral probability. Interconnections in both shares and debt securities could then be observed, even for values of $\rho$ larger than $-0.3$. The study of the risk-neutral specification constitutes an ongoing work. In some sense, these two types of specification for the pricing allow disentangling the two aims of the diversification: variance reduction and opportunity.

The latter discussion shows that our model seems promising but that results are very sensitive to the different possible specifications. Moreover, one feature that is not included in our model for the sake of simplicity may partly explain the discrepancy regarding debt securities. In reality, there are additional constraints -apart from the required capital- imposed to large shareholders, such as mandatory public communication. These constraints could discourage banks to invest in shares and could instead lead to higher investments in debt securities. 

\section{Application: impact of interconnectedness regulation}
\label{SecApplications}

The diversification motive has proven an interesting explanation of the bank size (Stylized Fact 1), the network shape (Stylized Facts 2 and 3) and the composition of interconnections (Stylized Fact 4). Previous results concern the initial network resulting from banks' choices based on their expectations.
Due to the endogenous feature of interconnections, we can build some plausible counterfactual scenarios, allowing to analyze the impact of regulation on the welfare at time $t=1$. 

\subsection{Assessing interconnections}
The interconnectedness across financial institutions has become a key concern of supervisors and regulatory authorities. Currently, long-term interbank exposures are covered by two main requirements. The first one concerns the solvency required capital for the interconnections, as for any other assets. It imposes a constraint on the total interbank exposure. The second one concerns "large" single exposures and imposes the risk-weighted exposure to be lower than a fraction of the equity.\footnote{We do not distinguish equity, own funds and regulatory capital.} Currently, the Basel Committee considers that an exposure is large if above 5\% (instead of 10\%) of the equity and to impose that the risk-weighted exposure ($k^{\pi} \pi_{ij} K_j + k^{\gamma} \gamma_{ij} L_j$ for the exposition to Bank $j$) has to be lower than 25\% of the equity \cite[see][Section II and Section IV.B]{BCBS2014}. These requirements are valid for any type of exposure (e.g. corporate or sovereign) but the weights can vary with respect to the type. Moreover, the Basel Committee proposes to introduce tighter rules about interbank exposures for the G-SIBs (Global Systematically Important Banks). An upper bound between 10\% and 15\% instead of 25 \% is in discussion \citep[see][Section V]{BCBS2014}. These tighter rules about interbank exposures aim at reducing the risk of contagion.

These different aspects show that interconnectedness is generally assessed in a negative way. Actually, supervisors are primarily concerned with excessive risks and therefore either analyze the effects of interconnections under depressed scenarios (stress-test approach) or build indicators in order to monitor the current fragility of the financial sectors. In both approaches, interconnectedness usually means contagion only. For instance, the seminal papers about network stress-tests -such as \cite{furfine2003interbank} on US data or \cite{upper2004estimating} on German data- sequentially consider the effects in their national banking sectors of the default of each bank. From their point of view, interconnected banks are likely to trigger defaults or to go bankrupt due to contagion.

Nevertheless, these analyses are not built on counterfactuals. They certainly give informative insights about what could happen within the current network in the case of defaults of some institutions or difficult macroeconomic conditions. However, since the network reaction is not taken into account, such studies do not really provide any clue on the way to obtain a more resilient network structure. Moreover, note that the question of regulation impact has hardly been addressed quantitatively, even in the case of a crystallized network.

The endogenous nature of interconnections in our model precisely allows us to study how the network reacts to tough macroeconomic conditions or to assess the impact of regulation on interbank exposures, for instance of the regulation in discussion at the Basel Committee. In the following, we focus on the impact of regulatory changes. To do so, we consider our 8 sets of regulatory weights associated to interbank exposure ($k^A$, $k^{\pi}$ and $k^{\gamma}$).\footnote{In reality only 4 since with the specification chosen, $k^{\gamma}$ has no impact.} For each specific set, the initial network is derived using our formation process. This step accounts for the diversification motive. Then we simulate returns of the external assets and examine the network at time $t=1$. Let us emphasize that the shocks are properly propagated through the real interconnections.\footnote{Contrary to the assumption -used in the individual optimization program- that banks do not consider interconnections of their counterparts.} The unique set of values $K_i$ and $L_i$ (see Proposition \ref{Prop_Prop2_Gouretal}) is determined using the algorithm described in Appendix \ref{AppendixAlgoWelfareComputation}. This allows us to carry out a fair assessment of contagion. To do so, we build a welfare indicator including an explicit concern for the real economy and examine its sensitivity to the regulatory set of weights. 

\subsection{Welfare analysis}
We adapt the welfare analysis by \cite{repullo2013procyclical} to assess the impact of the regulatory parameters on the real economy.

The contribution of one bank is either negative or positive. When a bank defaults, its contribution is negative and proportional to the loss on its debt. This feature encompasses the cost of deposit insurance. When a bank is solvent, its contribution is the volume of external assets, i.e. the lendings provided to the real economy. This component captures the capacity to finance the real economy. The contribution of Bank $i$ is written
$$
    w_i = - c \  \left(L_i^{*(1)} -  L_i^{(1)} \right)           + Ax_i^{(1)},
$$
where
%$Ax_i^{(1)}$ is the optimized value of external assets after re-optimization at date $1$ period (computed in $t=1^+$),
$c$ is the social cost for deposit insurance (in \cite{repullo2013procyclical}, $c$ varies in $[0, 60\%]$).

Our welfare indicator $W$ is the ratio of the contribution of all banks over the initial lending to the real economy:
$$    W = \frac{ w_1+w_2} {Ax_1^{(0)}+Ax_2^{(0)}}= \frac{ Ax_1^{(1)}+Ax_2^{(1)} -c \left(L_1^{*(1)} -  L_1^{(1)} + L_2^{*(1)} -  L_2^{(1)} \right) }{Ax_1^{(0)}+Ax_2^{(0)}}.
$$
%Figure \ref{FigWelfareAnalysis} reports the evolution of Welfare with respect to $k^{\gamma}$ for $c=0$.
%\begin{figure}[H]
%    \centering
%    \includegraphics[scale=0.7]{Welfare2.png}
%    \caption{Welfare analysis $(c=0)$}
%    \label{FigWelfareAnalysis}
%\end{figure}
For $c=0$, the welfare is given in Table \ref{Chapnetworks_Fig_Welfare}. When there are interconnections, the welfare is higher than 1, indicating an increase of the banking capacity to lend to the real economy. In contrast, when there is no interconnection, the value of the external assets decreases. A complete analysis of the impact of interconnections would require further studies. However, these results suggest that the interconnections stemming from diversification are beneficial for the real economy.
\begin{table}[!h]
    \centering
    \begin{tabular}{cccccc}
                                & $k^{\pi}$    & $k^A$     & $\rho=-0.9$ & $\rho=-0.6$ & $\rho=-0.3$    \\ \hline
        Sum of contributions    & 23.2\%       & 6\%       & 29.9        &   30.9     &  32.4           \\
                                & 23.2\%       & 8\%       & 22.8        &   23.6     &  24.8           \\
                                & 46.4\%       & 12\%      & 15.6        &   15.6     &  15.6           \\
                                & 46.4\%       & 16\%      & 11.9        &   11.9     &  11.9           \\ \hline
        Welfare (\%)            & 23.2\%       & 6\%       & 101.0       &  101.0     & 101.0           \\
                                & 23.2\%       & 8\%       & 101.0       &  101.0     & 100.8           \\
                                & 46.4\%       & 12\%      & 93.4        &  93.4      & 93.4            \\
                                & 46.4\%       & 16\%      & 95.6        &  95.6      & 95.5
    \end{tabular}
    \caption{Welfare.}
    \label{Chapnetworks_Fig_Welfare}
\end{table}

%The obtained welfare curve is first strictly increasing and then strictly decreasing. A low weight for interbank debt holdings $k^{\pi}=0$ leads to strong interconnections and thus to a relatively high risk of contagion. Increasing $k^{\pi}$ is a way to decrease the contagion. However, it also decreases the diversification capacity. Above a certain threshold, the second effect prevails and the welfare decreases. 

%%%%%%%%%%%%%%%%%%%%%%%%%%%%%%%%%%%%%%%%%%%%%%%%%%%%%%%%%%%%%%%%%%%%%%%%%%%%%%%%%%%%%%%%%%%%%%%%%%%%%%%%%%%%%%%%%%%%%%%%%%%%%%
%%%%%%%%%%%%%%%%%%%%%%%%%%%%%%%%%%%%%%%%%%%%%%%%%%%%%%%%%%%%%%%%%%%%%%%%%%%%%%%%%%%%%%%%%%%%%%%%%%%%%%%%%%%%%%%%%%%%%%%%%%%%%%
%%%%%%%%%%%%%%%%%%%%%%%%%%%%%%%%%%%%%%%%%%%%%%%%%%%%%%%%%%%%%%%%%%%%%%%%%%%%%%%%%%%%%%%%%%%%%%%%%%%%%%%%%%%%%%%%%%%%%%%%%%%%%%
%%%%%%%%%%%%%%%%%%%%%%%%%%%%%%%%%%%%%%%%%%%%%%%%%%%%%%%%%%%%%%%%%%%%%%%%%%%%%%%%%%%%%%%%%%%%%%%%%%%%%%%%%%%%%%%%%%%%%%%%%%%%%%
\section{Concluding remarks}

A diversification motive appears as a sound candidate to account for long-term exposures across financial institutions. The first aim of this paper is to test this assumption.

To this purpose, we build a model of financial network in which the balance sheets of all institutions (including interconnections) are totally endogenous apart from the equity.
The network formation process involves two components. The first one explains how a bank optimizes its balance sheet knowing the state of the other banks in the network. We prove the existence and partial uniqueness of the solution of this optimization. The second part shows how to form the network using the individual optimization program. The existence and unicity of this network are shown by numerical arguments.
An important feature of our model is its ability to account for the main features of the banking and the insurance business with the same set of parameters. Nevertheless, we focus in this paper on the banking business.

Secondly, the characteristics of the resulting network are compared to features usually observed.
As to the shape of the network, we theoretically find that the diversification motive leads to a network close to those observed across big banks. Regarding the size and support of the interconnections, we show that a correct magnitude is reached under standard calibration. Moreover, the results are sensitive to some specifications, for example the pricing method of shares and debt securities.

The fact that our network is totally endogenous allows studying how it adapts to regulatory changes. Thus, the second aim is to apply our model to fairly assess the impact of regulation on interbank exposures. To this purpose we study the evolution of the welfare with respect to the regulatory weights $k^A$ and $k^{\pi}$. We observe that the welfare is higher under regulations favoring interconnections.

Ongoing work includes the complete study in the case of insurance companies and the extension to short-term interconnections. An exhaustive sensitivity analysis of the obtained network with respect to macroeconomic parameters like the returns's means as well as other specifications -e.g. concerning the pricing of shares and debt securities- are also under study. Finally, a simulation exercise in the case of 3 or 4 banks would also be of great interest.

\newpage
\section*{Acknowledgements}
We are very grateful to Paul Embrechts for his very detailed and useful comments and suggestions. We also thank Christian Gouri\'eroux, Claire Labonne, Christian Y. Robert and the participants of the French Financial Association (AFFI) Summer Conference 2013 in Lyon, the 3${}^{rd}$ International Conference of the Financial Engineering and Banking Society 2013 in Paris, the 5${}^{th}$ International Conference of International Finance and Banking Association (IFABS) 2013 in Nottingham, the 12${}^{th}$ International Conference on Credit Risk Evaluation (CREDIT) 2013 in Venise, the Workshop Trade and Network 2013 in Leuven, the BFMI Conference 2013 in Surrey, the 7${}^{th}$ International Conference on Computational and Financial Econometrics (CFE) 2013 in London and the 7${}^{th}$ Risk Forum 2014 in Paris, for their useful comments and suggestions. Erwan Koch has been partially financially supported by the project MIRACCLE-GICC. He also would like to thank RiskLab at ETH Zurich and the Swiss Finance Institute for financial support.

\newpage
\appendix
\section{Example of public information on banks' balance sheets}
\label{AppendixExcerptBHC}
\begin{figure}[!h]
    \centering
    \includegraphics[scale=0.5]{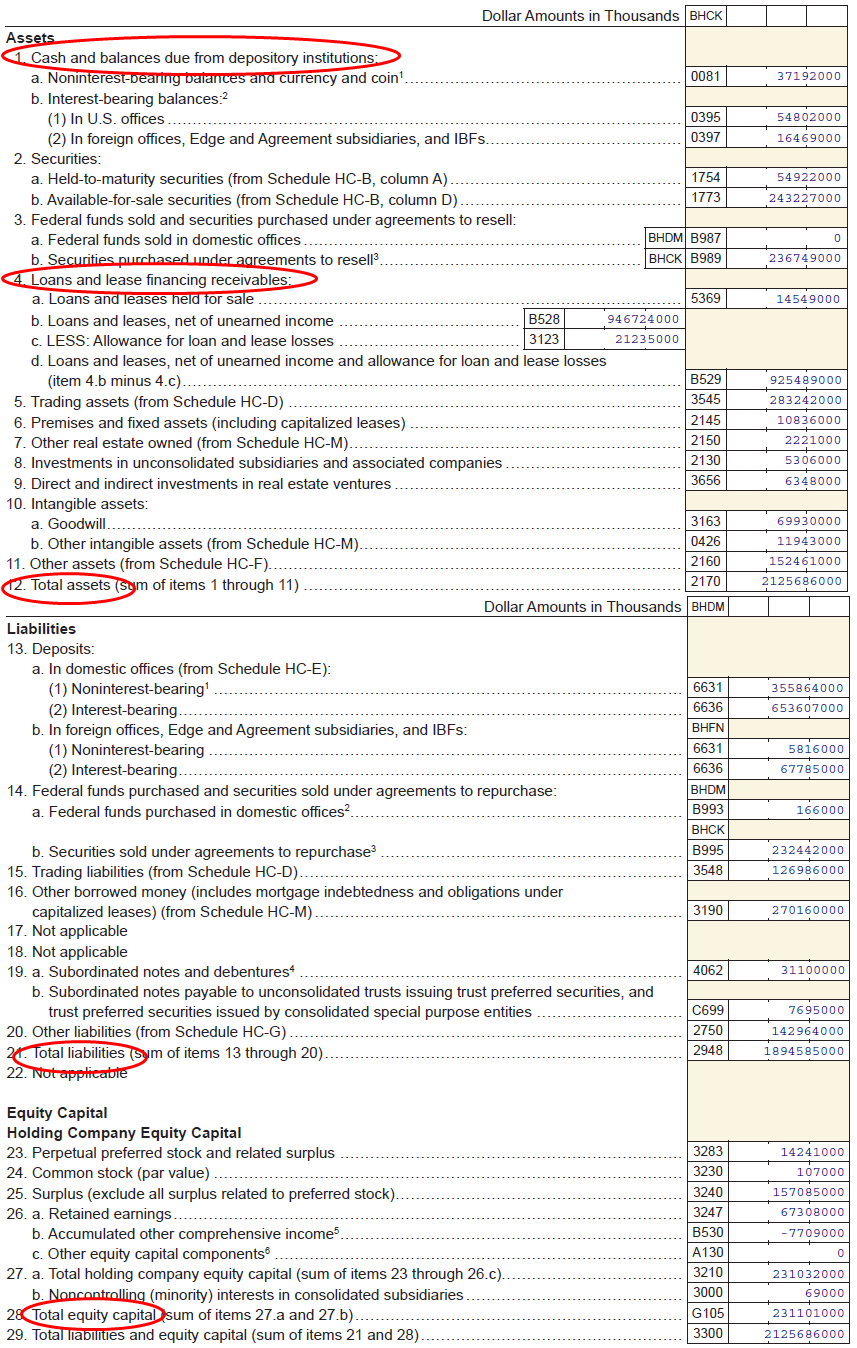}
    \caption{Excerpt of the Consolidated Financial Statements for BHCs of Bank of America at 06/30/2013. Source: www.ffiec.gov.}
    \label{FigExBSBoA}
\end{figure}

\section{The model of \cite{gourieroux2012bilateral}}

In this part, we expose the model of \cite{gourieroux2012bilateral}, that provides the conditions defining an equilibrium between $n$ financial institutions intertwined through shares and debt securities.

\subsection{Existence and uniqueness of the equilibrium}
\label{Appendix_Prop2_Gouretal}

\begin{prop}
\label{Prop_Prop2_Gouretal}
Let us denote by $\Mb{K}=(K_i)_{i =1, \dots, n}$, $\Mb{L}=(L_i)_{i =1, \dots, n}$, $\Mb{L^*}=(L_i^*)_{i =1, \dots, n}$, $\Mb{Ax}=(Ax_i)_{i =1, \dots, n}$ and $\boldsymbol{A\ell}=(A\ell_i)_{i =1, \dots, n}$.
There exists a unique liquidation equilibrium, that is a unique set of values for $\Mb{K}$ and $\Mb{L}$ for any given values of $\Mb{L^*}$, $\Mb{Ax}$, $\boldsymbol{A\ell}$ if for all $i,j=1, \dots, n$:
\begin{itemize}
    \item[$\bullet$] $(A1')$  we have $\pi_{i,j} \geq 0$, $\gamma_{i,j} \geq 0$;
    \item[$\bullet$] $(A2')$  we have $Ax_i \geq 0$, $A\ell_i \geq 0$, $L_i^* \geq 0$;
    \item[$\bullet$] $(A3')$  we have $\Sum^n_{i=1} \pi_{i,j} < 1 $, $\Sum^n_{i=1} \gamma_{i,j} < 1 $.
\end{itemize}
\end{prop}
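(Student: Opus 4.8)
The plan is to recast the $2n$ equilibrium equations \eqref{K}--\eqref{Lx} as a single fixed-point problem and to show that the governing operator is a contraction. For $(\Mb{K},\Mb{L}) \in \mathds{R}_+^{2n}$, I would write $A_i = \sum_{j=1}^n \pi_{i,j} K_j + \sum_{j=1}^n \gamma_{i,j} L_j + A\ell_i + Ax_i$ for the total asset value of Institution $i$ and define $\Phi(\Mb{K},\Mb{L}) = (\tilde{\Mb{K}}, \tilde{\Mb{L}})$ by $\tilde K_i = \max(A_i - L_i^*, 0)$ and $\tilde L_i = \min(A_i, L_i^*)$. By $(A1')$ and $(A2')$, $A_i \geq 0$ whenever $(\Mb{K},\Mb{L})$ lies in the nonnegative orthant, so $\Phi$ maps $\mathds{R}_+^{2n}$ into itself; moreover $(\Mb{K},\Mb{L})$ solves \eqref{K}--\eqref{Lx} if and only if it is a fixed point of $\Phi$.

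The first step would be to record two elementary facts about the coordinate maps $A_i \mapsto (\tilde K_i, \tilde L_i)$: the identity $\tilde K_i + \tilde L_i = A_i$, obtained by distinguishing $A_i \geq L_i^*$ from $A_i < L_i^*$, and the monotonicity of both $A_i \mapsto \tilde K_i$ and $A_i \mapsto \tilde L_i$. Comparing two points with asset values $A_i, A_i'$, the differences $\tilde K_i - \tilde K_i'$ and $\tilde L_i - \tilde L_i'$ then carry the same sign as $A_i - A_i'$, so that
\[
|\tilde K_i - \tilde K_i'| + |\tilde L_i - \tilde L_i'| = \big|(\tilde K_i + \tilde L_i) - (\tilde K_i' + \tilde L_i')\big| = |A_i - A_i'|.
\]
This same-sign collapse, which removes the spurious factor $2$ one would get by bounding $\max$ and $\min$ separately, is the crux of the whole argument and the step I expect to be the main obstacle.

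From there the estimate is routine: I would bound $|A_i - A_i'| \leq \sum_{j} \pi_{i,j} |K_j - K_j'| + \sum_j \gamma_{i,j} |L_j - L_j'|$, sum over $i$, and exchange the order of summation to obtain
\[
\|\Phi(\Mb{K},\Mb{L}) - \Phi(\Mb{K}',\Mb{L}')\|_1 \leq \sum_{j} \Big(\sum_i \pi_{i,j}\Big) |K_j - K_j'| + \sum_j \Big( \sum_i \gamma_{i,j} \Big) |L_j - L_j'|.
\]
Writing $q = \max_j \max\big( \sum_i \pi_{i,j}, \sum_i \gamma_{i,j} \big)$, Assumption $(A3')$ gives $q < 1$, so $\Phi$ is a $q$-contraction for the $\ell^1$ norm. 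Since $\mathds{R}_+^{2n}$ equipped with $\|\cdot\|_1$ is complete, the Banach fixed-point theorem then yields a unique fixed point of $\Phi$, i.e.\ a unique liquidation equilibrium $(\Mb{K},\Mb{L})$ for any prescribed $\Mb{L^*}, \Mb{Ax}, \boldsymbol{A\ell}$. Note that the contraction argument delivers existence and uniqueness simultaneously, so no separate compactness or a~priori boundedness step is needed.
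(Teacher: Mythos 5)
Your proof is correct, and it does not follow the paper's route for the simple reason that the paper has no proof of this proposition at all: the proof environment in the appendix consists of the single line ``See \cite{gourieroux2012bilateral}.'' Your contraction argument is therefore a self-contained substitute for that citation. The crux of your argument is sound and well identified: for fixed $L_i^*$, the pair $\left(\max(A_i - L_i^*,0),\ \min(A_i, L_i^*)\right)$ sums exactly to $A_i$ and both coordinates are nondecreasing in $A_i$, so the coordinatewise $\ell^1$ displacement equals $|A_i - A_i'|$ with no spurious factor $2$; combined with the triangle inequality and an exchange of sums (where $(A1')$ and $(A2')$ enter), this makes $\Phi$ a $q$-contraction with $q = \max_j \max\left( \sum_i \pi_{i,j}, \sum_i \gamma_{i,j} \right) < 1$ by $(A3')$. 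Beyond settling existence and uniqueness in one stroke, this route is constructive: Picard iteration converges geometrically at rate $q$, a practically relevant by-product given that the paper's Appendix computes the equilibrium by enumerating up to $2^n$ solvency regimes.

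One refinement you should add. You state the fixed-point equivalence and uniqueness on $\mathds{R}_+^{2n}$, but a solution of \eqref{K}--\eqref{Lx} is not a priori known to lie in that orthant: $K_i \geq 0$ is automatic from the $\max$, but $L_i = \min(A_i, L_i^*)$ could conceivably be negative if some $A_i < 0$, so uniqueness within $\mathds{R}_+^{2n}$ does not by itself exclude a pathological equilibrium outside it. This is harmless because none of your estimates uses nonnegativity of $(\Mb{K}, \Mb{L})$: the same-sign identity and the bound $|A_i - A_i'| \leq \sum_j \pi_{i,j}|K_j - K_j'| + \sum_j \gamma_{i,j}|L_j - L_j'|$ hold on all of $\mathds{R}^{2n}$, so $\Phi$ is a $q$-contraction there, and any two fixed points $Y, Y'$ in $\mathds{R}^{2n}$ satisfy $\|Y - Y'\|_1 \leq q\, \|Y - Y'\|_1$, hence coincide. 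Stating the contraction on $\mathds{R}^{2n}$ (or adding this one line) makes the uniqueness global, exactly as the proposition claims.
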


\begin{proof}
See \cite{gourieroux2012bilateral}.
\end{proof}

Assumptions $(A1')$ and $(A2')$ define a proper space for the parameters: all elements composing the balance sheet must obviously be non-negative.

Assumption $(A3')$ means that some shareholders and creditors do not belong to the perimeter of the selected financial institutions. In practice, the first part of $(A3')$ is generally satisfied providing that we consider consolidated groups. Indeed, the empirical evidence in the studies by \cite{gauthier2012macroprudential} and \cite{Alves2013} clearly shows that $\sum^n_{i=1} \pi_{i,j} < 1 $.
The constraint on the $\gamma_{ij}$ is largely satisfied since core deposits (deposits from external agents) represent approximately $55\%$ of a bank's debt.

\subsection{Case of two financial institutions}
For illustrative purposes, let us consider a network of two institutions whose balance sheets are shown in Table \ref{TabBS12}. In such a case the equilibrium equations (\ref{K})-(\ref{Lx}) are
\begin{equation}
    \left\{
    \begin{array}{rl}
        K_1 = & \max \Big( \pi_{1,1}K_1 + \pi_{1,2} K_2 + \gamma_{1,2} L_2 + A\ell_1 + Ax_1 - L_1^*, 0 \Big), \\
        L_1 = & \min\Big( \pi_{1,1}K_1 + \pi_{1,2} K_2 + \gamma_{1,2} L_2 + A\ell_1 + Ax_1 , L_1^* \Big), \\
        K_2 = & \max \Big( \pi_{2,1}K_1 + \pi_{2,2} K_2 + \gamma_{2,1} L_1 + A\ell_2 + Ax_2 - L_2^*, 0 \Big), \\
        L_2 = & \min\Big( \pi_{2,1}K_1 + \pi_{2,2} K_2 + \gamma_{2,1} L_1 + A\ell_2 + Ax_2 , L_2^* \Big). \\
    \end{array}
    \right.
\end{equation}
One can identify $4$ regimes depending on the situations of Institutions 1 and 2, respectively. These regimes, represented in Figure \ref{FigRegimes}, are:
\begin{itemize}
	\item Regime 1: both Institutions 1 and 2 are solvent;
	\item Regime 2: both Institutions 1 and 2 default;
	\item Regime 3: Institution 1 defaults while Institution 2 is solvent;
	\item Regime 4: Institution 1 is solvent while Institution 2 defaults.
\end{itemize}

\begin{table}[H]
    \begin{center}
        \begin{tabular}{c|ccc|c}
            \multicolumn{2}{c}{Institution 1} & & \multicolumn{2}{c}{Institution 2} \\
            Asset               & Liability        & ~~~~~~~ & Asset  & Liability        \\ \cline{1-2}\cline{4-5}
            $\pi_{1,1} K_1$     & $L_1$            &         & $\pi_{2,1} K_1$     & $L_2$ \\
            $\pi_{1,2} K_2$     & $K_1$                &         & $\pi_{2,2} K_2$ & $K_2$ \\
            $\gamma_{1,2} L_2$  &                  &         & $\gamma_{2,1} L_1$  \\
            $A\ell_1$           &                  &         & $A\ell_2$ &\\
            $Ax_1$              &                  &         & $Ax_2$    & \\
        \end{tabular}
    \end{center}
    \caption{Balance sheets of Institutions 1 and 2.}\label{TabBS12}
\end{table}

\begin{figure}[H]
    \begin{center}
         \begin{tikzpicture}[scale=0.75]
        \fill[color=green!20] (4,2) -- (9,0) -- (9,0) -- (10,0) -- (10,8) -- (0,8) -- (0,7) ;

        \fill[color=blue!20] (4,2) -- (4.5,0) -- (9,0)  ;
        \fill[color=blue!40] (4,2) -- (0,2.5) -- (0,7)  ;
        \fill[color=red!20] (0,0) -- (4.5,0) -- (4,2) -- (0,2.5) ;

        \draw[line width = 1pt,->] (0,0) -- (0,8) node[anchor=east]{$Ax_2+A\ell_2$};
        \draw[line width = 1pt,->] (0,0) -- (10,0) node[anchor=north]{$Ax_1+A\ell_1$};
        \draw[dashed] (4,0) node[anchor=north]{$Ax_1^*$} -- (4,2) ;
        \draw[dashed] (0,2) node[anchor=east]{$Ax_2^*$}-- (4,2) ;
        \draw[line width = 0.5pt,-] (4,2) -- (4.5,0) ;
        \draw[line width = 0.5pt] (4,2) -- (9,0)  ;
        \draw[line width = 0.5pt] (4,2) -- (0,2.5) ;
        \draw[line width = .5pt] (4,2) -- (0,7);
        \node (Regime1et2Defaut) at (2,1) {$\mathcal{R}2$} ;
        \node [text width=2cm,text centered] (Regime1Defaut) at (1.1,4) {$\mathcal{R}3$} ;
        \node [text width=2cm,text centered] (Regime2Defaut) at (5.75,0.5)  {$\mathcal{R}4$} ;
        \node (RegimeSansDefaut) at (6,4) {$\mathcal{R}1$} ;
        \end{tikzpicture}
    \end{center}
    \caption{Regimes.}\label{FigRegimes}
\end{figure}

Figure \ref{FigRegimes} motivates the existence of interconnections between institutions. In a situation without interconnections, the 4 regimes would be defined by rectangles. Here the bounds deviate due to the presence of interconnections. In the case where the external assets of Institution 2, $Ax_2 + A\ell_2$, are just above the limit value $Ax_2^*$, if it is interconnected and if $Ax_1+A\ell_1$ is low, then Institution 2 can default ($\mathcal{R}_2$ is larger in the presence of interconnections). In this case, interconnections have a  negative effect since the predicament of Institution 1 negatively impacts Institution 2 by contagion. When $Ax_2+A\ell_2$ is very low, Institution 2 necessarily defaults if not linked to Institution 1. However, if Institution 2 owns shares of Institution 1, Institution 2 can survive if the external assets' value of Institution 1 is sufficient ($\mathcal{R}_1$ is larger in the presence of interconnections). In such a case, Institution 2 takes advantage of the high yield investments of Institution 1. Thus, we understand that the impacts of interconnections are not necessarily negative and must be fairly assessed.

%\section{Appendix: Proof of Proposition \ref{PropExistenceSolOptiProSol}}
%\label{AppendixProofExistenceSolve}
\section{Proofs}

\subsection{For Theorem \ref{PropExistenceSolOptiProSol}}

\begin{proof}
For $i=1, \dots, n$, let us denote the vectors of all control variables  by $\Mb{X}$. We have
$$    \mathbf{X}= \left( Ax_i^{(0)}, A\ell_i^{(0)}, L_i^{(0)}, \omega_i, \pi_{i,1}, \dots, \pi_{i,n}, \gamma_{i,1}, \dots, \gamma_{i,n} \right)^{'} \in \mathcal{X}_{ad},
$$
where $\mathcal{X}_{ad}$ is the admissible space satisfying all constraints of $\mathcal{P}_i$. From now on, for the sake of notational simplicity, we omit the dependence in $i$ of the vectors containing the control variables.

The proof relies on the Weierstrass Theorem: a continuous function on a compact set reaches its bounds. Therefore, we first show the continuity of the objective function and then the compactness of the admissible set $\mathcal{X}_{ad}$.

\subsection*{Continuity of the objective function}

Under $(A2)$ and $(A3)$, both $u_i$ and $F_R$ are continuous. Therefore, the expectation is also continuous and the objective function is continuous. 

\subsection*{Compactness of the admissible set $\mathcal{X}_{ad}$}
To prove the compactness of $\mathcal{X}_{ad}$, we show that it is a closed and a bounded set. Before, we prove that $\mathcal{X}_{ad}$ is not empty.

\subsubsection*{$\mathcal{X}_{ad}$ is non-empty:}
Let us consider the vector of parameters $\mathbf{X}_0$ defined as
$$    \mathbf{X}_0 = \left( K_i^{(0)}-k^Ll(0,0), k^Ll(0,0), 0, \dots, 0 \right)^{'}.
$$
All constraints apart from $Ax_i \geq 0$, $(BC)$, $(SC)$ and $(LC)$ are obviously satisfied. The inequality $Ax_i \geq 0$ imposes that $K_i^{0} \geq k^L l(0,0)$ which is not restrictive due to the low value of $k^L$ and the fact that $l(0,0)$ can be taken equal to one. The constraint $(BC)$ reduces to $K_i^{(0)}- k^L l(0,0) + k^L l(0,0) = K_i^{(0)}$ and is thus satisfied. 
The constraint $(SC)$ is written
\begin{align*}
    K_i^{(0)} \geq k_i^A Ax_i^{(0)}   & \Longleftrightarrow K_i^{(0)} \geq k_i^A [ K_i^{(0)}- k^L l(0,0)] \\
                                    & \Longleftrightarrow K_i^{(0)}(1-k_i^A) \geq -k_i^A \ k^L l(0,0).
\end{align*}
Due to the inequality $k_i^A<1$ and the positivity of $k_i^A$, $k^L$ and function $l$, the left hand term is positive whereas the right one is negative, giving that $(SC)$ is satisfied.\\
Thus $\mathbf{X}_0 $ belongs to the admissible set $\mathcal{X}_{ad}$, which is therefore not empty.

\subsubsection*{$\mathcal{X}_{ad}$ is a closed set:}
In order to show that the admissible set $\mathcal{X}_{ad}$ is a closed set, we show that it is the intersection of closed sets. 

i) The constraint $(BC)$ can be written
$$    Ax_i^{(0)} + A\ell_i^{(0)} + \Sum_{j=1}^n \pi_{i,j} \mathcal{K}_j^{(0)} + \Sum_{j=1}^n \gamma_{i,j} \mathcal{L}_j^{(0)} - K_i^{(0)} - L_i^{(0)} = 0.
$$
The corresponding admissible space is the reciprocal image of the singleton $\{ 0 \}$, which is a closed set of $\mathds{R}$, by a continuous function. Therefore, $(BC)$ defines a closed set. 

ii) The constraint $(SC)$ is derived in
$$
    k_i^{A} \ Ax_i^{(0)} + k^{\pi} \Sum_{j=1}^n \pi_{i,j} \mathcal{K}_j^{(0)} + k^{\gamma} \Sum_{j=1}^n \gamma_{i,j} \mathcal{L}_j^{(0)} - K_i^{(0)} \leq 0.
$$
The corresponding admissible space is the reciprocal image of $[- \infty, \  0]$, which is a closed set of $\mathds{R}$, by a continuous function. Therefore, $(SC)$ defines a closed set. 

iv) The constraint $(LC)$ is derived in
$$    k^L \ l(\omega_i, \  L_i^{(0)}) - A\ell_i^{(0)} \leq 0.
$$
The corresponding admissible space is the reciprocal image of $[- \infty, \  0]$, which is a closed set of $\mathds{R}$, by a continuous function. Therefore, $(LC)$ defines a closed set. 

v) The positivity constraints ($Ax_i^{(0)} \geq 0$, $A\ell_i^{(0)} \geq 0$ and $L_i^{(0)} \geq 0$) also define closed sets, as the reciprocal images of $[0, + \infty]$, which is a closed set of $\mathds{R}$, by a continuous function. 

vi) The constraints $\omega_i \in [0,1]$, $0 \leq \pi_{i,j} \leq 1 - c^{\pi}_j$ , $ 0 \leq\gamma_{i,j} \leq 1 - c^{\gamma}_j $ ($\forall j \in \{1, \dots, n \}$) define a closed admissible set as the reciprocal images of $[0, 1], [0, 1-c^{\pi}_j]$ and $[0, 1-c^{\gamma}_j]$, which are closed sets of $\mathds{R}$, by a continuous function. 
   	
The admissible set $\mathcal{X}_{ad}$ is the intersection of the admissible sets defined by each constraint. Moreover, an intersection of closed sets is a closed set. Thus, combining points i) to vi), we obtain that $\mathcal{X}_{ad}$ is a closed set.			

\subsubsection*{$\mathcal{X}_{ad}$ is a bounded set:}
Let us show that the admissible set is bounded.	\\
Conditions $0 \leq \pi_{i,j} \leq 1 - c^{\pi}_j$ and $ 0 \leq\gamma_{i,j} \leq 1 - c^{\gamma}_j$ ($\forall j \in \{1, \dots, n\}$) show that all the $\pi_{i,j}$ and $\gamma_{i,j}$ are bounded. The same is true for $\omega_i \in [0,1]$. Let us now prove that $A\ell_i$, $Ax_i$ and $L_i$ are bounded. \\

\textbf{i) Bound for $A\ell_i$}\\
The combination of the constraints $L_i^{(0)}\geq 0$ and $(BC)$ implies that the institution invests at least all its own capital. 

\medskip

If $L_i^{(0)}=0$, $K_i^{(0)}$ is an upper-bound for $A\ell_i^{(0)}$. 

Let us now consider the case $L_i^{(0)}>0$. The constraint $(BC)$ can be used to express the debt as a function of other control variables,
$$
L_i^{(0)} = Ax_i^{(0)} + A\ell_i^{(0)} + \Sum_{j=1}^n \pi_{i,j} \mathcal{K}_j^{(0)} + \Sum_{j=1}^n \gamma_{i,j} \mathcal{L}_j^{(0)} - K_i^{(0)}.
$$
Using this last equation, one can express $P_i^{(1)}$ as a function of other control variables:
\begin{align}
P_i^{(1)} &= Ax_i^{(0)} (1+r_i) + A\ell_i^{(0)} (1+r_{rf}) \nonumber
\\& + \Sum_{j=1}^n \pi_{i,j} \max \left( \kappa_j[Ax_j^{(0)} (1+r_j) + A\ell_j^{(0)} (1+r_{rf})] - L_j^{*(1)},0 \right) \nonumber
\\& + \Sum_{j=1}^n \gamma_{i,j}\min \left( \kappa_j [Ax_j^{(0)} (1+r_j) + A\ell_j^{(0)} (1+r_{rf}) ] , L_j^{*(1)} \right) \nonumber
\\& - [ 1+r_D(\omega_i) ] \Big( Ax_i^{(0)} + A\ell_i^{(0)} + \Sum_{j=1}^n \pi_{i,j}K_j^{(0)} + \Sum_{j=1}^n \gamma_{i,j} L_j^{(0)} - K_i^{(0)} \Big) \nonumber
\\& = Al_i^{(0)} [ r_{rf}-r_D(w_i) ] + d(\mathbf{X}_{-A\ell}, \Mb{r}),
\label{Chapnetworks_PnL}
\end{align}
where
$ \mathbf{X}_{-A\ell}=\left( Ax_i^{(0)}, L_i^{(0)}, \omega_i, \pi_{i,1}, \dots, \pi_{i,n}, \gamma_{i,1}, \dots, \gamma_{i,n} \right)^{'}$ is the vector of all the control variables apart from $A\ell_i^{(0)}$, $\Mb{r}=(r_1, \dots,r_n)'$ is the vector of the realized net returns of the external assets and $d$ is some function.
%For a given realization $\Mb{R}=(r_1,...,r_n)'$ of external assets, 
The position $P_i^{(1)}$ is a function of $A\ell_i^{(0)}$, $\mathbf{X}_{-A\ell}$ and $\Mb{r}$, from now on denoted by
$P_i^{(1)}(A\ell_i^{(0)},\mathbf{X}_{-A\ell},\Mb{r})$.
Assumption $(A4)$ states that $r_D(\omega_i)>r_{rf}$, giving that $P_i^{(1)}(.,.,.)$ is strictly decreasing with respect to $A\ell_i^{(0)}$.

Let us consider a value $V_1> K_i^{(0)}$ for $A\ell_i^{(0)}$. From \eqref{Chapnetworks_PnL}, we see that, for all admissible $\mathbf{X}_{-A\ell}$, there exists a set $\varepsilon_1, \dots, \varepsilon_n$ of values such that, if $r_k \geq \varepsilon_k, k=1, \dots, n$, then $P_i^{(1)}(V_1, \mathbf{X}_{-A\ell},\Mb{r}) > 0$. For a second value $V_2$ such that $K_i^{(0)} \leq V_2 < V_1$, we have for all admissible $\mathbf{X}_{-A\ell}$
\Beq
\label{Chapnetworks_Eq_PnLdecrease}
P_i^{(1)} (V_2, \mathbf{X}_{-A\ell},\Mb{r}) > P_i^{(1)} (V_1, \mathbf{X}_{-A\ell},\Mb{r}).
\Eeq
Therefore, if $r_k \geq \varepsilon_k, k=1, \dots, n$, we have
\Beq
\label{Chapnetworks_Positivity_V2}
P_i^{(1)}(V_2, \mathbf{X}_{-A\ell}, \Mb{r}) > 0.
\Eeq

Now, let us compare the expected utility at $A\ell_i^{(0)}=V_1$ and $A\ell_i^{(0)}=V_2$.
We have
\begin{align}
 \mathds{E}\left[ u_i \left( K_i^{(1)} \right) \right](V_1, \mathbf{X}_{-A\ell})
        & = \Int_{-\infty}^{+ \infty} \dots \Int_{-\infty}^{+ \infty}  u_i \left( \max \left[ P_i^{(1)} (V_1, \mathbf{X}_{-A\ell}, \Mb{r}), 0 \right] \right)  f_R(\Mb{r}) \ d\Mb{r} \nonumber \\
        & = \Int_{-\infty}^{\varepsilon_1} \dots \Int_{-\infty}^{\varepsilon_n}  u_i \left( \max \left[ P_i^{(1)}(V_1, \mathbf{X}_{-A\ell}, \Mb{r}), 0 \right] \right) f_R(\Mb{r})\ d\Mb{r} \nonumber \\
        & ~~~~ + \Int_{\varepsilon_1}^{+\infty} \dots \Int_{\varepsilon_n}^{+\infty}  u_i \left[ P_i^{(1)}(V_1, \mathbf{X}_{-A\ell}, \Mb{r}) \right] f_R(\Mb{r})\ d\Mb{r}.
\label{Chapnetworks_Eq_EV1}
\end{align}
By the same decomposition and using \eqref{Chapnetworks_Positivity_V2}, we obtain, for $A\ell_i^{(0)}=V_2 >V_1$,
\begin{align}
 \mathds{E} \left[ u_i \left( K_i^{(1)} \right) \right] (V_2, \mathbf{X}_{-A\ell})
    & = \Int_{-\infty}^{\varepsilon_1} \dots \Int_{-\infty}^{\varepsilon_n}  u_i \left( \max \left[ P_i^{(1)}(V_2, \mathbf{X}_{-A\ell}, \Mb{r}), 0 \right] \right) f_R(\Mb{r})\ d\Mb{r} \nonumber \\
        & ~~~~ + \Int_{\varepsilon_1}^{+\infty} \dots \Int_{\varepsilon_n}^{+\infty}  u_i \left[ P_i^{(1)}(V_2, \mathbf{X}_{-A\ell}, \Mb{r}) \right] f_R(\Mb{r}) \   d\Mb{r}.
\label{Chapnetworks_Eq_EV2}
\end{align}
Using \eqref{Chapnetworks_Eq_PnLdecrease}, we have, for $\Mb{r} \in (- \infty, \varepsilon_1] \times \dots \times (- \infty, \varepsilon_n]$, 
$$ \max \left[ P_i^{(1)} (V_2, \mathbf{X}_{-A\ell}, \Mb{r}), 0 \right] \geq \max \left[ P_i^{(1)} (V_1, \mathbf{X}_{-A\ell}, \Mb{r}), 0 \right],$$
and, since $u_i$ is strictly increasing ($(A2)$),
$$u_i \left( \max \left[ P_i^{(1)} (V_2, \mathbf{X}_{-A\ell}, \Mb{r}), 0 \right] \right) \geq u_i \left( \max \left[ P_i^{(1)} (V_1, \mathbf{X}_{-A\ell}, \Mb{r}), 0 \right] \right).$$
Using \eqref{Chapnetworks_Eq_PnLdecrease}, we have, for all $\Mb{r} \in [\varepsilon_1, + \infty) \times \dots \times [\varepsilon_n, + \infty)$,
$$
P_i^{(1)} (V_2, \mathbf{X}_{-A\ell}, \Mb{r}) > P_i^{(1)} (V_1, \mathbf{X}_{-A\ell}, \Mb{r}),
$$
and, since $u_i$ is strictly increasing,
$$ u_i \left[ P_i^{(1)} (V_2, \mathbf{X}_{-A\ell}, \Mb{r}) \right] > u_i \left[ P_i^{(1)} (V_1, \mathbf{X}_{-A\ell}, \Mb{r}) \right].$$
%Since the distribution of $\Mb{r}$ has an infinite support, 
Moreover, there exists $a \in \mathds{R}$, such that, for all $\Mb{r} \in [a, + \infty)^n, f_R(\Mb{r}) >0$ ($(A3)$).

Therefore, combining \eqref{Chapnetworks_Eq_EV1} and \eqref{Chapnetworks_Eq_EV2} yields
$$  \forall \mbox{ admissible }\ \mathbf{X}_{-A\ell}, ~~~~ \mathds{E} \left[ u_i \left( K_i^{(1)} \right) \right](V_1, \mathbf{X}_{-A\ell})  <  \mathds{E} \left[ u_i \left( K_i^{(1)} \right) \right](V_2, \mathbf{X}_{-A\ell}), 
$$
meaning that, for $A\ell_i^{(0)} \geq K_i^{(0)}$, the objective function is strictly decreasing with respect to $A\ell_i^{(0)}$. Consequently, $\mathcal{P}_i$ is equivalent if we upper-bound the space of $A\ell_i$. Moreover, since $A\ell_i$ is lower-bounded by 0, $A\ell_i$ is bounded.\\

\textbf{ii) Bounds for $Ax_i$ and $L_i$}\\

Let us recall that $(SC)$ is written
$$
 K_i^{(0)} \geq k_i^A Ax_i^{(0)} + k^{\pi} \Sum_{j=1}^n \pi_{i,j} \mathcal{K}_j^{(0)} + k^{\gamma} \Sum_{j=1}^n \gamma_{i,j} \mathcal{L}_j^{(0)}.$$
The equity $K_i^{(0)}$ is fixed as an endowment. Moreover, in the right hand term of $(SC)$, all components are positive. Thus, it imposes that each term is bounded. Therefore, $k_i^A Ax_i$ is upper-bounded and, since $k_i^A >0$ by assumption, $Ax_i$ is upper-bounded. Moreover, $Ax_i \geq 0$ and thus $Ax_i$ is bounded.\\
%Both terms $k^{\pi}\ \Sum_{j=1}^n \pi_{i,j}K_j^{(0)}$ and  $k^{\gamma}\ \Sum_{j=1}^n \gamma_{i,j} L_j^{(0)}$ are upper-bounded. 
Using the fact that $k^{\pi} > 0$ and $k^{\gamma} > 0$, we also obtain that both $ \Sum_{j=1}^n \pi_{i,j}K_j^{(0)}$ and $\Sum_{j=1}^n \gamma_{i,j} L_j^{(0)} $ are upper-bounded.
Let us recall that $(BC)$ gives
$$ L_i^{(0)} = Ax_i^{(0)} + A\ell_i^{(0)} + \Sum_{j=1}^n \pi_{i,j} \mathcal{K}_j^{(0)} + \Sum_{j=1}^n \gamma_{i,j} \mathcal{L}_j^{(0)} - K_i^{(0)},$$
implying that $L_i^{(0)}$ is upper-bounded since all terms in the right part of the equation are upper-bounded. Moreover, since $L_i^{(0)} \geq 0$ by assumption, $L_i^{(0)}$ is bounded. 

\subsection*{Existence}

To summarize, the admissible set is not empty.
It is also closed and bounded, and therefore compact. The objective function is continuous and the Weierstrass Theorem ensures the existence of a solution.
\end{proof}

%%%%%%%%%%%%%%%%%%%%%%%%%%%%%%%%%%%%%%%%%%%%%%%%%%%%%%%%%%%%%%%%%%%%%%%%%%%%%%%%%%%%%%%%%%%%%%%%%%%%%%%%%%%%%%%%%%%%%%%%%%%%%%
%%%%%%%%%%%%%%%%%%%%%%%%%%%%%%%%%%%%%%%%%%%%%%%%%%%%%%%%%%%%%%%%%%%%%%%%%%%%%%%%%%%%%%%%%%%%%%%%%%%%%%%%%%%%%%%%%%%%%%%%%%%%%%
%%%%%%%%%%%%%%%%%%%%%%%%%%%%%%%%%%%%%%%%%%%%%%%%%%%%%%%%%%%%%%%%%%%%%%%%%%%%%%%%%%%%%%%%%%%%%%%%%%%%%%%%%%%%%%%%%%%%%%%%%%%%%%
%%%%%%%%%%%%%%%%%%%%%%%%%%%%%%%%%%%%%%%%%%%%%%%%%%%%%%%%%%%%%%%%%%%%%%%%%%%%%%%%%%%%%%%%%%%%%%%%%%%%%%%%%%%%%%%%%%%%%%%%%%%%%%
\subsection{For Theorem \ref{PropExistenceUniciteSolOptiProSol}}
\label{AppendixProofExistenceUniciteSolve}

\begin{proof}
~~
\subsection*{Existence}
The existence can be shown exactly in the same way as for Theorem \ref{PropExistenceSolOptiProSol}.

\subsection*{Uniqueness}
The uniqueness is based on a fundamental theorem of optimization: a strictly concave function on a closed convex set admits a unique maximum. We first show that the admissible set is convex and then that the objective function is strictly concave.

\subsubsection*{Convexity of the admissible set}
% \textbf{i) Convexity of the admissible set}
As before, we denote
$$    \mathbf{X}=\left( Ax_i^{(0)}, A\ell_i^{(0)}, L_i^{(0)}, \omega_i, \pi_{i,1}, \dots, \pi_{i,n}, \gamma_{i,1}, \hdots, \gamma_{i,n} \right)^{'}
    \in \mathcal{X}_{ad},
$$
where $\mathcal{X}_{ad}$ is the admissible space of $\mathcal{P}_i'$. 

Let us show that each constraint of $\mathcal{P}_i'$ defines a convex set. All constraints excluding $(LC)$ involve linear functions of the control variables and thus each of these constraints obviously defines a convex set. 

The constraint $(LC)$ requires more attention. For the sake of notational simplicity, let us denote by $x=\omega_i$, $y=L_i^{(0)}$ and $z=A\ell_i^{(0)}$. The constraint $(LC)$ can therefore be re-written $z > l(x,y)$. The corresponding set is the epigraph of the function $l$. The epigraph is convex if and only if $l$ is convex, i.e. if and only if the Hessian of $l$, $\mathbf{H_l}$, is semi definite positive. By definition, we have
$$
    \mathbf{H_l}=
    \begin{pmatrix}
    \dfrac{\partial^2 l}{\partial x^2} & \dfrac{\partial^2 l}{\partial x \partial y} \\
     \dfrac{\partial^2 l}{\partial x \partial y}  & \dfrac{\partial^2 l}{\partial y^2}
    \end{pmatrix}.
$$
The Sylvester's criterion states that a matrix is semi definite positive if and only if all its leading principal minors are positive, i.e.
$$    \frac{\partial^2 l}{\partial x^2} \geq 0 \ \ \mbox{ and } \ \  
    \frac{\partial^2 l}{\partial x^2} \frac{\partial^2 l}{\partial y^2} \geq \left( \frac{\partial^2 l}{\partial x \partial y} \right)^2.
$$
Thus, under $(A8)$, $(LC)$ defines a convex set and finally all constraints define a convex set. Since the intersection of convex sets is a convex set, $\mathcal{X}_{ad}$ is a convex set.

We want to show that there is uniqueness of the solution of the optimization of the triple $(Ac_i^{(0)}, L_i^{(0)}, \omega_i)$, where $Ac_i^{(0)}$ is one of the variables appearing on the asset side, i.e. among $Ax_i^{(0)}$, $A\ell_i^{(0)}$, $\pi_{i,1}, \dots, \pi_{i,n}, \gamma_{i,1}, \dots, \gamma_{i,n}$.
Let us denote
$$
    \mathbf{X3}=\left (Ac_i^{(0)}, L_i^{(0)}, \omega_i \right)^{'} \in \mathcal{X}3_{ad},
$$
where $\mathcal{X}3_{ad}$ is the admissible set of the three-dimensional optimization program. By using the same arguments as for $\mathcal{X}_{ad}$, $\mathcal{X}3_{ad}$ defines a convex set, whatever the control variable $Ac_i^{(0)}$ that is chosen. Moreover, note that one can show that $\mathcal{X}3_{ad}$ is a closed set, as for Theorem \ref{PropExistenceSolOptiProSol}.

%%%%%%%%%%%%%%%%%%%%%%%%%%%%%%%%%%%%%%%%%%%%%%%%%%%%%%%
% and $A_i^{(0)}$ is the total assets of financial institution $i$ at time $t=0$:
%$A_i^{(0)}=Ax_i^{(0)} (1+r_i) + Al_i^{(0)} (1+r_{rf}) + \sum_{j=1}^n \pi_{i,j} \mathcal{K}_j^{(0)} + \sum_{j=1}^n \gamma_{i,j} \mathcal{L}_j^{(0)}$. The total asset $A_i^{(0)}$ belongs to a convex set as the sum of elements in convex sets.
%%%%%%%%%%%%%%%%%%%%%%%%%%%%%%%%%%%%%%%%%%%%%%%%%%%%%%%

\subsubsection*{Expectation and underlying objective function}
In the following, we generally denote the position by $P_i^{(1)}(\mathbf{X3} , \mathbf{r})$ but sometimes we omit the arguments $\mathbf{X3}$ and $\mathbf{r}$ for simplicity.
The strict concavity of $u_i \left[ v \left( P_i^{(1)} \right) \right]$ is a sufficient condition to obtain the strict concavity of $\mathds{E} \left \{ u_i \left[ v \left( P_i^{(1)} \right) \right] \right \}$ with respect to $\Mb{X3}$.
Indeed, let us assume that  $u_i \left[ v \left( P_i^{(1)} \right) \right]$ is strictly concave.  
Combining the latter assumption with the fact that $f_R$ is strictly positive on $[a, + \infty)^n$ ($(A3)$), we get, for all $(\mathbf{X3}_1, \mathbf{X3}_2) \in \mathcal{X}3_{ad}^2$ and for all $\lambda \in [0,1]$,
\begin{align*}
& \mathds{E} \left \{ u_i \left[ v \left( P_i^{(1)} \right) \right] \right \} (\lambda \mathbf{X3}_1 + (1-\lambda) \mathbf{X3}_2 )
    \\& = \Int_{\mathds{R}^n} u_i \left( v \left[ P_i^{(1)} \left( \lambda \mathbf{X3}_1 + (1-\lambda) \mathbf{X3}_2, \mathbf{r} \right) \right] \right) f_R(\mathbf{r})\  d\Mb{r}  \\
    & > \Int_{\mathds{R}^n} \Bigg[ \lambda\  u_i \left( v \left[ P_i^{(1)}(\mathbf{X3}_1 , \mathbf{r} ) \right] \right) \nonumber 
   + (1-\lambda) \ u_i \left( v \left[ P_i^{(1)}(\mathbf{X3}_2 , \mathbf{r} ) \right] \right) \  \Bigg] f_R(\mathbf{r}) \ d\Mb{r}
\\& = \lambda \ \mathds{E} \left \{ u_i \left[ v \left( P_i^{(1)} \right) \right] \right \}(\mathbf{X3}_1) + (1-\lambda) \mathds{E} \left \{ u_i \left[ v \left( P_i^{(1)} \right) \right] \right \}(\mathbf{X3}_2) \nonumber, 
\end{align*}
showing the strict concavity of the expected utility.

\subsubsection*{Strict concavity of the underlying objective function}
We now focus on $u_i \left[ v \left( P_i^{(1)} \right) \right]$. We consider that only one control variable is free on the asset side.
For the sake of notational simplicity, we denote by $x_1=Ac_i^{(0)}$, $x_2=\omega_i$ and $x_3=L_i^{(0)}$. 
Here we interpret $P_i^{(1)}$ as the function defined by
$$\begin{array}{ccccc}
P_i^{(1)} & : & \mathds{R}^{+} \times [0,1] \times \mathds{R}^{+} & \to & \mathds{R} \\
 & & \begin{pmatrix}
 x_1 \\
 x_2 \\
 x_3
 \end{pmatrix}
  & \mapsto & t(x_1) -[1+r_D(x_2)] x_3,  \\
\end{array}$$
where $t(.)$ is a linear transformation mapping the control variable chosen into the value of the total assets $Ax_i^{(0)} (1+r_i) + Al_i^{(0)} (1+r_{rf}) + \sum_{j=1}^n \pi_{i,j} \mathcal{K}_j^{(0)} + \sum_{j=1}^n \gamma_{i,j} \mathcal{L}_j^{(0)}$.
Let us denote by $g$ the function $u_i \circ v \circ  P_i^{(1)}$. We now study the strict concavity of $g$. We denote by $m=u_i \circ v$, yielding $g=m \circ P_i^{(1)}$.
The function $g$ is strictly concave if and only if its Hessian matrix $\mathbf{H_g}$ is definite negative. We have
$$    \mathbf{H_g}=m''
    \begin{pmatrix}
    1 & -x_3 \ r_D'(x_2) & -[1+r_D(x_2)] \\
    -x_3 \ r_D'(x_2) & -x_3 \left[ \dfrac{m'}{m''} r_D''(x_2) + r_D'^2(x_2) x_3 \right] & r_D'(x_2) \left[ -\dfrac{m'}{m''} + x_3 [1+r_D(x_2)] \right] \\
    -[1+r_D(x_2)] & r_D'(x_2) \left[ -\dfrac{m'}{m''} + x_3 [1+r_D(x_2)] \right] & [1+r_D(x_2)]^2
    \end{pmatrix}.
$$
The Sylvester's criterion states that $\mathbf{H_g}$ is definite negative if and only if all its leading principal minors are strictly negative. Let us now study the three corresponding minors. \\

\textbf{i) First minor}\\
The first minor is
$$    \mbox{Det}_1=|m''|.
$$
According to the Sylvester's criterion, $m''<0$ is imposed. 

\textbf{ii) Second minor}\\
The second minor is
$$
    \mbox{Det}_2=m''^2 \times \left[ -x_3 \left[ \frac{m'}{m''} r_D''(x_2) + r_D'^2(x_2) x_3 \right]  + x_3^2 \ r_D'^2(x_2) \right].
$$
 Thus, the Sylvester's condition imposes, $\forall x_2 \in [0,1]$ and $x_3 \in \mathds{R}^{+}$,
\begin{align*}
                    &  &x_3 \left[ m'   m''  r_D''(x_2) + m''^2  r_D'^2(x_2) x_3 \right] & > x_3^2  r_D'^2(x_2)  m''^2 \\
\Longleftrightarrow &  &m'  m''  r_D''(x_2) + m''^2  r_D'^2(x_2) x_3                     & > x_3  r_D'^2(x_2)  m''^2 \\
\Longleftrightarrow &  &m'  m''  r_D''(x_2)                                               & > 0 \\
\Longleftrightarrow &  &                                         r_D''(x_2)                  & < 0,
\end{align*}
since $m'>0$ by assumption ($u$ and $v$ are strictly increasing so $m=u \circ v$ is strictly increasing as well) and the previous condition (see i) imposes $m''<0$.

\textbf{iii) Third minor}\\
We compute the third minor using Sarrus' rule. We obtain
\begin{align*}
& \mbox{Det}_3= m''^3 \Bigg \{ -x_3 \left[ \frac{m'}{m''} r_D''(x_2) + r_D'^2(x_2) x_3 \right]  [1+r_D(x_2)]^2
\\& +2 (-x_3 \ r_D'(x_2)) \ r_D'(x_2) \left[ -\frac{m'}{m''} + x_3 [1+r_D(x_2)] \right] \  \times (-[1+r_D(x_2)])
\\& - \Bigg[ [1+r_D(x_2)]^2 \ \Bigg(-x_3 \left[ \frac{m'}{m''} r_D''(x_2) + r_D'^2(x_2) x_3 \right] \Bigg)
+ r_D'(x_2) \left[ -\frac{m'}{m''} + x_3 [1+r_D(x_2)] \right]^2
\\& + x_3^2 \ r_D'^2(x_2) \ [ 1+r_D(x_2)]^2 \Bigg]   \Bigg \}
\\&= m''^3 \Bigg[ \left \{ 2 x_3 \ r_D'^2(x_2) [1+r_D(x_2)]+ r_D'^2(x_2) \left( -\frac{m'}{m''}+x_3 \left[ 1+r_D(x_2) \right] \right) \right \} \  \Big(-\frac{m'}{m''}+x_3 [1+r_D(x_2)] \Big)
\\& + x_3^2 \ r_D'^2(x_2)) \ [1+r_D(x_2)]^2 \Bigg].
\end{align*}
Considering $m''<0$ (see ii) and $m'>0$ (by assumption), we have $\dfrac{m'}{m''}<0$. Thus, assuming $\forall x_2 \in [0,1] , r_D'(x_2) \ne 0$, all terms in the brackets are strictly positive. Moreover $m''^3<0$ and thus the condition $\mbox{Det}_3 <0$ is satisfied.

\subsubsection*{Summary}
The following assumptions
\begin{itemize}
    \item $m''(x)<0$  ($(A5)$);
    \item $r_D''< 0$ ($(A6)$);
    \item $r_D' \ne 0$ ($(A7)$);
\end{itemize}
are sufficient to ensure that the Hessian matrix of $g$ is definite negative and  therefore that $g$ is strictly concave with respect to the control variable $Ac_i^{(0)}$, the debt $L_i^{(0)}$ and the maturity transformation $\omega_i$.

Finally, under $(A5)$, $(A6)$, $(A7)$ and $(A8)$, the objective function 
$\mathds{E} \left \{ u_i \left[ v \left( P_i^{(1)} \right) \right] \right \}$ is strictly concave on a closed convex set, showing the uniqueness.
\end{proof}

\medskip 

\begin{Rq}
Let us now come back to the choice of working directly on the integrand. Even if $u_i \circ v$ is not strictly concave everywhere, one may certainly expect the strict concavity to come from the integration with respect to the realized returns $\mathbf{r}$ (for some appropriate densities $f_R$). However, as we have shown, studying the concavity of a multivariate function involves studying its Hessian and this is already quite complicated in the case of the integrand. The Hessian matrix of the expected utility implies much more complicated expressions, especially products of integral, apart from the first leading minor. The condition on this first leading minor is written as follows:
$$
    \Int_{\mathds{R}^n} (u_i \circ v \circ P_i^{(1)})''(\Mb{X}, \mathbf{r}) f_R(\mathbf{r})  \ d\mathbf{r} > 0.
$$
Thus, even in the case of the first leading minor, it seems difficult to obtain results except in particular cases of very simple density functions $f_R$. Moreover, the study of the uniqueness of all control variables (and thus the study of the strict concavity with respect to all control variables) would require the study of a high dimensional Hessian, which is very difficult.
%%%%%%%%%%%%%%%%%%%%%%%%%%%%%%%%%%%%%%%%%%%%%%%%%%%%%
%%%% Peut etre a rajouter
%%% Il faut peut etre faire passer l'idee qu'il ya tous les rendements et donc que les expressions sont extremement compliquees.
% But this imposes to consider the complete expression of $P_i^{(1)}$ (with respect to $\mathbf{r}$ and thus with respect to all control variables) and thus prevents from using the reduction of dimension carried out via function $f$. That means that a high dimensional Hessian must be considered.
%%%%%%%%%%%%%%%%%%%%%%%%%%%%%%%%%%%%%%%%%%%%%%%%%%%%%
\end{Rq}

%%%%%%%%%%%%%%%%%%%%%%%%%%%%%%%%%%%%%%%%%%%%%%%%%%%%%%%%%%%%%%%%%%%%%%%%%%%%%%%%%%%%%%%%%%%%%%%%%%%%%%%%%%%%%%%%%%%%%%%%%%%%%%
%%%%%%%%%%%%%%%%%%%%%%%%%%%%%%%%%%%%%%%%%%%%%%%%%%%%%%%%%%%%%%%%%%%%%%%%%%%%%%%%%%%%%%%%%%%%%%%%%%%%%%%%%%%%%%%%%%%%%%%%%%%%%%
%%%%%%%%%%%%%%%%%%%%%%%%%%%%%%%%%%%%%%%%%%%%%%%%%%%%%%%%%%%%%%%%%%%%%%%%%%%%%%%%%%%%%%%%%%%%%%%%%%%%%%%%%%%%%%%%%%%%%%%%%%%%%%
%%%%%%%%%%%%%%%%%%%%%%%%%%%%%%%%%%%%%%%%%%%%%%%%%%%%%%%%%%%%%%%%%%%%%%%%%%%%%%%%%%%%%%%%%%%%%%%%%%%%%%%%%%%%%%%%%%%%%%%%%%%%%%
%\section{Appendix: Proofs of Corollaries \ref{CorV} and \ref{CorL}}
%\label{AppendixProofCorVL}

\subsection{For Lemma \ref{CorV}}

\begin{proof}
\textbf{i)} We consider the function defined by $\forall P \in \mathds{R}, v(P)=P$. \\
We have
$v'(P)=1$ and $v''(P)=0$. Thus, $(u_i \circ v)'(P)=u_i'[v(P)] v'(P)=u_i'(P)$, giving $(u_i \circ v)''(P)=u_i''(P)$. Therefore, $(A5)$ imposes $\forall P \in \mathds{R}, u_i''(P) <0$.

\medskip

\textbf{ii)}  Here, we consider the function defined by $\forall P \in \mathds{R}, v(P)= \log \left( \exp(P)+1 \right)$. \\
We have, for all $P \in \mathds{R}$,
$$ v'(P)=\frac{e^P}{e^P+1} \mbox{ and } v''(P)=\frac{e^P (e^P+1) -e^P\ e^P }{(e^P+1)^2} = \frac{e^P}{(e^P+1)^2}.$$
Let us study the function $h=u_i \circ v = \log \circ v$. We have
$$ h'(P)=\frac{v'(P)}{v(P)},$$
and thus
\begin{align*}
h''(P) &=\frac{v''(P)\ v(P) - v'^2(P)}{v^2(P)}
\\& = \frac{e^P}{(e^P+1)^2} \frac{1}{\log \left( e^P +1 \right)} - \frac{e^P e^P}{(e^P+1)^2} \frac{1}{ [\log \left( e^P +1 \right)]^2 }
\\& = \frac{e^P}{(e^P+1)^2} \frac{1}{\log \left( e^P +1 \right)} \left( 1-\frac{e^P}{\log \left( e^P +1 \right)} \right).
\end{align*}
The first two factors are positive whereas the third one is negative (since  $\forall x \in \mathds{R}_{+}^*, \log(1+x) < x$). Consequently, $\forall P \in \mathds{R}, \ h''(P) < 0$. Hence the result.
\end{proof}

\subsection{For Lemma \ref{CorL}}

\begin{proof}
We consider the function defined by $\forall \omega \in [0,1] \mbox{ and } \forall L \in \mathds{R}^+, l(\omega, L)=\exp(\omega) \exp(L)$. We have
$$  \forall \omega \in [0,1] \mbox{ and } \forall L \in \mathds{R}^+, \dfrac{\partial^2 l}{\partial \omega^2}= \exp(\omega) \exp(L)>0 \ \ \ \mbox{ and }$$
$$\frac{\partial^2 l}{\partial \omega^2} \frac{\partial^2 l}{\partial L^2}= [\exp(\omega) \exp(L)]^2= \left( \frac{\partial^2 l}{\partial \omega \partial L}  \right)^2.$$
That shows that $(A8)$ is satisfied.
\end{proof}

\subsection{For Proposition \ref{Prop_PositivenessPi}}

\begin{proof}
The proof is based on the Karuch, Kuhn, Tucker (KKT) Theorem, which provides necessary conditions on a local optimum of an optimization problem under equality and inequality constraints. We show that assuming $\pi^*=0$ leads to a contradiction. \\

The KKT Theorem states that there exist coefficients $\mu_i \geq 0$ such that a local maximum $(Ax^*, \pi^*)$ is a local maximum of the objective function $\mathcal{L}^a$, defined as
    $$ \mathcal{L}^a = f - \mu_1 (k^A Ax + k^{\pi} \pi -1) + \mu_2 Ax + \mu_3 \pi - \mu_4 (\pi-1), $$
where $f$ is the initial objective function, i.e. $\mathds{E}  \left[ u( Ax R_g + \pi R_g^{\pi} ) \right]$. Moreover, the $\mu_i$ coefficients satisfy
$$\forall i, \mu_i C_i=0,$$
where $C_i$ is the $i$-th constraint. \\

At a local optimum, the KKT conditions are
\begin{equation*}
	\left\{
	\begin{array}{l}
	\Frac{\partial f}{\partial Ax}  - \mu_1 k^{A} + \mu_2=0 \\
	\Frac{\partial f}{\partial \pi} + \mu_3 - \mu_1 k^{\pi} - \mu_4 =0 \\
	\mu_1 (k^A Ax^* + k^{\pi} \pi^* -1)=0 \\
	\mu_2 Ax^* =0 \\
	\mu_3 \pi^* =0 \\
	\mu_4 (\pi^* -1)=0 \\
    \end{array}
    \right..
\end{equation*}
We now assume that $\pi^*=0$. The last equation directly provides $\mu_4=0$. Since $f$ is strictly increasing, $Ax^*$ is necessarily strictly positive (such $Ax^*$ is compatible with the constraints). Therefore, we have $\mu_2=0$. Thus, the first equation provides
    $$ \mu_1 = \Frac{\partial f}{\partial Ax} \Frac{1}{k^{A}}.$$
Injecting this result into the second equation gives
 \Beq
 \label{Chapnetworks_Eq_KKT_mu3}
 \mu_3 = \Frac{\partial f}{\partial Ax} \Frac{k^{\pi}}{k^{A}}  - \Frac{\partial f}{\partial \pi} < 0 \ (\mbox{by assumption}).
 \Eeq
Equation \eqref{Chapnetworks_Eq_KKT_mu3} is in contradiction with the KKT theorem, stating that $\forall i, \mu_i \geq 0$. Therefore, $\pi^*\neq0$.
\end{proof}

%%%%%%%%%%%%%%%%%%%%%%%%%%%%%%%%%%%%%%%%%%%%%%%%%%%%%%%%%%%%%%%%%%%%%%%%%%%%%%%%%%%%%%%%%%%%%%%%%%%%%%%%%%%%%%%%%%%%%%%%%%%%%%
%%%%%%%%%%%%%%%%%%%%%%%%%%%%%%%%%%%%%%%%%%%%%%%%%%%%%%%%%%%%%%%%%%%%%%%%%%%%%%%%%%%%%%%%%%%%%%%%%%%%%%%%%%%%%%%%%%%%%%%%%%%%%%
%%%%%%%%%%%%%%%%%%%%%%%%%%%%%%%%%%%%%%%%%%%%%%%%%%%%%%%%%%%%%%%%%%%%%%%%%%%%%%%%%%%%%%%%%%%%%%%%%%%%%%%%%%%%%%%%%%%%%%%%%%%%%%
%%%%%%%%%%%%%%%%%%%%%%%%%%%%%%%%%%%%%%%%%%%%%%%%%%%%%%%%%%%%%%%%%%%%%%%%%%%%%%%%%%%%%%%%%%%%%%%%%%%%%%%%%%%%%%%%%%%%%%%%%%%%%%
%\newpage
\subsection{For Proposition \ref{PropPositiveness_1}}
%\label{AppendixProofPropPositiveness}

\begin{proof}
First, let us recall that
\begin{align*}
P_i^{(1)}
& = Ax_i^{(1)} + A\ell_i^{(1)}+ \Sum_{j=1}^n \pi_{i,j} K_j^{(1)} + \Sum_{j=1}^n \gamma_{i,j} L_j^{(1)}
\\& - [1+r_D(\omega_i)] \Big( Ax_i^{(0)} + A\ell_i^{(0)} + \Sum_{j=1}^n \pi_{i,j}\mathcal{K}_j^{(0)} + \Sum_{j=1}^n \gamma_{i,j} \mathcal{L}_j^{(0)} - K_i^{(0)} \Big)
\\& = Ax_i^{(1)} + A\ell_i^{(1)} - [1+r_D(\omega_i)] (Ax_i^{(0)} + A\ell_i^{(0)} - K_i^{(0)}) + \Sum_{j=1}^n \pi_{i,j} \Big( K_j^{(1)} - [1+r_D(\omega_i)]\  \mathcal{K}_j^{(0)} \Big)
\\& + \Sum_{j=1}^n \gamma_{i,j} \Big( L_j^{(1)} - [1+r_D(\omega_i)] \mathcal{L}_j^{(0)} \Big).
\end{align*}

We have $u_i \circ v=Id$. Then the derivative of the objective function with respect to $\pi_{i,j}$ is written
\Beq
\label{Eq_Derivative_Share}
\frac{\partial \mathds{E}[u_i \circ v(P_i^{(1)})]}{\partial \pi_{ij}} = \frac{\partial \mathds{E}(P_i^{(1)})}{\partial \pi_{ij}} = \mathds{E} \left[ K_j^{(1)} - [1+r_D(\omega_i)]\  \mathcal{K}_j^{(0)} \right] = \mathds{E} \left[ K_j^{(1)} \right]- [1+r_D(\omega_i)]\  \mathcal{K}_j^{(0)},
\Eeq
where
$$ K_j^{(1)} = \max \left( \kappa_j \left( Ax_j^{(1)} + A\ell_j^{(1)} \right) - L_j^{*(1)}, 0 \right).$$
Let us now explicit the latter expression:
\begin{align*}
\kappa_j \left( Ax_j^{(1)} + A\ell_j^{(1)} \right) - L_j^{*(1)} &= \kappa_j \left( Ax_j^{(0)}(1+r_j) + A\ell_j^{(0)} (1+r_{rf}) \right) - L_j^{*} [1+r_D(\omega_j)]
\\& = a_j r_j + b_j,
\end{align*}
by denoting $a_j=\kappa_j Ax_j^{(0)}$ and $b_j=\kappa_j \left( Ax_j^{(0)}+ A\ell_j^{(0)} (1+r_{rf}) \right) - L_j^* [1+r_D(\omega_j)]  $ \\
Then,
\Beq
\label{Eq_Equity_1}
\mathds{E} \left[ K_j^{(1)} \right] =  \mathds{E} \left[  \max \left(a_j r_j + b_j, 0 \right) \right] = \int_{\frac{-b_j}{a_j}}^{+ \infty} (a_j r_j + b_j) \ f_{R,j}(r_j) \  dr_j.
\Eeq
Combining \eqref{Eq_Derivative_Share} and \eqref{Eq_Equity_1}, we obtain
$$ \frac{\partial \mathds{E}[u_i \circ v(P_i^{(1)})]}{\partial \pi_{ij}} >0 \Longleftrightarrow \int_{\frac{-b_j}{a_j}}^{+ \infty} (a_j r_j + b_j) \ f_{R,j}(r_j) > [1+r_D(\omega_i)]\  \mathcal{K}_j^{(0)}.$$

The derivative with respect to $\gamma_{ij}$ is written
\begin{align*}
\frac{\partial \mathds{E}(P_i^{(1)})}{\partial \gamma_{ij}} 
&= \mathds{E}[L_j^{(1)}] - [1+r_D(\omega_i)] \mathcal{L}_j^{(0)} 
\\& = \mathds{E} \left[ \min \left(a_j r_j + b_j, L_j^{*(1)}  \right) \right] - [1+r_D(\omega_i)] \mathcal{L}_j^{(0)}
\\& = \int_{- \infty}^{\frac{L_j^{*(1)}-b_j}{a_j}} (a_j r_j + b_j) f_{R,j}(r_j) \  dr_j + L_j^{*(1)} \mathds{P} \left( r_j > \frac{L_j^{*(1)}-b_j}{a_j} \right) - [1+r_D(\omega_i)] \mathcal{L}_j^{(0)}
 \\& = \int_{- \infty}^{\frac{L_j^{*(1)}-b_j}{a_j}} (a_j r_j + b_j) f_{R,j}(r_j) \  dr_j  + L_j^{*} c_j\  [1+r_D(\omega_j)] - [1+r_D(\omega_i)] \mathcal{L}_j^{(0)},
\end{align*}
by denoting $c_j=\mathds{P} \left( r_j > \frac{L_j^{*(1)}-b_j}{a_j} \right)$.
Finally,
$$ \frac{\partial \mathds{E}[u_i \circ v(P_i^{(1)})]}{\partial \pi_{ij}} >0 \Longleftrightarrow \int_{- \infty}^{\frac{L_j^{*}[1+r_D(\omega_j)]-b_j}{a_j}} (a_j r_j + b_j) f_{R,j}(r_j) \  dr_j  + L_j^{*} c_j\  [1+r_D(\omega_j)] > [1+r_D(\omega_i)] \mathcal{L}_j^{(0)}.$$
\end{proof}

\subsection{For Proposition \ref{PropPositiveness_2}}

\begin{proof}
Let us at first consider the derivative with respect to $\pi_{ij}$.
We have
$$ \frac{\partial \left[ u_i \circ v \left( P_i^{(1)} \right) \right]}{\partial \pi_{ij}} = \frac{\partial (u_i \circ v)}{\partial P_i^{(1)}} \frac{\partial P_i^{(1)}}{\partial \pi_{ij}}.$$
The first term $\Frac{\partial (u_i \circ v)}{\partial P_i^{(1)}}$ can be interpreted as some kind of marginal utility (with the utility being composed with function $v$). It depends on the returns of all banks connected to Bank $i$ and not only on the return of Bank $j$. Let us denote
$$ h_{i1}(r_1, \dots, r_j, \dots, r_n)=\frac{\partial (u_i \circ v)}{\partial P_i^{(1)}}.$$
Moreover, we have
$$ \frac{\partial P_i^{(1)}}{\partial \pi_{ij}} = K_j^{(1)} - [1+r_D(\omega_i)]\  \mathcal{K}_j^{(0)} = \max \left(a_j r_j + b_j, 0 \right) - [1+r_D(\omega_i)]\  \mathcal{K}_j^{(0)}.$$
%by denoting $d_{ij}=(1+r_D(\omega_i))\  \mathcal{K}_j^{(0)}$ \\
Let us introduce
$$ h_{i2}(r_j)=\frac{\partial P_i^{(1)}}{\partial \pi_{ij}}.$$
Thus,
\begin{align*}
&\frac{\partial \mathds{E} \left [ u_i \circ v \left( P_i^{(1)} \right)  \right ] }{\partial \pi_{ij}}
\\& = \mathds{E} \left[ \frac{\partial u_i \circ v \left( P_i^{(1)} \right)}{\partial \pi_{ij}} \right]
\\& = \int_{r_1=-\infty}^{+ \infty} \dots \int_{r_j} \dots \int_{r_n}  h_{i1}(r_1, \dots, r_j, \dots, r_n) \ h_{i2}(r_j) f_R(r_1, \dots, r_n) \ dr_n \dots dr_j \dots dr_1
\\& = \int_{r_j=-\infty}^{+ \infty} \left[ \int_{r_1}  \dots \int_{r_{j-1}} \int_{r_{j+1}} \dots \int_{r_n}  h_{i1}(r_1, \dots, r_j, \dots, r_n) h_{i2}(r_j)f_R(r_1, \dots, r_n) \ dr_n \dots dr_{j+1} \ dr_{j-1} \dots dr_1 \right] dr_j
\\& = \int_{r_j=-\infty}^{+ \infty} h_{i2}(r_j)  \left[\int_{r_1}  \dots \int_{r_{j-1}} \int_{r_{j+1}} \dots \int_{r_n}  h_{i1}(r_1, \dots, r_j, \dots, r_n) f_R(r_1, \dots, r_n) \ dr_n \dots dr_{j+1} \ dr_{j-1} \dots dr_1 \right] dr_j
\\& = \int_{- \infty}^{+ \infty} h_{i2}(r_j) w(r_j) \ dr_j
\\& = \int_{-\frac{b_j}{a_j}}^{+ \infty} (a_j r_j + b_j) w(r_j) \ dr_j - \int_{- \infty}^{+ \infty} [1+r_D(\omega_i)] \mathcal{K}_j^{(0)} w(r_j)\  dr_j
\\& =  \int_{-\frac{b_j}{a_j}}^{+ \infty} (a_j r_j + b_j) w(r_j) \ dr_j - [1+r_D(\omega_i)] \mathcal{K}_j^{(0)} \int_{- \infty}^{+ \infty} w(r_j) \ dr_j,
\end{align*}
where
\begin{align*}
& w(r_j)
\\ &= \int_{r_1=-\infty}^{+ \infty}  \dots \int_{r_{j-1}} \int_{r_{j+1}} \dots \int_{r_n}  h_{i1}(r_1, \dots, r_j, \dots, r_n) \ f_R(r_1, \dots, r_n) \ dr_n \dots dr_{j+1} \ dr_{j-1} \dots dr_1.
\end{align*}

Therefore,
$$\frac{\partial \mathds{E} \left [ u_i \circ v \left( P_i^{(1)} \right)  \right ] }{\partial \pi_{ij}} >0 \Longleftrightarrow \int_{-\frac{b_j}{a_j}}^{+ \infty} (a_j r_j + b_j) w(r_j) \ dr_j > [1+r_D(\omega_i)] \mathcal{K}_j^{(0)} \int_{- \infty}^{+ \infty} w(r_j) \ dr_j.$$

\medskip

Let us now consider the case of $\gamma_{ij}$.
As in the previous case, the corresponding derivative is written
$$ \frac{\partial \left[ u_i \circ v \left( P_i^{(1)} \right) \right]  }{\partial \gamma_{ij}} = \frac{\partial (u \ o\  v)}{\partial P_i^{(1)}} \frac{\partial P_i^{(1)}}{\partial \gamma_{ij}}.$$
The first term is equal to $h_{i1}(r_j)$ and the second is denoted
$h_{i3} (r_j).$
The same computation as in the case of $\pi_{ij}$ yields
$$ \frac{\partial \mathds{E} \left [ u_i \circ v \left( P_i^{(1)} \right) \right] }{\partial \gamma_{ij}} = \int_{- \infty}^{+ \infty} h_{i3}(r_j) w(r_j) \ dr_j.$$
We have
$$ h_{i3}(r_j)=L_j^{(1)}-[1+r_D(\omega_i)] \mathcal{L}_j^{(0)}=\min(a_j r_j+d_j, L_j^{*}[1+r_D(\omega_j)]) - [1+r_D(\omega_i)] \mathcal{L}_j^{(0)},$$
where $d_j=\kappa_j \left( Ax_j^{(0)}+ A\ell_j^{(0)} (1+r_{rf}) \right)$.
Finally
\begin{align*}
\frac{\partial \mathds{E} \left [ u_i \circ v \left( P_i^{(1)} \right) \right] }{\partial \gamma_{ij}} &= \int_{-\infty}^{\frac{L_j^*[1+r_D(\omega_j)]-d_j}{a_j}} (a_j r_j + d_j) w(r_j) \  dr_j+ L_j^{*}[1+r_D(\omega_j)]) \int_{\frac{L_j^*[1+r_D(\omega_j)]-d_j}{a_j}}^{+ \infty} w(r_j) \  dr_j
\\& - [1+r_D(\omega_i)] \mathcal{L}_j^{(0)} \int_{- \infty}^{+ \infty}  w(r_j) \  dr_j,
\end{align*}
giving that
\begin{align*}
& \frac{\partial \mathds{E} \left [ u_i \circ v \left( P_i^{(1)} \right) \right] }{\partial \gamma_{ij}} >0
\\& \Longleftrightarrow \int_{-\infty}^{\frac{L_j^*[1+r_D(\omega_j)]-d_j}{a_j}} (a_j r_j + d_j) w(r_j) \  dr_j+ L_j^{*}[1+r_D(\omega_j)]) \int_{\frac{L_j^*[1+r_D(\omega_j)]-d_j}{a_j}}^{+ \infty} w(r_j) \  dr_j 
\\& \ \ \ \ \ > [1+r_D(\omega_i)] \mathcal{L}_j^{(0)} \int_{- \infty}^{+ \infty}  w(r_j) \  dr_j.
\end{align*}
\end{proof}

\subsection{For Proposition \ref{Prop_PricingMarketValue}}

\begin{proof}
Recall that we consider the following dynamics for $Ax_i, i=1,2$:
$$ \log \left( \Frac{Ax_i^{(1)}}{Ax_i^{(0)}}\right) \sim \mathcal{N}(\mu_i, \sigma_i)  \ \ \mbox{ i.e. } Ax_i^{(1)} = Ax_i^{(0)} e^{\mu_i + \sigma_i U}, \ \mbox{where } U \sim \mathcal{N}(0, 1). $$
We have $ K_i^{(1)} = \max( \kappa_i Ax_i^{(1)} - L_i^{*(1)}, 0)$ and $ L_i^{(1)} = \min( \kappa_i Ax_i^{(1)}, L_i^{*(1)})$. \\
We define $\tilde{u}$ such that $ \kappa_i Ax_i^{(0)} e^{\mu_i + \sigma_i \tilde{u}} = L_i^{*(1)}$, i.e. $ \tilde{u} = \Frac{1}{\sigma_i}\left( \log \left( \Frac{L_i^{*(1)}}{\kappa_i Ax_i^{(0)}} \right) - \mu_i \right).$
Let us denote by $\phi$ the density of the standard Gaussian variable. We have
\begin{align*}
    \mathds{E}_0 \left(K_i^{(1)} \right) & = \mathds{E}_0 \left[ \max( \kappa_i Ax_i^{(1)} - L_i^{*(1)}, 0)  \right]  
    \\& = \Int_{\tilde{u}}^{+\infty} \left( \kappa_i Ax_i^{(0)} e^{\mu_i + \sigma_i u}-L_i^{*(1)} \right) \phi(u) \ du
    \\& = \kappa_i Ax_i^{(0)} e^{\mu_i} \Int_{\tilde{u}}^{+\infty} e^{\sigma_i u} \phi(u)\ du - L_i^{*(1)} \Int_{\tilde{u}}^{+\infty} \phi(u)\ du \\
            & = \kappa_i Ax_i^{(0)} e^{\mu_i} \Int_{\tilde{u}}^{+\infty} e^{\sigma_i u} \Frac{1}{\sqrt{2\pi}} e^{-\frac{1}{2}u^2} \ du  - L_i^{*(1)} \left[ 1 - \Phi(\tilde{u}) \right] \\
            & = \kappa_i Ax_i^{(0)} e^{\mu_i} \Int_{\tilde{u}}^{+\infty} e^{\frac{1}{2}\sigma_i^2 } \Frac{1}{\sqrt{2\pi}} e^{-\frac{1}{2}(u-\sigma_i)^2} du  - L_i^{*(1)} \left[ 1 - \Phi(\tilde{u}) \right] \\
            & = \kappa_i Ax_i^{(0)} e^{\mu_i + \frac{1}{2}\sigma_i^2 } \Int_{\tilde{u}-\sigma_i }^{+\infty} \phi(v)\ dv - L_i^{*(1)} \left[ 1 - \Phi(\tilde{u}) \right] \mbox{ (by the change of variable } v=u-\sigma_i)
            \\& = \kappa_i Ax_i^{(0)} e^{\mu_i  + \frac{1}{2}\sigma_i^2 } \left[ 1 - \Phi(\tilde{u}-\sigma_i) \right]-   L_i^{*(1)} \left[ 1 - \Phi(\tilde{u}) \right].
\end{align*}
In the same way,
\begin{align*}
    \mathds{E}_0\left(L_i^{(1)} \right) & = \mathds{E}_0\left[  \min( \kappa_i Ax_i^{(1)}, L_i^{*(1)})  \right]
    % \\& = \Int  \min( \kappa_i Ax^{(0)}e^{\mu + \sigma u} ; L^*)\   \varphi(u)\ du 
    \\
        & = \Int_{-\infty}^{\tilde{u}} \kappa_i Ax_i^{(0)}e^{\mu_i + \sigma_i u} \phi(u)\ du  + \Int_{\tilde{u}}^{+\infty} L_i^{*(1)} \phi(u)\ du \\
        & = \kappa_i Ax_i^{(0)} e^{\mu_i  + \frac{1}{2}\sigma_i^2 } \Int_{-\infty}^{\tilde{u}-\sigma_i}  \phi(v)\ dv + L_i^{*(1)}   \left[ 1 - \Phi(\tilde{u}) \right] 
        \\& \ \ \ \mbox{ (using the same trick than in the case of } K_i^{(1)})
        \\
        & = \kappa_i Ax_i^{(0)} e^{\mu_i  + \frac{1}{2}\sigma_i^2 } \Phi\left( \tilde{u}-\sigma_i  \right) + L_i^{*(1)}   \left[ 1 - \Phi(\tilde{u}) \right].
\end{align*}
\end{proof}

\section{Algorithm of network formation}
\label{AppendixAlgoNetworkFormation}

In the case of 2 institutions ($n=2$), the algorithm of network formation is the following:
\begin{enumerate}
\item \textbf{Optimization for Bank 1 without interconnections}. Indeed in this first step, $K_2^{(0)}$ and $L_2^{(0)}$ are not known. \\
We then have to optimize
$E \left \{ u_1 \left[ v \left( P_1^{(1)}(Ax_1^{(0)}, L_1^{(0)}) \right) \right] \right \},$
where
\begin{align*}
P_1^{(1)} &= Ax_1^{(0)} (1+r_1) - [1+r_{D,1}] L_1^{(0)}.
\end{align*}
This step provides $Ax_1^{(0)}$ and $L_1^{(0)}$.
\item \textbf{Optimization for Bank 2 with interconnections}. We have
\begin{align*}
& P_2^{(1)} =  Ax_2^{(0)} (1+r_2)
		+ \pi_{2,1} \max \left( \kappa_1 Ax_1^{(0)} (1+r_1) - L_1^{(0)} \  [1+r_{D,1}], 0 \right)
		\\& + \gamma_{2,1} \min \left( \kappa_1 Ax_1^{(0)} (1+r_1), L_1^{(0)} [1+r_{D,1}] \right)
- [1+r_{D,2}] L_2^{(0)},
\end{align*}
where
$ \kappa_1 = \dfrac{L_1^{(0)} + K_1^{(0)} }{ Ax_1^{(0)}}
%\\& = \frac{Ax_1^{(0)} + A\ell_1^{(0)} + \pi_{1,2} K_2^{(0)} + \gamma_{1,2} L_2^{(0)}}{Ax_1^{(0)} + A\ell_1^{(0)}}
$
is the scaling factor compensating the absence of interconnections (it keeps the balance sheet of Bank $1$ balanced).
Since $K_1^{(0)}$ has been obtained at step 1 under the assumption that Bank 1 is not interconnected, here $\kappa_1 =1$. But this will be corrected in further iterations.\\
This step gives $Ax_2^{(0)}, L_2^{(0)}, \pi_{2,1}$ and $\gamma_{2,1}$.
\item \textbf{Optimization for Bank 1 with interconnections}. We have
\begin{align*}
& P_1^{(1)} = Ax_1^{(0)} (1+r_1) 
		+ \pi_{1,2} \max \left( \kappa_2 Ax_2^{(0)}(1+r_2) - L_2^{(0)} [1+r_{D,2}], 0 \right)
		\\& + \gamma_{1,2} \ \min \left( \kappa_2 Ax_2^{(0)} (1+r_2), L_2^{(0)} [1+r_{D,2})] \right)
- [1+r_{D,1}] L_1^{(0)}
\end{align*}
where $
\kappa_2 = \dfrac{L_2^{(0)} + K_2^{(0)} }{ Ax_2^{(0)}}.
$
This step provides $Ax_1^{(0)}, L_1^{(0)}, \pi_{1,2}$ and $\gamma_{1,2}$.
\item \textbf{New optimization for Bank 2 with interconnections}. \\
Note that at this step, $\kappa_1 = \dfrac{L_1^{(0)} + K_1^{(0)} }{ Ax_1^{(0)}} > 1$, since at the previous step, the optimization has be done for Bank 1 being interconnected.
\item \textbf{New optimization for Bank 1 with interconnections}.
\end{enumerate}
Further iterations can be carried out if the variation in the estimates from one step to the next is higher than a predefined threshold.

\section{Calibration of external assets returns}
\label{AppendixDynamicCalibration}

Given the values of the mean net returns and the probability of default, let us derive the corresponding values of $\mu_i$ and $\sigma_i$, for $i = 1, 2$.
We denote by $GR_i$ and $NR_i$ the gross and the net return of Bank $i$, respectively. They satisfy the relationship $NR_i=GR_i-1$.
Thus, since the gross returns are log-normal, $$ \mathds{E}(NR_i) =\mathds{E}(GR_i)-1 =\exp \left( \mu_i+\frac{\sigma_i^2}{2} \right)-1 .$$
If we denote by $m_i$ the empirical mean of the net return, we then have
$$ m_i =\exp \left( \mu_i+\frac{\sigma_i^2}{2} \right)-1,$$
that gives
\Beq
\label{SystEq1}
\mu_i= \log(1+m_i)- \frac{\sigma_i^2}{2}.
\Eeq

We need a second equation to find $\mu_i$ and $\sigma_i$. We could use the expression
$$ \mbox{Var}(NR_i) =\mbox{Var}(GR_i) = (\exp(\sigma_i^2)-1) \ \exp(2\mu_i+\sigma_i^2) $$
i.e., by denoting $v_i$ the empirical variance of $RN_i$,
$$  v_i = (\exp(\sigma_i^2)-1) \ \exp(2\mu_i+\sigma_i^2).$$
However, it is difficult to find reliable values for $v_i$. If we consider banks' data, only one return is available per year and thus the estimation of the variance is inaccurate. Another possibility is to compute the variance of the net returns of an index like the CAC 40. However, such an index is not representative of the external assets of a financial institution since it only contains shares. Moreover, it does not take the hedging strategy of the institution into account.

Therefore, we choose to derive the needed equation from the probability of default. This quantity is indeed easier to obtain. Actually, the usual rating for large banks corresponds to a probability of default of about $0.1\%$. Considering an autarkic stylized bank with debt $L_i$ and a total asset $A_i$, whose gross returns are log-normal of parameter $(\mu_i,\sigma_i)$, the probability of default is
\begin{equation}
\label{Systeq2}
    PD = \Phi\left( \Frac{ \log\left( \Frac{L_i}{A_i} \right) - \mu_i}{\sigma_i} \right).
\end{equation}
Using in \eqref{Systeq2} the expression of $\mu_i$ in \eqref{SystEq1}, we obtain
$$
PD = \Phi\left( \Frac{ \log\left( \Frac{L_i}{A_i} \right) - \log(1+m_i) + \Frac{\sigma_i^2}{2} }{\sigma_i} \right) = \Phi\left( \Frac{ \log\left( \Frac{L_i}{A_i (1+m_i)} \right) + \Frac{\sigma_i^2}{2} }{\sigma_i} \right).
$$
If we denote by $p$ the empirical probability of default, the equation to solve is
\begin{align*}
& p = \Phi\left( \Frac{ \log\left( \Frac{L_i}{A_i (1+m_i)} \right) + \Frac{\sigma_i^2}{2} }{\sigma_i} \right) \Longleftrightarrow & \Frac{\sigma_i^2}{2} - \sigma_i \  \Phi^{-1} (p) + \log\left( \Frac{L_i}{A_i (1+m_i)} \right) =0.
\end{align*}
This is a quadratic equation with discriminant $\Delta= [\Phi^{-1}(p)]^2 - 2 \log\left( \Frac{L_i}{A_i (1+m_i)} \right)$.
With chosen values of $A_i$, $L_i$ and $m_i$, we know that $\Delta >0$ and thus $ \sigma_i=\Phi^{-1}(p)+\sqrt{\Delta}$, since the other solution is strictly negative and thus unsuitable for a volatility. Finally the implied volatility is written
\begin{equation}
    \sigma_i =\Phi^{-1}(p)+\sqrt{[\Phi^{-1}(p)]^2 - 2 \log\left( \Frac{L_i}{A_i (1+m_i)} \right)}.
\end{equation}
We then obtain $\mu_i$ using \eqref{SystEq1}. \\

\section{Algorithm of equilibrium computation}
\label{AppendixAlgoWelfareComputation}

The computation of the equilibrium involving $n$ financial institutions requires to solve up to $2^n$ linear systems with a brutal force approach (see \cite{gourieroux2012bilateral} for details), implying a total complexity in $O(n^3 \times 2^n)$. The cubic term stems from the resolution of a linear system that requires to invert a $n \times n$ matrix. Only a little gain can be obtained on this term. The exponential term comes from testing each possible situation: each institution is either solvent or in default.

\medskip

Instead, in order to deal with the exponential term, we adopt an heuristic algorithm. The key idea is to test the $2^n$ potential regimes in a "proper" order and to use the existence and uniqueness property to stop the algorithm as soon as one feasible solution is computed. 
Since interconnections are small, the way of sorting the regimes relies on the situation without interconnections.

To do so, let us define Regime $r$ by $\Mb{d}^r=(d_1^r,\dots,d^r_n)'$, where $d^r_i=-1$ if Institution $i$ is in default and $1$ otherwise (for $i=1,\dots,n$). We define a weight vector $\Mb{w}=(w_1,\dots,w_n)$, where $w_i = (Ax_i + A\ell_i - L_i^*) / L_i^*$ (for $i=1,\dots,n$). Note that $\Mb{w}$ depends on some known inputs only and can therefore be easily computed. When $w_i$ is positive, the external assets of financial institution $i$ are higher than its nominal debt. Therefore, whatever the situations of other financial institutions, Institution $i$ is always solvent at the equilibrium. On the contrary, when $w_i$ is negative, the financial institution needs a sufficient amount of inter-financial assets to be solvent. In that case, since interconnections are assumed to be small, the (absolute) value of $w_i$ indicates the likelihood (in a \emph{non-statistical} sense) of default of Institution $i$. One can associate to Regime $r$ a score, given by $\Mb{w}.\Mb{d}^r$, which
%approximately corresponds to the sum of the $w_i$ and 
measures the likelihood of Regime $r$. 
%For an observed $\Mb{w}$, we look for the most likely regimes. 
For instance, if $\Mb{w}$ contains many negative values, we might think that the equilibrium lies in a regime with a lot of institutions in default. Thus, a regime with many values of $d^r_i$ equal to $-1$ will be likely and will be associated to a high score.

 %The idea is then to construct the list of all potential regimes.
%We now explain how to define the regimes having the highest scores.
Actually, the regime with the highest score can easily be derived from $\Mb{w}$.  This regime, labeled $\underline{r}$, is defined by $d_i^{\underline{r}} = \Mb{I}_{ \{ w_i > 0 \} } - \Mb{I}_{ \{ w_i \leq 0 \} }$, for $i=1,\dots,n$. 
If $w_i \leq 0$, it is likely that Institution $i$ is in default and thus we set $d_i^{\underline{r}} =-1$. The contrary is true when $w_i > 0$. 
We test this most likely regime. If it corresponds to the solution, we have finished. If not, 
%one can consider deviations from regime $\underline{r}$
one can switch the components of $\Mb{d}^{\underline{r}}$  one by one to get new regimes with high scores. It is important to keep in mind that assuming the default of an institution with positive weight is dead-end. While no solution was found, this mechanism of building new regimes 
%from the one having the highest score 
can be carried on until having sorted all the potential regimes apart from the ones for which there exists $i$ such that $w_i>0$ and $d_i=-1$. 

The complexity (in the worst case) of this algorithm is in $O(n^3 \times 2^{n-p})$, where $p=\# \{ i : w_i > 0\}$, with $\#$ standing for the cardinal. Thus, we still have an exponential term. However, the expectation of the number of regimes to be tested before finding the solution is much lower than in the case of the brutal force approach. The algorithm performs well in practice. For example, with $10$ financial institutions having log-normal returns and random interconnections, the equilibrium lies in the $10$ first tested regimes in almost all cases. \\

NB: If one remains concerned by exploring all the regimes (implying keeping the exponential term in the complexity), one solution is to stop the search after an arbitrary number of regimes (for instance $n$). When the exploration approach is stopped, a pure numerical approach can be carried out, in order to solve the system 
$(\Mb{K}, \Mb{L})^{'}=q[(\Mb{K}, \Mb{L})^{'}]$, where the function $q$ is defined using Equations \ref{K} and \ref{Lx}. For instance, routines to minimize or to  find the zeros of $k[(\Mb{K}, \Mb{L})^{'}]=q[(\Mb{K}, \Mb{L})^{'}]-(\Mb{K}, \Mb{L})^{'}$ can be used.

\newpage
\bibliographystyle{apalike}
\bibliography{Biblio_EndogenousNetwork}
\end{document}